\providecommand{\tabularnewline}{\\}
\newcounter{Acnt}
\newcounter{Acntp}
\newtheorem{Ap-item}[Acntp]{ }
\newcounter{Acnta}
\newtheorem{Aa-item}[Acnta]{ }
\newcounter{Acntb}
\newtheorem{Ab-item}[Acntb]{ }
\newtheorem{assumption}[Acnt]{Assumption}
\newtheorem{theorem}{Theorem}[section]
\newtheorem{proposition}{Proposition}[section]
\newtheorem{corollary}[theorem]{Corollary}
\newtheorem{lemma}{Lemma}[section]
\numberwithin{Acnt}{section}
\begin{document}
\title{Partial Identification and Inference for \\
Conditional Distributions of Treatment Effects\thanks{This is a revised version of Chapter 3 of author's dissertation. I
am very grateful to Jason Abrevaya, Stephen Donald, Sukjin Han, Yu-Chin
Hsu, and participants in 2022 IAAE Annual Conference, 2022 KES Annual
Conference, and the seminar at Sungkyunkwan University for their valuable
comments and suggestions. Jiyong Jung provided excellent research
assistance. This paper was supported by the New Faculty Research Funding
provided by Sogang University. All errors are mine.}}
\author{Sungwon Lee\thanks{35 Baekbeom-ro Mapo-gu, Sogang University GN 710, Seoul 04107, South
Korea. }\\
Department of Economics\\
Sogang University\\
\href{mailto:sungwonlee@sogang.ac.kr}{sungwonlee@sogang.ac.kr}\\
}
\date{This draft: \today }
\maketitle
\begin{abstract}
This paper considers identification and inference for the distribution
of treatment effects conditional on observable covariates. Since the
conditional distribution of treatment effects is not point identified
without strong assumptions, we obtain bounds on the conditional distribution
of treatment effects by using the Makarov bounds. We also consider
the case where the treatment is endogenous and propose two stochastic
dominance assumptions to tighten the bounds. We develop a nonparametric
framework to estimate the bounds and establish the asymptotic theory
that is uniformly valid over the support of treatment effects. An
empirical example illustrates the usefulness of the methods. \textit{\emph{ }}\\
\textit{\emph{}}\\
\emph{Keywords:} treatment effects, conditional distribution, heterogeneity,
partial identification, uniform inference. \\
\emph{JEL Classification Numbers:} C14, C21. 
\end{abstract}
\newpage{}

\section{\label{sec:Introduction} Introduction }

This paper considers identification and estimation of the conditional
distribution of treatment effects.\footnote{The distribution of treatment effects has received significant attention,
and its importance becomes greater when the benefit from the treatment
is non-transferrable (\citet{heckman2007econometric}).} While the unconditional distribution of treatment effects has been
considered extensively in the literature in both theoretical and applied
econometrics (e.g., \citet{heckman1997making,fan2010sharp,fan2012confidence,fan2017partial,firpo2019partial,frandsen2021partial}),
we focus on the conditional distribution of treatment effects to take
into account the heterogeneity caused by differences in the value
of covariates.\footnote{For example, the effect of a mother's smoking behavior on her child's
birth weight might differ across the mother's age (e.g., \citet{abrevaya2015estimating}).} Such heterogeneous treatment effects, if they exist, may help policy
makers develop more effective policies. We use the \emph{Makarov bounds},
which depend on the marginal distributions of potential outcomes while
the dependence structure between them is left unspecified, to partially
identify the conditional distribution of treatment effects (e.g.,
\citet{makarov1982estimates}, \citet{ruschendorf1982random}, and
\citet{williamson1990probabilistic}). We then provide estimation
and inference methods for the bounds. 

We start with the case where the assumption of the conditional independence
of the treatment and potential outcomes given covariates, which is
called an unconfoundedness condition, is satisfied. However, there
are many situations where such a conditional independence assumption
fails to hold. We present identification results of the distribution
of treatment effects when the treatment is endogenous \textit{without}
assuming the existence of an instrumental variable. We show that one
can still obtain Makarov-type bounds on the distribution of treatment
effects even with an endogenous treatment. When a treatment is endogenous,
the bounds on the distribution of the treatment effect may be too
wide to be informative.\footnote{This is also true for the case where we impose the unconfoundedness
assumption. We can expect that, since we are interested in the distribution
of treatment effects for some subgroup, the bounds on the conditional
distribution may be narrower than those on the unconditional distribution.
This is another motivation for considering the conditional distributions
of treatment effects in this paper.} Motivated by the previous studies in the literature (e.g., \citet{manski1997monotone,manski2000monotone,blundell2007changes}),
we consider two kinds of stochastic dominance between potential outcomes
in this paper to tighten the bounds. These stochastic dominance assumptions
are useful in many empirical situations as they are consistent with
many economic theories. The resulting bounds under the stochastic
dominance assumptions are easy to compute. 

We construct nonparametric kernel-type estimators of the bounds on
the conditional distribution of treatment effects and establish asymptotic
theory for them that is uniformly valid over the support of treatment
effects for some fixed subgroup of the population. They are useful
for comparing bounds between two subpopulations defined in terms of
different values of observable characteristics.\footnote{One can perform statistical testing for global hypotheses, such as
equality of or stochastic dominance between two bounds (e.g., \citet{barrett2003consistent,lee2009testing,seo2018tests}),
and this requires asymptotic theory uniformly valid over the support
of treatment effects. } We adapt the results on uniform inference for value functions that
were developed recently by \citet{firpo2021uniform} and provide a
set of conditions under which the estimated bounds are consistent
for the true bounds. We propose a bootstrap procedure to mimic the
asymptotic distributions of the estimated bounds and show its validity.
The bootstrap scheme in this paper is based on the novel bootstrap
procedure for Hadamard directionally differentiable functionals that
was developed by \citet{fang2019inference}.

The asymptotic theory developed in this paper relaxes some undesirable
assumptions imposed in several existing studies in the literature. Specifically,
the bounds on the distribution of treatment effects are defined as
some functionals of marginal distributions of potential outcomes.
These functionals involve the infimum and supremum over the supports
of potential outcomes. The uniqueness of the infimum and supremum
needs to be imposed to establish a pointwise asymptotic theory for
those bounds, as in \citet{fan2010sharp,fan2012confidence}.\footnote{\citet{fan2010sharp} impose Assumptions 3 and 4 to guarantee the
uniqueness of the maximizer and minimizer involved in the bounds on
the distribution of treatment effects. \citet{fan2012confidence}
impose Assumptions A3 and A4 for the same purpose. } However, the uniqueness assumption may not hold for some data generating
process (DGP), and the pointwise asymptotic results may not be valid
in such cases (cf. \citet{milgrom2002envelope}, \citet{firpo2021uniform}).
On the other hand, the inference results developed in this paper do
not require such uniqueness assumptions. 

One can use the novel approach developed by \citet{chernozhukov2013intersection}
that yields confidence bands uniformly valid over the joint support
of an outcome variable and covariates.\footnote{\citet{chernozhukov2013intersection} do not explicitly provide an
inference result for nonparametric conditional distribution estimators
that is uniformly valid over the joint support of an outcome variable
and covariates. However, one can easily adapt their results in the
literature to establish the strong Gaussian approximation of some
nonparametric estimator of a conditional distribution function. This
allows one to perform inference uniformly over the joint support of
the outcome variable of interest and conditioning variables (e.g.,
\citet{chernozhukov2014gaussian}).} Their inference methods for bounds defined by either supremum or
infimum of some function are based on the strong Gaussian approximation
using couplings. A key ingredient of their approach is the construction
of argsup and/or arginf sets, and it is allowed to avoid imposing
the uniqueness assumption on those sets. Our approach differs from
theirs in that we rely on weak convergence of nonparametric estimators
of the bounds to some fixed Gaussian process and the results on Hadamard
directionally differentiable functionals. In addition, while our approach
also requires to estimate argsup and arginf sets for the Hadamard
directional derivatives, the construction of these sets is computationally
easy in comparison to that of \citet{chernozhukov2013intersection}.
Therefore, the inference theory in this paper can be considered complementing
the existing inference methods for the Makarov bounds.

The Monte Carlo simulation results show that the inference methods
in this paper perform well in finite samples. We also provide an empirical
example to illustrate the practical relevance of the methods proposed
in this paper. We revisit the empirical question of the effect of
401(k) plans on net asset accumulations investigated by multiple papers
in the literature (e.g., \citet{Abadie2003,Chernozhukov2004,Wuethrich2019,SantAnna2022}).
We confirm that the identifying power of the stochastic dominance
assumptions is substantial from the empirical application. Furthermore,
we find evidence on heterogeneity in the treatment effect that is
consistent with the results in the literature. 

This paper contributes to several strands of the literature on treatment
effects. First, this paper mainly contributes to the literature on
identification and estimation of the distribution of treatment effects
by providing nonparametric estimation and inference methods for the
conditional distribution of treatment effects. The literature is too
vast to list all the papers relevant to this point. Identification
of the unconditional distribution of treatment effects has been studied
by, for example, \citet{heckman1997making}, \citet{fan2010sharp,fan2012confidence},
\citet{fan2017partial}, \citet{vuong2017counterfactual}, \citet{kim2018identifying},
and \citet{firpo2019partial}. 

This paper is closely related to \citet{kim2018partial} and \citet{frandsen2021partial}.
\citet{kim2018partial} provides identification results under several
distributional restrictions, together with an instrumental variable,
that help tighten the bounds on the distribution of treatment effects,
when the treatment is endogenous. The restrictions considered by \citet{kim2018partial}
are general and closely related to the stochastic dominance assumptions
in this paper that were proposed independently of the results in \citet{kim2018partial}.
This paper differs from \citet{kim2018partial} in that this paper
focuses on estimation and inference for the conditional distribution
of treatment effects with a motivation for treatment effect heterogeneity,
whereas \citet{kim2018partial} mainly considers identification of
the unconditional distribution of treatment effects under some restrictions
on the model. Moreover, it is much easier to compute and estimate
the bounds proposed in this paper than those in \citet{kim2018partial}.
Therefore, the bounds in this paper have great applicability to empirical
research. 

In comparison with \citet{frandsen2021partial}, we focus on estimation
and inference for the bounds on the conditional distribution of treatment
effects with a motivation for heterogeneity across subgroups. On the
other hand, \citet{frandsen2021partial} are mainly concerned with
identification of the unconditional distribution of treatment effects
under a novel restriction that is called the ``stochastic increasingness''
assumption. \citet{frandsen2021partial} also discuss how to incorporate
covariates and estimate the bounds on the unconditional distribution
of treatment effects. However, they do not develop the asymptotic
theory for the conditional distribution of treatment effects, especially
when the bounds on the conditional distribution are the Makarov bounds. 

This paper also contributes to the literature on heterogeneity in
treatment effects across subpopulations by considering the conditional
distribution of treatment effects (e.g., \citet{donald2012incorporating,abrevaya2015estimating,chang2015nonparametric,hsu2017consistent,shen2019estimation}).
Most existing studies focus on average/quantile treatment effects
or the marginal distributions of potential outcomes. The results of
this paper complement them by considering conditional distributions
of treatment effects. 

The rest of this paper is organized as follows. Section \ref{sec:ID}
presents the model and identification results. Section \ref{sec:Estimation}
provides nonparametric estimators of the Makarov bounds on the conditional
distribution of treatment effects and the bootstrap procedure. Section
\ref{sec:Inference} develops the asymptotic theory for the estimated
bounds on the conditional distribution of treatment effects. Section
\ref{sec:MC} presents the Monte Carlo simulation study, and Section
\ref{sec:Empirical} provides the empirical example considering the
effect of 401(k) plan on net financial asset accumulations. We then
conclude with Section \ref{sec:Conclusion}. All mathematical proofs,
technical expressions, and additional results are presented in Appendix. 

Before proceeding, we introduce some notation that will be used throughout
this paper. Two random variables $A$ and $B$ that are independent
of each other are denoted by $A\perp B$, and $\mathbb{E}[\cdot]$
is the expectation operator. For a matrix $A$, $A^{t}$ is the transpose
of $A$. For a set $A$, $l^{\infty}(A)$ denotes the set of uniformly
bounded functions on $A$. For a sequence of random variables $(Z_{n})$
and a random variable $Z$, we denote the weak convergence of $Z_{n}$
to $Z$ by $Z_{n}\Rightarrow Z$.\footnote{A formal definition of weak convergence can be found in, for example,
\citet[pp.107-108]{kosorok2008}.} For a set $A$, $int(A)$ is the interior of $A$.

\section{\label{sec:ID} Model and Identification}

\subsection{Identification Under the Unconfoundedness Condition}

Let $(\Omega,\mathcal{S},\mathbf{P})$ be a probability space. Let
$D$ be a binary variable that indicates whether a person gets the
treatment or not, in other words, $D=1$ if the person gets the treatment,
and $D=0$ if the person does not get the treatment. For each $d\in\{0,1\}$,
let $Y_{d}$ denote the potential outcome when $D=d$, and the observed
outcome is defined as $Y\equiv DY_{1}+(1-D)Y_{0}$. We assume that
$Y_{d}$ is a continuous random variable for each $d\in\{0,1\}$.
Let $X\in\mathbb{R}^{d_{x}}$ be a set of covariates and denote its
support by $\mathcal{X}$. We can only observe $(Y,D,X^{t})^{t}$
from the data. We denote the conditional distribution of $Y_{d}$
on $X=x$ by $F_{d|X}(\cdot|x)$ for each $d\in\{0,1\}$, and let
$p_{0}(x)\equiv\Pr(D=1|X=x)$ for a given $x\in\mathcal{X}$. $F_{X}(\cdot)$
denotes the distribution function of $X$. 

We begin with the models under the conditional independence of the
treatment variable and impose the following assumptions. 

\begin{assumption}\label{assu:unconfounded} $(Y_{1},Y_{0})\perp D|X$.
\end{assumption}

\begin{assumption}\label{assu:bound_propensity} There exist $\text{\ensuremath{\underline{p},\bar{p}\in(0,1)} such that \ensuremath{p_{0}(x)\in[\underline{p},\bar{p}]} uniformly in \ensuremath{x\in\mathcal{X}}}$.
\end{assumption} 

Assumption \ref{assu:unconfounded} is the unconfoundedness assumption,
which means that the treatment $D$ is independent of the potential
outcomes conditional on $X$. Assumption \ref{assu:bound_propensity}
is an overlap condition that is commonly imposed in the relevant literature
(e.g., \citet{imbens2004nonparametric}). This assumption implies
that we can observe individuals with $Y=Y_{1}$ and those with $Y=Y_{0}$
for any value of $x\in\mathcal{X}$. Under these assumptions, we have
the following identification result: 

\begin{lemma}\label{lem:id_unconfounded} Let $x\in\mathcal{X}$
be given. Suppose that Assumptions \ref{assu:unconfounded} and \ref{assu:bound_propensity}
are satisfied. For any measurable function $G$ such that $\mathbb{E}[|G(Y_{1})||X=x]<\infty$
and $\mathbb{E}[|G(Y_{0})||X=x]<\infty$, we have 
\begin{equation}
\begin{aligned}\mathbb{E}[G(Y_{1})|X=x] & =\frac{\mathbb{E}[D\cdot G(Y)|X=x]}{\mathbb{E}[D|X=x]},\\
\mathbb{E}[G(Y_{0})|X=x] & =\frac{\mathbb{E}[(1-D)\cdot G(Y)|X=x]}{\mathbb{E}[(1-D)|X=x]}.
\end{aligned}
\label{eq:ID_Marginal_Unconfounded}
\end{equation}
 \end{lemma}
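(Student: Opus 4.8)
The plan is to exploit the switching-regression definition $Y\equiv DY_{1}+(1-D)Y_{0}$ together with the fact that $D$ is binary to re-express the observed quantities in terms of potential outcomes, and then invoke Assumption \ref{assu:unconfounded} to factor the resulting conditional expectations. First I would record the pointwise identities $D\cdot G(Y)=D\cdot G(Y_{1})$ and $(1-D)\cdot G(Y)=(1-D)\cdot G(Y_{0})$, which hold $\mathbf{P}$-almost surely: on the event $\{D=1\}$ one has $Y=Y_{1}$ and $1-D=0$, while on $\{D=0\}$ the first identity reads $0=0$ and $Y=Y_{0}$. The integrability hypotheses $\mathbb{E}[|G(Y_{1})||X=x]<\infty$ and $\mathbb{E}[|G(Y_{0})||X=x]<\infty$ (and $|D|\le 1$) ensure that all the conditional expectations appearing below are well defined, so taking conditional expectations given $X=x$ yields $\mathbb{E}[D\cdot G(Y)|X=x]=\mathbb{E}[D\cdot G(Y_{1})|X=x]$ and $\mathbb{E}[(1-D)\cdot G(Y)|X=x]=\mathbb{E}[(1-D)\cdot G(Y_{0})|X=x]$.

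Next I would use Assumption \ref{assu:unconfounded}. Since $(Y_{1},Y_{0})\perp D\mid X$, the transformed variable $G(Y_{1})$ is conditionally independent of $D$ given $X$ (conditional independence is preserved under measurable transformations of one of the blocks), and hence the product rule for conditional expectations gives $\mathbb{E}[D\cdot G(Y_{1})|X=x]=\mathbb{E}[D|X=x]\cdot\mathbb{E}[G(Y_{1})|X=x]=p_{0}(x)\,\mathbb{E}[G(Y_{1})|X=x]$. Symmetrically, $\mathbb{E}[(1-D)\cdot G(Y_{0})|X=x]=\mathbb{E}[(1-D)|X=x]\cdot\mathbb{E}[G(Y_{0})|X=x]=(1-p_{0}(x))\,\mathbb{E}[G(Y_{0})|X=x]$.

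Finally, Assumption \ref{assu:bound_propensity} guarantees $p_{0}(x)\in[\underline{p},\bar{p}]\subset(0,1)$, so $\mathbb{E}[D|X=x]=p_{0}(x)>0$ and $\mathbb{E}[(1-D)|X=x]=1-p_{0}(x)>0$; dividing the two displayed equalities by these quantities produces the formulas in \eqref{eq:ID_Marginal_Unconfounded}. The only step that warrants mild care is the claim that $(Y_{1},Y_{0})\perp D\mid X$ implies $G(Y_{1})\perp D\mid X$ and that the factorization $\mathbb{E}[D\,G(Y_{1})\mid X]=\mathbb{E}[D\mid X]\,\mathbb{E}[G(Y_{1})\mid X]$ is valid under the stated integrability; both follow from standard properties of conditional independence and conditional expectations (e.g., \citet{kosorok2008}), so I do not anticipate a genuine obstacle. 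The argument is essentially a bookkeeping exercise once the pointwise identities are in place.
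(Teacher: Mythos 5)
Your proposal is correct and follows essentially the same route as the paper: both arguments reduce $D\cdot G(Y)$ to $D\cdot G(Y_{1})$ via the switching definition of $Y$, use the conditional independence in Assumption \ref{assu:unconfounded} to factor out $\mathbb{E}[D|X=x]$, and divide using the positivity guaranteed by Assumption \ref{assu:bound_propensity}. The paper phrases the middle step by first conditioning on the event $\{D=1\}$ rather than invoking the product factorization directly, but this is a cosmetic difference.
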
 

Lemma \ref{lem:id_unconfounded} implies that we can identify the
conditional distributions of $Y_{1}$ and $Y_{0}$ on $X=x$ when
we choose $G(Y)\equiv\mathbf{1}(Y\leq y)$ for a given $y\in\mathbb{R}$.
This result is also closely related to the identification of the unconditional
distributions of $Y_{1}$ and $Y_{0}$ that is considered by \citet{donald2014estimation}. 

The treatment effect $\Delta$ is defined as the difference between
$Y_{1}$ and $Y_{0}$ (i.e., $\Delta\equiv Y_{1}-Y_{0}$). The parameter
of interest is the conditional distribution of treatment effects given
$X=x$ for a given value $x\in\mathcal{X}$, and this conditional
distribution is denoted by $F_{\Delta|X}(\cdot|x)$. 

The (unconditional) distribution of treatment effects has received
a considerable amount of attention from the literature (e.g., \citet{heckman1997making},
\citet{fan2010sharp,fan2012confidence}, \citet{fan2017partial},
\citet{firpo2019partial}). The distribution function is useful in
the context of program evaluation, for example, to determine the proportion
of people who benefit from being treated, namely, $\Pr(\Delta\geq0)=1-F_{\Delta}(0)$.
One can refer to \citet{abbring2007econometric} for more discussion
on the topic. The conditional distributions takes potential heterogeneity
across subpopulations into account and thus can provide much richer
information on treatment effects. Such heterogeneity, if it exists,
may deliver different policy implications for different subpopulations. 

It is worth noting that even in a randomized experiment where one
can point identify the conditional distributions of $Y_{1}$ and $Y_{0}$
given $X$, the distribution of treatment effects is not point identified
without additional structures of the model.\footnote{When the conditional distributions of $Y_{1}$ and $Y_{0}$ given
$X$ are point identified, a sufficient condition for point identification
of the joint distribution of $Y_{1}$ and $Y_{0}$ is the (conditional)
rank invariance; that is, $F_{1|X}(Y_{1}|X)=F_{0|X}(Y_{0}|X)$ almost
surely. } In particular, \citet{fan2010sharp} provide sharp bounds on the
distribution of treatment effects in randomized experiments that are
based on \citet{makarov1982estimates}, and \citet{firpo2019partial}
improve the bounds of \citet{fan2010sharp} in the uniform sense.
We do not focus on the uniform sharpness of the bounds but on pointwise
sharp bounds of the conditional distribution of treatment effects
and nonparametric estimation of the bounds. We recall the bounds on
the conditional distribution of treatment effects $F_{\Delta|X}(\cdot|x)$: 

\begin{proposition}\label{prop:ID_dist_TE_Unconfounded} (Lemma 2.1
in \citet{fan2010sharp}) Let $x\in\mathcal{X}$ be fixed. Then, for
a given $\delta\in\mathbb{R}$, define 
\begin{eqnarray}
F_{\Delta|X}^{L}(\delta|x) & = & \max\left(\sup_{y\in\mathbb{R}}\left\{ F_{1|X}(y|x)-F_{0|X}(y-\delta|x)\right\} ,0\right),\label{eq:ID_F_Delta_LB}\\
F_{\Delta|X}^{U}(\delta|x) & = & \min\left(\inf_{y\in\mathbb{R}}\left\{ F_{1|X}(y|x)-F_{0|X}(y-\delta|x)\right\} ,0\right)+1.\label{eq:ID_F_Delta_UB}
\end{eqnarray}
 If Assumptions \ref{assu:unconfounded} and \ref{assu:bound_propensity}
are satisfied, we then have 
\begin{equation}
F_{\Delta|X}(\delta|x)\in\left[F_{\Delta|X}^{L}(\delta|x),F_{\Delta|X}^{U}(\delta|x)\right].\label{eq:ID_F_Delta}
\end{equation}
 \end{proposition}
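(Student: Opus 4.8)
The plan is to reduce the statement to the classical Makarov bounds on the distribution of a difference of two random variables with fixed marginals. Fix $x\in\mathcal{X}$ and condition on $X=x$ throughout; abbreviate $F_1(\cdot)\equiv F_{1|X}(\cdot|x)$ and $F_0(\cdot)\equiv F_{0|X}(\cdot|x)$. First I would invoke Lemma~\ref{lem:id_unconfounded} with $G(\cdot)=\mathbf{1}(\cdot\le t)$: under Assumptions~\ref{assu:unconfounded} and \ref{assu:bound_propensity} the conditional marginals $F_1$ and $F_0$ are point identified from the distribution of $(Y,D,X^{t})^{t}$. Hence the only unknown object entering $F_{\Delta|X}(\delta|x)=\Pr(Y_1-Y_0\le\delta\mid X=x)$ is the conditional copula of $(Y_1,Y_0)$ given $X=x$, and the task becomes to bound $\Pr(Y_1-Y_0\le\delta\mid X=x)$ over all joint laws compatible with the marginals $F_1,F_0$.

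For the lower bound I would use the set inclusion
\[
\{Y_1\le y\}\cap\{Y_0\ge y-\delta\}\subseteq\{Y_1-Y_0\le\delta\},
\]
valid for every $y\in\mathbb{R}$, together with the Fr\'echet--Hoeffding (Bonferroni) inequality $\Pr(A\cap B)\ge\Pr(A)+\Pr(B)-1$. This gives $F_{\Delta|X}(\delta|x)\ge F_1(y)+\bigl(1-F_0((y-\delta)^-)\bigr)-1=F_1(y)-F_0(y-\delta)$, where continuity of $Y_0$ (so $F_0$ has no atoms) lets me replace the left limit $F_0((y-\delta)^-)$ by $F_0(y-\delta)$. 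Taking the supremum over $y$ and using that a probability is nonnegative yields $F_{\Delta|X}(\delta|x)\ge F_{\Delta|X}^{L}(\delta|x)$. For the upper bound I would run the symmetric argument on the complementary event: $\{Y_1>y\}\cap\{Y_0\le y-\delta\}\subseteq\{Y_1-Y_0>\delta\}$ gives $\Pr(Y_1-Y_0>\delta\mid X=x)\ge F_0(y-\delta)-F_1(y)$, hence $F_{\Delta|X}(\delta|x)\le 1+F_1(y)-F_0(y-\delta)$ for every $y$; taking the infimum over $y$ and intersecting with the trivial bound $1$ gives $F_{\Delta|X}(\delta|x)\le F_{\Delta|X}^{U}(\delta|x)$.

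The two displays above already establish the asserted containment, which is all the formal statement requires, and the argument is elementary once the identification reduction is in place. The genuinely nontrivial content — which I would cite from \citet{makarov1982estimates} and \citet{ruschendorf1982random} (or, following the attribution in the proposition, from Lemma~2.1 of \citet{fan2010sharp}) rather than redo in full — is \emph{sharpness}: for each fixed $\delta$ one must exhibit a conditional joint law of $(Y_1,Y_0)$ given $X=x$ with marginals $F_1,F_0$ under which $\Pr(Y_1-Y_0\le\delta\mid X=x)$ equals $F_{\Delta|X}^{L}(\delta|x)$, and another attaining $F_{\Delta|X}^{U}(\delta|x)$. Constructing such couplings explicitly and verifying that they are valid probability laws with the prescribed marginals (again using continuity of $Y_1$ and $Y_0$ to make the construction well defined) is the main obstacle; the elementary Fr\'echet--Hoeffding steps above only give validity, not optimality. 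My recommended route is therefore to present the identification reduction via Lemma~\ref{lem:id_unconfounded} and the inequality argument in detail, and to invoke the cited Makarov-type result for the matching sharpness construction.
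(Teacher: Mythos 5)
Your proposal is correct. The paper offers no proof of this proposition at all --- it is stated as a citation of Lemma~2.1 in \citet{fan2010sharp} --- and your argument is the standard derivation of the Makarov/Fr\'echet--Hoeffding bounds that underlies that cited result: the identification of the conditional marginals via Lemma~\ref{lem:id_unconfounded}, the set inclusions $\{Y_1\le y\}\cap\{Y_0\ge y-\delta\}\subseteq\{Y_1-Y_0\le\delta\}$ and its complement, and the Bonferroni inequality together give exactly the asserted containment. You are also right that the displayed statement only claims validity of the bounds, so deferring the sharpness (coupling) construction to \citet{makarov1982estimates} and \citet{fan2010sharp} is appropriate; nothing is missing for the claim as written.
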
 

Note that this result is essentially identical to the identification
result in \citet{fan2010sharp}. Without additional structures on
the model, these bounds are sharp in the pointwise sense for given
$x\in\mathcal{X}$. It is also worth noting that one can derive bounds
on the unconditional distribution of treatment effects by averaging
the bounds on the conditional distribution. Specifically, if the potential
outcomes are correlated with $X$, then the bounds on the unconditional
distribution obtained from the conditional distributions of the potential
outcomes are tighter than the those obtained from the unconditional
distributions of the potential outcomes (cf. \citet{firpo2019partial}). 

\subsection{Identification with an Endogenous Treatment}

In this section, we discuss identification of the distribution of
treatment effects with an endogenous binary treatment. When the treatment
is endogenously determined, the conditional distributions of $Y_{1}$
and $Y_{0}$ given $X$ are generally partially identified without
additional assumptions. The following theorem shows that even if the
conditional distributions of $Y_{1}$ and $Y_{0}$ given $X$ are
partially identified, we can still bound $F_{\Delta|X}(\delta)$. 

\begin{theorem}\label{thm:ID_Endo_FH_Bound} Let $x\in\mathcal{X}$
be fixed. Suppose that, for all $y\in\mathbb{R}$, the identified
sets of $F_{1|X}(y|x)$ and $F_{0|X}(y|x)$ are given by $\left[F_{1|X}^{L}(y|x),F_{1|X}^{U}(y|x)\right]$,
and $\left[F_{0|X}^{L}(y|x),F_{0|X}^{U}(y|x)\right]$, respectively. For a given $\delta\in Supp(\Delta|X=x)$, define 
\begin{eqnarray}
F_{\Delta|X}^{e,L}(\delta|x) & = & \max\left(\sup_{y}\left\{ F_{1|X}^{L}(y|x)-F_{0|X}^{U}(y-\delta|x)\right\} ,0\right),\label{eq:trt1}\\
F_{\Delta|X}^{e,U}(\delta|x) & = & \min\left(\inf_{y}\left\{ F_{1|X}^{U}(y|x)-F_{0|X}^{L}(y-\delta|x)\right\} ,0\right)+1.\label{eq:trt2}
\end{eqnarray}
 Then, 
\[
F_{\Delta|X}^{e,L}(\delta|x)\leq F_{\Delta|X}(\delta|x)\leq F_{\Delta|X}^{e,U}(\delta|x).
\]
 \end{theorem}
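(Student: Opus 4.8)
The plan is to derive these inequalities from the classical Makarov (Williamson--Downs) inequality applied to the \emph{true} conditional marginals of the potential outcomes, combined with an elementary monotonicity argument. The key observation is that the Makarov inequality holds for an arbitrary joint law of $(Y_{1},Y_{0})$; applying it conditionally on $X=x$ and writing $\Delta=Y_{1}-Y_{0}$, it gives, for the (unknown) true conditional distribution functions $F_{1|X}(\cdot|x)$ and $F_{0|X}(\cdot|x)$,
\[
\max\left(\sup_{y}\left\{ F_{1|X}(y|x)-F_{0|X}(y-\delta|x)\right\} ,0\right)\le F_{\Delta|X}(\delta|x)\le\min\left(\inf_{y}\left\{ F_{1|X}(y|x)-F_{0|X}(y-\delta|x)\right\} ,0\right)+1 .
\]
This is exactly Lemma 2.1 of \citet{fan2010sharp}, restated conditionally on $X=x$ (see Proposition \ref{prop:ID_dist_TE_Unconfounded}); note that the unconfoundedness condition plays no role in this inequality --- it was used in Proposition \ref{prop:ID_dist_TE_Unconfounded} only to identify the marginals from the observed data, which we no longer presume here.

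Next I would run a monotonicity argument on the two Makarov bound maps. Fix $\delta\in Supp(\Delta|X=x)$ and $x$. By hypothesis the true marginals satisfy $F_{1|X}^{L}(y|x)\le F_{1|X}(y|x)\le F_{1|X}^{U}(y|x)$ and $F_{0|X}^{L}(y|x)\le F_{0|X}(y|x)\le F_{0|X}^{U}(y|x)$ for every $y$, hence the pointwise chain
\[
F_{1|X}^{L}(y|x)-F_{0|X}^{U}(y-\delta|x)\ \le\ F_{1|X}(y|x)-F_{0|X}(y-\delta|x)\ \le\ F_{1|X}^{U}(y|x)-F_{0|X}^{L}(y-\delta|x)\qquad\text{for all }y .
\]
Applying $\sup_{y}$ to the left-hand inequality and $\inf_{y}$ to the right-hand inequality (both order-preserving), and then composing with $t\mapsto\max(t,0)$ on the first and with $t\mapsto\min(t,0)+1$ on the second (both nondecreasing), yields
\[
F_{\Delta|X}^{e,L}(\delta|x)\ \le\ \max\left(\sup_{y}\left\{ F_{1|X}(y|x)-F_{0|X}(y-\delta|x)\right\} ,0\right),\qquad \min\left(\inf_{y}\left\{ F_{1|X}(y|x)-F_{0|X}(y-\delta|x)\right\} ,0\right)+1\ \le\ F_{\Delta|X}^{e,U}(\delta|x).
\]
Chaining these two displays with the Makarov inequality of the first paragraph gives $F_{\Delta|X}^{e,L}(\delta|x)\le F_{\Delta|X}(\delta|x)\le F_{\Delta|X}^{e,U}(\delta|x)$, which is the assertion.

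There is no substantive obstacle in this argument; the only two points deserving a line of care are (i) invoking the Makarov inequality for the true marginals under an \emph{unspecified} dependence structure, so that endogeneity of $D$ is irrelevant to the step, and (ii) getting the direction of monotonicity right --- the lower-bound functional is nondecreasing in $F_{1|X}(\cdot|x)$ and nonincreasing in $F_{0|X}(\cdot|x)$, which is why the lower bound pairs $F_{1|X}^{L}$ with $F_{0|X}^{U}$, and symmetrically the upper bound pairs $F_{1|X}^{U}$ with $F_{0|X}^{L}$. I would close with a remark that, in contrast to Proposition \ref{prop:ID_dist_TE_Unconfounded}, these bounds are asserted only to be valid, not sharp: the Cartesian product of the two marginal identified sets may strictly contain the set of jointly attainable pairs $(F_{1|X}(\cdot|x),F_{0|X}(\cdot|x))$, so the union of the Makarov intervals over feasible pairs can be a proper subinterval of $[F_{\Delta|X}^{e,L}(\delta|x),F_{\Delta|X}^{e,U}(\delta|x)]$.
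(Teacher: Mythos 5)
Your proposal is correct and follows essentially the same route as the paper: apply the Makarov inequality to the true conditional marginals and then exploit monotonicity of the bound functionals in those marginals to pass to the identified-set endpoints. You are in fact slightly more careful than the paper's terse proof, which cites Proposition \ref{prop:ID_dist_TE_Unconfounded} without flagging that the unconfoundedness and overlap assumptions stated there play no role in the Makarov inequality itself, a point your first paragraph makes explicit.
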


One example of the identified sets of $F_{1|X}$ and $F_{0|X}$ is
Manski's bounds (\citet{manski1990nonparametric,manski1994selection})
defined as follows: 
\begin{equation}
\begin{aligned}F_{1|X}^{L}(y|x) & \equiv\Pr(Y\leq y|D=1,X=x)\Pr(D=1|X=x),\\
F_{1|X}^{U}(y|x) & \equiv\Pr(Y\leq y|D=1,X=x)\Pr(D=1|X=x)+\Pr(D=0|X=x),\\
F_{0|X}^{L}(y|x) & \equiv\Pr(Y\leq y|D=0,X=x)\Pr(D=0|X=x),\\
F_{0|X}^{U}(y|x) & \equiv\Pr(Y\leq y|D=0,X=x)\Pr(D=0|X=x)+\Pr(D=1|X=x).
\end{aligned}
\label{eq:manski_bound}
\end{equation}
 The bounds presented in Theorem \ref{thm:ID_Endo_FH_Bound} based
on Manski's bounds are often too wide to be informative. Some model
restrictions have been imposed to tighten bounds on the parameter
of interest in the literature (e.g., \citet{manski1997monotone,manski2000monotone,blundell2007changes}).
Motivated by these studies, we propose two stochastic dominance assumptions
that may help tighten the bounds and be consistent with many economic
theories in the absence of instrumental variables. The bounds on the
distribution of treatment effects in Theorem \ref{thm:ID_Endo_FH_Bound}
are based on the bounds on the conditional distributions of the potential
outcomes, and therefore, we can tighten the bounds on $F_{\Delta|X}$
once we have tighter bounds on the conditional distributions of the
potential outcomes than (\ref{eq:manski_bound}). 

Let $F_{d|d^{'}X}(y|x)\equiv\Pr\left(Y_{d}\leq y|D=d^{'},X=x\right)$
for given $d,d^{'}\in\{0,1\}$. The following assumption states that
the potential outcome conditional on $D=1$ first-order stochastically
dominates the potential outcome conditional on $D=0$. 

\begin{assumption}\label{assu:fsd1} Let $x\in\mathcal{X}$ be given.
For all $d\in\{0,1\}$ and $y\in\mathbb{R}$, $F_{d|1,X}(y|x)\leq F_{d|0,X}(y|x)$.
\end{assumption}

Assumption \ref{assu:fsd1} is identical to the stochastic dominance
assumption of \citet{blundell2007changes} that is used to tighten
the bounds on the wage distribution of the whole population in the
presence of sample selection. \citet{okumura2014concave} generalize
this stochastic dominance assumption to the case where $D$ is not
binary. Since Assumption \ref{assu:fsd1} implies that $E[Y_{d}|D=1,X=x]\geq E[Y_{d}|D=0,X=x]$,
this assumption is a sufficient condition for the monotone treatment
selection (MTS) assumption in \citet{manski2000monotone}.

Assumption \ref{assu:fsd1} is consistent with some economic theories
in empirical studies. As mentioned earlier, \citet{blundell2007changes}
utilize this assumption in conjunction with positive selection of
labor force participation from the standard labor supply model that
wage and probability of labor force participation are positively correlated.
As another example, we can consider sorting models for return to education,
where potential employees signal their ability to employers by using
their education level. Considering return to college education (i.e.,
$Y$ is wage, and $D$ is college entrance), it is likely that the
more capable people are, the more likely it is for them to complete
a college education. As a result, we may anticipate that people with
college degrees are more likely to have higher learning ability than
those who did not complete a college education (e.g., \citet{bedard2001human}).
Many studies in the labor economics literature consider learning ability
as an important factor affecting wage (\citet{lang1986human}), and
thus, we can assume that people with college degrees have higher wages
than those without degrees. As a consequence, the sorting hypothesis
is consistent with Assumption \ref{assu:fsd1}. 

The following theorem gives identification results for the conditional
distributions of the potential outcomes under Assumption \ref{assu:fsd1}. 

\begin{theorem}\label{thm:marginal_ID_FSD1} Let $x\in\mathcal{X}$
be fixed. Suppose that Assumption \ref{assu:fsd1} holds. For a given
$y\in\mathbb{R}$, define 
\begin{eqnarray*}
F_{1|X}^{L,FSD1}(y|x) & \equiv & \Pr(Y\leq y|D=1,X=x),\\
F_{1|X}^{U,FSD1}(y|x) & \equiv & \Pr(Y\leq y|D=1,X=x)\Pr(D=1|X=x)+\Pr(D=0|X=x),\\
F_{0|X}^{L,FSD1}(y|x) & \equiv & \Pr(Y\leq y|D=0,X=x)\Pr(D=0|X=x),\\
F_{0|X}^{U,FSD1}(y|x) & \equiv & \Pr(Y\leq y|D=0,X=x).
\end{eqnarray*}
Then, 
\begin{eqnarray}
F_{1|X}(y|x) & \in & \left[F_{1|X}^{L,FSD1}(y|x),F_{1|X}^{U,FSD1}(y|x)\right],\label{eq:fsd11}\\
F_{0|X}(y|x) & \in & \left[F_{0|X}^{L,FSD1}(y|x),F_{0|X}^{U,FSD1}(y|x)\right].\label{eq:fsd12}
\end{eqnarray}
 \end{theorem}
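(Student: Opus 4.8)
The plan is to start from the law of total probability, which for each $d\in\{0,1\}$ gives the decomposition
\[
F_{d|X}(y|x)=F_{d|1,X}(y|x)\,p_{0}(x)+F_{d|0,X}(y|x)\,\bigl(1-p_{0}(x)\bigr).
\]
The key observation is that exactly one of the two conditional CDFs on the right-hand side is point identified from the data: since $Y=Y_{1}$ on $\{D=1\}$ and $Y=Y_{0}$ on $\{D=0\}$, we have $F_{1|1,X}(y|x)=\Pr(Y\leq y\mid D=1,X=x)$ and $F_{0|0,X}(y|x)=\Pr(Y\leq y\mid D=0,X=x)$, whereas the counterfactual pieces $F_{1|0,X}$ and $F_{0|1,X}$ are not identified. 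Each counterfactual piece is nonetheless a genuine CDF, hence takes values in $[0,1]$, and Assumption \ref{assu:fsd1} further constrains it relative to the identified piece.

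First I would treat $d=1$. Assumption \ref{assu:fsd1} with $d=1$ reads $F_{1|1,X}(y|x)\leq F_{1|0,X}(y|x)$, so the unknown term satisfies $F_{1|0,X}(y|x)\in\bigl[F_{1|1,X}(y|x),\,1\bigr]$. Substituting the two endpoints into the decomposition and simplifying — the lower end gives $F_{1|1,X}(y|x)p_{0}(x)+F_{1|1,X}(y|x)(1-p_{0}(x))=F_{1|1,X}(y|x)$, and the upper end gives $F_{1|1,X}(y|x)p_{0}(x)+(1-p_{0}(x))$ — yields exactly the interval $\bigl[F_{1|X}^{L,FSD1}(y|x),F_{1|X}^{U,FSD1}(y|x)\bigr]$, which is \eqref{eq:fsd11}. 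Then I would repeat the argument for $d=0$: Assumption \ref{assu:fsd1} with $d=0$ gives $F_{0|1,X}(y|x)\leq F_{0|0,X}(y|x)$, so the unknown term lies in $\bigl[0,\,F_{0|0,X}(y|x)\bigr]$; plugging the endpoints into the decomposition and simplifying gives $\bigl[F_{0|0,X}(y|x)(1-p_{0}(x)),\,F_{0|0,X}(y|x)\bigr]=\bigl[F_{0|X}^{L,FSD1}(y|x),F_{0|X}^{U,FSD1}(y|x)\bigr]$, which is \eqref{eq:fsd12}.

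There is no real obstacle here — the result is elementary once the decomposition is in place — but the step demanding the most care is the bookkeeping: keeping straight which of $F_{d|1,X}$ and $F_{d|0,X}$ is observable for each $d$, and the direction in which Assumption \ref{assu:fsd1} pins down the counterfactual (from below when $d=1$, from above when $d=0$). One should also note that the conditional probabilities appearing in the bounds are well defined because $p_{0}(x)\in(0,1)$ (guaranteed, e.g., under Assumption \ref{assu:bound_propensity}), so that both $\{D=1\}$ and $\{D=0\}$ occur with positive probability given $X=x$. If sharpness of these intervals were also desired, one would, for each candidate value, construct a valid CDF $F_{1|0,X}$ dominating the observed $F_{1|1,X}$ (respectively $F_{0|1,X}$ dominated by the observed $F_{0|0,X}$) consistent with all the data restrictions and with Assumption \ref{assu:fsd1}, but the statement as given requires only the inclusions.
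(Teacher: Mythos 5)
Your proof is correct and follows essentially the same route as the paper's: decompose $F_{d|X}(y|x)$ by the law of total probability, note that the counterfactual piece $F_{1|0,X}$ (resp.\ $F_{0|1,X}$) is bounded below (resp.\ above) by the identified piece under Assumption \ref{assu:fsd1} and trivially by $1$ (resp.\ $0$) as a CDF, and substitute the endpoints. The bookkeeping of which piece is identified and the direction of the dominance is handled correctly, matching the paper's argument.
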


Comparing the bounds on the conditional distributions of the potential
outcomes in Theorem \ref{thm:marginal_ID_FSD1} with those in equation
(\ref{eq:manski_bound}), we can find that the upper bound on $F_{1|X}(y|x)$
and lower bound on $F_{0|X}(y|x)$ in Theorem \ref{thm:marginal_ID_FSD1}
are identical to their counterparts in (\ref{eq:manski_bound}). Since
Assumption \ref{assu:fsd1} designates only one direction of the monotonicity
of the distribution functions, it is impossible to improve the lower
bound on $F_{0|X}(y|x)$ and the upper bound on $F_{1|X}(y|x)$. Nevertheless,
the bounds on the conditional distributions of the potential outcomes
provided in Theorem \ref{thm:marginal_ID_FSD1} are narrower than
those in (\ref{eq:manski_bound}). 

We now introduce another stochastic dominance assumption. The following
stochastic dominance assumption may be regarded as an assumption corresponding
to the monotone treatment response (MTR) assumption in \citet{manski1997monotone}. 

\begin{assumption}\label{assu:fsd2} Let $x\in\mathcal{X}$ be given.
For all $d\in\{0,1\}$ and $y\in\mathbb{R}$, $F_{1|d,X}(y|x)\leq F_{0|d,X}(y|x)$.
\end{assumption}

Note that Assumption \ref{assu:fsd2} implies that $\mathbb{E}[Y_{1}|X=x]\geq\mathbb{E}[Y_{0}|X=x]$.
To see this, recall that $F_{1|X}(y|x)=F_{1|1,X}(y|x)p_{0}(x)+F_{1|0,X}(y|x)(1-p_{0}(x))$.
Under Assumption \ref{assu:fsd2}, we have $F_{1|1,X}(y|x)\leq F_{0|1,X}(y|x)$
and $F_{1|0,X}(y|x)\leq F_{0|0,X}(y|x)$, and therefore, we have $F_{1|X}(y|x)\leq F_{0|X}(y|x)$
and $\mathbb{E}[Y_{1}|X=x]\geq\mathbb{E}[Y_{0}|X=x]$. 

Assumption \ref{assu:fsd2} can be regarded as a distributional generalization
of, but is weaker than, the MTR assumption of \citet{manski1997monotone}.
This assumption is also applicable to many empirical studies as it
is compatible with some economic theories. Recalling the example of
return to college education, Assumption \ref{assu:fsd2} may be consistent
with the human capital theory in which education increases productivity
through a human capital production function and return to education
reflects the increased productivity (e.g., \citet{mincer1974schooling}).
For example, Assumption \ref{assu:fsd2} can be translated into that
people who have completed their college degrees are likely to be paid
higher wages than when they have not, and this argument can be justified
by the human capital theory in the labor economics. 

It is worth noting that Assumption \ref{assu:fsd2} is a special case
of conditional negative quadrant dependence, which is a dependence
concept considered by \citet{lehmann1966some}. \citet{kim2018partial}
proposes a concept of conditional quadrant dependence, as well as
a negative stochastic dominance, in the presence instrumental variables.
He provides tighter bounds on the distribution of treatment effects
than bounds using (\ref{eq:manski_bound}) under these assumptions
with the existence of an instrumental variable. A difference
between \citet{kim2018partial} and this paper is that this paper
does not impose additional assumptions on the data generating process,
such as monotonicity in structural functions for $Y_{1}$ and/or $Y_{0}$
and the availability of instrumental variables. Furthermore, this
paper focuses on estimation of the bounds on conditional distribution
of treatment effects, whereas \citet{kim2018partial} focuses on identification
of the unconditional distribution of treatment effects. 

The following theorem shows that we can tighten the bounds on the
conditional distributions of $Y_{1}$ and $Y_{0}$ given $X$ under
Assumption \ref{assu:fsd2}: 

\begin{theorem}\label{thm:marginal_ID_FSD2} Let $x\in\mathcal{X}$
be fixed. Suppose that Assumption \ref{assu:fsd2} holds. For a given
$y\in\mathbb{R}$, define 
\begin{eqnarray*}
F_{1|X}^{L,FSD2}(y|x) & = & \Pr(Y\leq y|D=1,X=x)\Pr(D=1|X=x),\\
F_{1|X}^{U,FSD2}(y|x) & = & \Pr(Y\leq y|X=x),\\
F_{0|X}^{L,FSD2}(y|x) & = & F_{1|X}^{U,FSD2}(y|x),\\
F_{0|X}^{U,FSD2}(y|x) & = & \Pr(Y\leq y|D=0,X=x)\Pr(D=0|X=x)+\Pr(D=1|X=x).
\end{eqnarray*}
Then, 
\begin{eqnarray}
F_{1|X}(y|x) & \in & \left[F_{1|X}^{L,FSD2}(y|x),F_{1|X}^{U,FSD2}(y|x)\right],\label{eq:fsd21}\\
F_{0|X}(y|x) & \in & \left[F_{0|X}^{L,FSD2}(y|x),F_{0|X}^{U,FSD2}(y|x)\right].\label{eq:fsd22}
\end{eqnarray}
\end{theorem}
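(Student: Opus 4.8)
The plan is to decompose each conditional distribution of the potential outcomes by further conditioning on treatment status, and then to plug in Assumption \ref{assu:fsd2} together with the trivial bound that any conditional CDF lies in $[0,1]$. First I would write, by the law of total probability given $X=x$,
\begin{align*}
F_{1|X}(y|x) &= F_{1|1,X}(y|x)\,p_0(x) + F_{1|0,X}(y|x)\bigl(1-p_0(x)\bigr),\\
F_{0|X}(y|x) &= F_{0|1,X}(y|x)\,p_0(x) + F_{0|0,X}(y|x)\bigl(1-p_0(x)\bigr).
\end{align*}
The key observation is that $F_{1|1,X}(y|x)=\Pr(Y\le y\mid D=1,X=x)$ and $F_{0|0,X}(y|x)=\Pr(Y\le y\mid D=0,X=x)$ are directly identified from the data, since $Y=Y_1$ on $\{D=1\}$ and $Y=Y_0$ on $\{D=0\}$; the two counterfactual pieces $F_{1|0,X}(y|x)$ and $F_{0|1,X}(y|x)$ are unobserved and a priori only known to satisfy $0\le F_{1|0,X}(y|x),F_{0|1,X}(y|x)\le 1$ (this yields the Manski-type bounds).

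Next I would invoke Assumption \ref{assu:fsd2}. Taking $d=0$ gives $F_{1|0,X}(y|x)\le F_{0|0,X}(y|x)=\Pr(Y\le y\mid D=0,X=x)$; combining this with $F_{1|0,X}(y|x)\ge 0$ in the decomposition of $F_{1|X}(y|x)$ produces
\[
\Pr(Y\le y\mid D=1,X=x)\Pr(D=1\mid X=x)\;\le\;F_{1|X}(y|x)\;\le\;\Pr(Y\le y\mid X=x),
\]
where the upper limit is obtained by recombining the two observed pieces via $F_{1|1,X}(y|x)p_0(x)+F_{0|0,X}(y|x)\bigl(1-p_0(x)\bigr)=\Pr(Y\le y\mid X=x)$, again by the law of total probability for $Y$ given $X=x$. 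This matches $F_{1|X}^{L,FSD2}$ and $F_{1|X}^{U,FSD2}$. Symmetrically, taking $d=1$ in Assumption \ref{assu:fsd2} gives $F_{0|1,X}(y|x)\ge F_{1|1,X}(y|x)=\Pr(Y\le y\mid D=1,X=x)$; combining this with $F_{0|1,X}(y|x)\le 1$ in the decomposition of $F_{0|X}(y|x)$ yields
\[
\Pr(Y\le y\mid X=x)\;\le\;F_{0|X}(y|x)\;\le\;\Pr(Y\le y\mid D=0,X=x)\Pr(D=0\mid X=x)+\Pr(D=1\mid X=x),
\]
which matches $F_{0|X}^{L,FSD2}=F_{1|X}^{U,FSD2}$ and $F_{0|X}^{U,FSD2}$. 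This establishes the two inclusions \eqref{eq:fsd21} and \eqref{eq:fsd22}.

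I do not anticipate a genuine obstacle: the argument is essentially bookkeeping with the law of total probability and two inequalities. The only point requiring a little care is the algebraic recombination that makes the upper bound on $F_{1|X}$ and the lower bound on $F_{0|X}$ coincide with $\Pr(Y\le y\mid X=x)$, and (implicitly) the fact that the conditioning events have positive probability, which is guaranteed under the overlap condition in Assumption \ref{assu:bound_propensity}. If one additionally wished to assert pointwise sharpness, the extra work would be to construct, for each boundary value, a joint law of $(Y_1,Y_0,D)$ consistent with the observed data and with Assumption \ref{assu:fsd2} attaining that value; but since the theorem as stated claims only the inclusions, the derivation above suffices.
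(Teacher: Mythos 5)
Your proposal is correct and follows essentially the same route as the paper's proof: decompose $F_{d|X}(y|x)$ by the law of total probability over $D$, substitute the stochastic dominance inequalities of Assumption \ref{assu:fsd2} for the unobserved counterfactual pieces $F_{1|0,X}$ and $F_{0|1,X}$ (together with the trivial $[0,1]$ bounds), and recombine the observed pieces into $\Pr(Y\leq y|X=x)$. Your write-up is simply a more explicit version of the paper's terser argument.
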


As Assumption \ref{assu:fsd1} is not enough to improve $F_{1|X}^{U}(y|x)$
and $F_{0|X}^{L}(y|x)$ in (\ref{eq:manski_bound}), we can see that
the lower bound on $F_{1|X}(y|x)$ and the upper bound on $F_{0|X}(y|x)$
in Theorem \ref{thm:marginal_ID_FSD2} remain the same as those in
(\ref{eq:manski_bound}). 

If both Assumptions \ref{assu:fsd1} and \ref{assu:fsd2} hold, one can further improve the bounds on the conditional
distributions of the potential outcomes. This result is summarized
in the following corollary: 

\begin{corollary}\label{coro:marginal_ID_FSD12} Let $x\in\mathcal{X}$
be fixed. Suppose that Assumptions \ref{assu:fsd1} and \ref{assu:fsd2}
hold. For a given $y\in\mathbb{R}$, 
\begin{eqnarray*}
F_{1|X}(y|x) & \in & \left[F_{1|X}^{L,FSD1}(y|x),F_{1|X}^{U,FSD2}(y|x)\right],\\
F_{0|X}(y|x) & \in & \left[F_{0|X}^{L,FSD2}(y|x),F_{0|X}^{U,FSD1}(y|x)\right].
\end{eqnarray*}
\end{corollary}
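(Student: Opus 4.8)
The plan is to obtain Corollary \ref{coro:marginal_ID_FSD12} simply by intersecting the identified sets delivered by Theorems \ref{thm:marginal_ID_FSD1} and \ref{thm:marginal_ID_FSD2}, and then checking that the intersection is exactly the stated interval (i.e., that no endpoint is lost and the intersection is non-degenerate). Since Assumptions \ref{assu:fsd1} and \ref{assu:fsd2} are assumed to hold simultaneously, $F_{1|X}(y|x)$ must lie in both $\left[F_{1|X}^{L,FSD1}(y|x),F_{1|X}^{U,FSD1}(y|x)\right]$ and $\left[F_{1|X}^{L,FSD2}(y|x),F_{1|X}^{U,FSD2}(y|x)\right]$, hence in their intersection, and likewise for $F_{0|X}(y|x)$.

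First I would record the comparisons between the four endpoints for $Y_1$. Writing $q(x)\equiv\Pr(D=1|X=x)$ and $F_{Y|D,X}$, $F_{Y|X}$ for the observed conditional CDFs, we have $F_{1|X}^{L,FSD1}=F_{Y|1,X}\geq F_{Y|1,X}\,q(x)=F_{1|X}^{L,FSD2}$, so the binding (larger) lower bound is the FSD1 one; and $F_{1|X}^{U,FSD2}=F_{Y|X}=F_{Y|1,X}q(x)+F_{Y|0,X}(1-q(x))\leq F_{Y|1,X}q(x)+(1-q(x))=F_{1|X}^{U,FSD1}$, so the binding (smaller) upper bound is the FSD2 one. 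Therefore the intersection of the two identified sets for $F_{1|X}(y|x)$ is $\left[F_{1|X}^{L,FSD1}(y|x),F_{1|X}^{U,FSD2}(y|x)\right]$, which is the first claim. The argument for $Y_0$ is the mirror image: $F_{0|X}^{L,FSD2}=F_{Y|X}\geq F_{Y|0,X}(1-q(x))=F_{0|X}^{L,FSD1}$ gives the binding lower bound as the FSD2 one, and $F_{0|X}^{U,FSD1}=F_{Y|0,X}\leq F_{Y|0,X}(1-q(x))+q(x)=F_{0|X}^{U,FSD2}$ gives the binding upper bound as the FSD1 one, yielding $\left[F_{0|X}^{L,FSD2}(y|x),F_{0|X}^{U,FSD1}(y|x)\right]$.

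It remains to verify that these intervals are well-defined, i.e., that the lower endpoint does not exceed the upper endpoint. For $Y_1$ this amounts to $F_{Y|1,X}(y|x)\leq F_{Y|X}(y|x)$, and for $Y_0$ to $F_{Y|X}(y|x)\leq F_{Y|0,X}(y|x)$; both follow from Assumption \ref{assu:fsd1} (equivalently, from the fact that $F_{Y|X}$ is a convex combination of $F_{Y|1,X}$ and $F_{Y|0,X}$ together with $F_{Y|1,X}\le F_{Y|0,X}$ as implied by Assumption \ref{assu:fsd1} with $d\in\{0,1\}$; indeed $F_{1|1,X}\le F_{1|0,X}$ and $F_{0|1,X}\le F_{0|0,X}$ imply $F_{Y|1,X}\le F_{Y|0,X}$). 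I expect the only subtlety — the ``main obstacle,'' though it is mild — to be confirming that intersecting the two identified sets does not over-shrink the region, that is, that the resulting interval is still the sharp identified set under both assumptions jointly and not merely an outer set; this can be argued by exhibiting, for each endpoint, a joint distribution of $(Y_1,Y_0,D)$ consistent with the data and with Assumptions \ref{assu:fsd1}–\ref{assu:fsd2} that attains it, reusing the extremal constructions already underlying Theorems \ref{thm:marginal_ID_FSD1} and \ref{thm:marginal_ID_FSD2}.
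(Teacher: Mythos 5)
Your proof is correct and follows essentially the same route as the paper: both intersect the identified sets from Theorems \ref{thm:marginal_ID_FSD1} and \ref{thm:marginal_ID_FSD2} and verify via the same endpoint comparisons that the FSD1 lower bound and FSD2 upper bound bind for $F_{1|X}$, with the mirror image for $F_{0|X}$. One small caveat: your non-degeneracy check is redundant (membership of $F_{1|X}(y|x)$ already forces the interval to be non-empty), and the parenthetical claim that $F_{Y|1,X}\le F_{Y|0,X}$ follows from Assumption \ref{assu:fsd1} alone is not quite right --- it needs both assumptions combined, e.g.\ $F_{1|1,X}\le F_{1|0,X}\le F_{0|0,X}$ using \ref{assu:fsd1} and then \ref{assu:fsd2} --- but since both are in force and the step is superfluous, this does not affect the result.
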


It is straightforward to see that the identified sets for conditional
distribution functions of $Y_{1}$ and $Y_{0}$ on $X=x$ presented
in Corollary \ref{coro:marginal_ID_FSD12} are connected and that
the intersection of these sets is the boundary of each set, that is
$F_{1|X}^{U,FSD2}(y|x)=F_{0|X}^{L,FSD2}(y|x)$ for all $y\in\mathbb{R}$.

It is worth mentioning that the stochastic dominance relationships
in Assumptions \ref{assu:fsd1} and \ref{assu:fsd2} can be reversed,
depending on the empirical context. For example, one can impose a
condition that for all $d\in\{0,1\}$ and $y\in\mathbb{R}$, $F_{d|1,X}(y|x)\geq F_{d|0,X}(y|x)$
instead of Assumption \ref{assu:fsd1}. Similarly, one can consider
a stochastic dominance relationship that for all $d\in\{0,1\}$ and
$y\in\mathbb{R}$, $F_{1|d,X}(y|x)\geq F_{0|d,X}(y|x)$ instead of
Assumption \ref{assu:fsd2}. These stochastic dominance relationships
can be motivated by, for example, the empirical study of \citet{Angrist1990},
who considers the long-term effect of veteran status on earnings.
As pointed out by \citet{Angrist1990}, it would be likely that men
with fewer civilian opportunities served in the armed force, and this
may yield a negative selection in the sense that the potential earnings
of veterans are likely to be smaller than those of non-veterans (i.e.,
$F_{d|1,X}(y|x)\geq F_{d|0,X}(y|x)$). In addition, since military
service may hinder human capital accumulation, the potential earnings
they would have received when they served in the armed force are likely
to be smaller than the potential earnings they would have received
when they did not (i.e., $F_{1|d,X}(y|x)\geq F_{0|d,X}(y|x)$). 

There are other ways to tighten the bounds on the distribution of
treatment effects. For example, one can utilize support restrictions
(e.g., \citet{kim2018identifying}), consider a weaker condition than
the conditional independence, such as $c$-independence proposed by
\citet{masten2018identification}, or impose some dependence structure
as in \citet{frandsen2021partial}. In addition, although we do not
assume the availability of (monotone) instrumental variables in this
paper, one can consider the monotone instrumental variable (MIV) assumption
when there exists such a variable, as in \citet{manski2000monotone}
and \citet{kim2018partial}. A related but stronger assumption on
instrumental variables is the stochastically MIV (SMIV) assumption
considered by \citet{mourifie2020sharp}. The SMIV assumption imposes
a stochastic dominance ordering on the joint distribution of $Y_{0}$
and $Y_{1}$ across values of an instrumental variable, and the conditional
joint distribution of the potential outcomes given a value of the
instrumental variable, say $z$, (first-order) stochastically dominates
the conditional joint distribution given a value of the instrumental
variable that is smaller than $z$. These restrictions help tighten
the bounds on the conditional distributions of the potential outcomes
and/or the bounds on $F_{\Delta|X}$, but estimation and (uniform)
inference for such bounds is not trivial. Therefore, we leave it for
future study. 

\section{\label{sec:Estimation} Estimation of Bounds and Confidence Bands}

\subsection{Estimation and Bootstrap Procedure }

We consider estimation of the bounds and construction of confidence
bands for them in this section. To this end, we assume that the conditional
distributions of the potential outcomes are identified as follows:
for all $y\in\mathbb{R}$ and $x\in\mathcal{X}$, 
\begin{equation}
\begin{array}{cc}
F_{1|X}(y|x) & \in\left[LB_{1|X}(y|x),UB_{1|X}(y|x)\right],\\
F_{0|X}(y|x) & \in\left[LB_{0|X}(y|x),UB_{0|X}(y|x)\right].
\end{array}\label{eq:est_bound}
\end{equation}
 We denote nonparametric estimators of $LB_{1|X}$, $UB_{1|X}$, $LB_{0|X}$,
and $UB_{0|X}$ by $\hat{LB}_{1|X,n}$, $\hat{UB}_{1|X,n}$, $\hat{LB}_{0|X,n}$,
and $\hat{UB}_{0|X,n}$, respectively. In this paper, we consider
kernel-type estimators of the bounds. Let 
\begin{align*}
\mathbb{F}_{\mathbf{Y}|X}((y_{1l},y_{1u},y_{0l},y_{0u})|x) & \equiv\left(LB_{1|X}(y_{1l}|x),UB_{1|X}(y_{1u}|x),LB_{0|X}(y_{0l}|x),UB_{0|X}(y_{0u}|x)\right)^{t},\\
\hat{\mathbb{F}}_{\mathbf{Y}|X,n}((y_{1l},y_{1u},y_{0l},y_{0u})|x) & \equiv\left(\hat{LB}_{1|X,n}(y_{1l}|x),\hat{UB}_{1|X,n}(y_{1u}|x),\hat{LB}_{0|X,n}(y_{0l}|x),\hat{UB}_{0|X,n}(y_{0u}|x)\right)^{t}.
\end{align*}
 Suppose that there exists a positive real sequence $\left(r_{n}\right)_{n}$
such that $r_{n}\rightarrow\infty$ and that 
\[
r_{n}\left(\hat{\mathbb{F}}_{\mathbf{Y}|X,n}(\cdot|x)-\mathbb{F}_{\mathbf{Y}|X}(\cdot|x)\right)\Rightarrow\mathbb{G}\left(\cdot\right)\text{ in }\left(l^{\infty}(\mathcal{Y})\right)^{4},
\]
 where $\mathbb{G}(\cdot)$ is a four-dimensional centered Gaussian
process. This weak convergence result can be derived under a set of
mild regularity conditions when using standard nonparametric estimators.
When we use kernel-type estimators, we have $r_{n}=\sqrt{nh_{n}^{d_{x}}}$,
where $h_{n}$ is a bandwidth. 

Define 
\begin{align*}
\Pi_{L}\left(\mathbb{F}_{\mathbf{Y}|X}\right)(y,\delta|x) & \equiv LB_{1|X}(y|x)-UB_{0|X}(y-\delta|x),\\
\Pi_{U}\left(\mathbb{F}_{\mathbf{Y}|X}\right)(y,\delta|x) & \equiv UB_{1|X}(y|x)-LB_{0|X}(y-\delta|x),
\end{align*}
 and 
\begin{align}
\phi_{L}\left(\mathbb{F}_{\mathbf{Y}|X}\right) & \equiv\sup_{\delta\in\mathcal{D}}\Big|\sup_{y\in\mathcal{Y}}\Pi_{L}(\mathbb{F}_{\mathbf{Y}|X})(y,\delta|x)-F_{\Delta|X}^{L,0}(\delta|x)\Big|=\sup_{\delta}\Big|F_{\Delta|X}^{L}(\delta|x)-F_{\Delta|X}^{L,0}(\delta|x)\Big|,\label{eq:lower_functional}\\
\phi_{U}\left(\mathbb{F}_{\mathbf{Y}|X}\right) & \sup_{\delta\in\mathcal{D}}\Big|\inf_{y\in\mathcal{Y}}\{\Pi_{U}(\mathbb{F}_{\mathbf{Y}|X})(y,\delta|x)+1\}-F_{\Delta|X}^{U,0}(\delta|x)\Big|=\sup_{\delta}\Big|F_{\Delta|X}^{U}(\delta|x)-F_{\Delta|X}^{U,0}(\delta|x)\Big|,\label{eq:upper_functional}
\end{align}
 where $F_{\Delta|X}^{L,0}$ and $F_{\Delta|X}^{U,0}$ are some fixed
bounds that are true under the hypotheses $F_{\Delta|X}^{L}(\cdot|x)=F_{\Delta|X}^{L,0}(\cdot|x)$
and $F_{\Delta|X}^{U}(\cdot|x)=F_{\Delta|X}^{U,0}(\cdot|x)$. The
functionals in (\ref{eq:lower_functional}) and (\ref{eq:upper_functional})
can be considered as the Kolmogorov-Smirnov (KS) type statistics,
and they are useful to construct two-sided uniform confidence bands
for the lower and upper bounds on the conditional distribution of
treatment effects.\footnote{One can consider one-sided test statistics to construct one-sided
uniform confidence bands. However, we do not examine such test statistics
in this paper as the main purpose is to obtain uniform confidence
sets for the identified region and the two-sided uniform confidence
bands are enough to provide confidence sets for the identified region. } 

Let $\mathbb{H}\equiv(h_{1},h_{2},h_{3},h_{4})^{t}$ be a four-dimensional
vector-valued function and $f$ be a real-valued function. We denote
the Hadamard directional derivatives of $\phi_{L}$ and $\phi_{U}$
at $f$ in direction $\mathbb{H}$ by $\phi_{L}^{'}(f;\mathbb{H})$
and $\phi_{U}^{'}(f;\mathbb{H})$, respectively. We also let $\widehat{\phi_{L}^{'}}\left(f;\mathbb{H},a_{n}\right)$
and $\widehat{\phi_{U}^{'}}\left(f;\mathbb{H},a_{n}\right)$ denote
estimators of $\phi_{L}^{'}(f;\mathbb{H})$ and $\phi_{U}^{'}(f;\mathbb{H})$,
respectively, where $(a_{n})$ is a positive real sequence decreasing
to zero. We provide the forms of the Hadamard directional derivatives
and their estimators in Appendix \ref{sec:HD-Derivatives}. The main
goal is to establish the limiting distributions of $r_{n}\left(\phi_{L}\left(\hat{\mathbb{F}}_{\mathbf{Y}|X,n}\right)-\phi_{L}\left(\mathbb{F}_{\mathbf{Y}|X}\right)\right)$
and $r_{n}\left(\phi_{U}\left(\hat{\mathbb{F}}_{\mathbf{Y}|X,n}\right)-\phi_{U}\left(\mathbb{F}_{\mathbf{Y}|X}\right)\right)$.
The limiting distributions allow one to construct confidence bands
for the bounds on the conditional distribution of treatment effects. 

As will be shown below, the limiting distributions are nonstandard.
We propose to use a multiplier bootstrap to mimic the asymptotic distributions
of $r_{n}\left(\phi_{L}\left(\hat{\mathbb{F}}_{\mathbf{Y}|X,n}\right)-\phi_{L}\left(\mathbb{F}_{\mathbf{Y}|X}\right)\right)$
and $r_{n}\left(\phi_{U}\left(\hat{\mathbb{F}}_{\mathbf{Y}|X,n}\right)-\phi_{U}\left(\mathbb{F}_{\mathbf{Y}|X}\right)\right)$.
Specifically, let 
\[
\hat{\Psi}_{i}((y_{1l},y_{1u},y_{0l},y_{0u})|x)\equiv\left(\hat{\psi}_{1l,i|X,n}(y_{1l}|x),\hat{\psi}_{1u,i|X,n}(y_{1u}|x),\hat{\psi}_{0l,i|X,n}(y_{0l}|x),\hat{\psi}_{0u,i|X,n}(y_{0u}|x)\right)^{t}
\]
 be an estimated influence function for the $i$-th observation of
$\hat{\mathbb{F}}_{\mathbf{Y}|X,n}((y_{1l},y_{1u},y_{0l},y_{0u})|x)$
and $\{B_{i}:i=1,2,...n\}$ be $n$ random draws from $B$ that is
independent of the data and satisfies some moment conditions (cf.
Assumption \ref{assu:boot_weight}). Define 
\begin{align*}
 & \hat{\mathbb{F}}_{\mathbf{Y}|X,n}^{*}((y_{1l},y_{1u},y_{0l},y_{0u})|x)\\
\equiv & \left(\sum_{i}B_{i}\hat{\psi}_{1l,i|X,n}(y_{1l}|x),\sum_{i}B_{i}\hat{\psi}_{1u,i|X,n}(y_{1u}|x),\sum_{i}B_{i}\hat{\psi}_{0l,i|X,n}(y_{0l}|x),\sum_{i}B_{i}\hat{\psi}_{0u,i|X,n}(y_{0u}|x)\right)^{t}.
\end{align*}

Then, one can implement the following bootstrap procedure to obtain
confidence bands for $F_{\Delta|X}^{L}(\cdot|x)$ and $F_{\Delta|X}^{U}(\cdot|x)$. 

\paragraph{Bootstrap procedure }
\begin{enumerate}
\item Estimate the conditional distribution functions of the potential outcomes
or bounds on them, and construct 
\begin{align*}
\Pi_{L}(\hat{\mathbb{F}}_{\mathbf{Y}|X,n})(y,\delta|x) & =\hat{LB}_{1|X,n}(y|x)-\hat{UB}_{0|X,n}(y-\delta|x),\\
\Pi_{U}(\hat{\mathbb{F}}_{\mathbf{Y}|X,n})(y,\delta|x) & =\hat{UB}_{1|X,n}(y|x)-\hat{LB}_{0|X,n}(y-\delta|x).
\end{align*}
 
\item Repeat the following procedure for $m$ times, where $m$ denotes
the number of bootstrap iterations. 
\begin{enumerate}
\item Generate the bootstrap weights $\{B_{i}:i=1,2,...n\}$ from the random
variable $B$ in Assumption \ref{assu:boot_weight}. 
\item Using the estimated conditional distribution functions of the potential
outcomes or bounds on them, compute $r_{n}\hat{\mathbb{F}}_{\mathbf{Y}|X,n}^{*}(\cdot|x)$. 
\item For each iteration $b=1,2,...,m$, compute 
\begin{align*}
 & \widehat{\phi_{L}^{'}}^{(b)}\left(\Pi_{L}(\hat{\mathbb{F}}_{\mathbf{Y}|X,n})(y,\delta|x);r_{n}\hat{\mathbb{F}}_{\mathbf{Y}|X,n}^{*}(\cdot|x),a_{n}\right),\\
 & \widehat{\phi_{U}^{'}}^{(b)}\left(\Pi_{U}(\hat{\mathbb{F}}_{\mathbf{Y}|X,n})(y,\delta|x)+1;r_{n}\hat{\mathbb{F}}_{\mathbf{Y}|X,n}^{*}(\cdot|x),a_{n}\right).
\end{align*}
\end{enumerate}
\item For a given significance level $\alpha\in(0,1)$, find the $(1-\frac{\alpha}{2})$-th
quantiles of 
\begin{align*}
 & \left\{ \widehat{\phi_{L}^{'}}^{(b)}\left(\Pi_{L}(\hat{\mathbb{F}}_{\mathbf{Y}|X,n})(y,\delta|x);r_{n}\hat{\mathbb{F}}_{\mathbf{Y}|X,n}^{*}(\cdot|x),a_{n}\right):b=1,2,...,m\right\} ,\\
 & \left\{ \widehat{\phi_{U}^{'}}^{(b)}\left(\Pi_{U}(\hat{\mathbb{F}}_{\mathbf{Y}|X,n})(y,\delta|x)+1;r_{n}\hat{\mathbb{F}}_{\mathbf{Y}|X,n}^{*}(\cdot|x),a_{n}\right):b=1,2,...m\right\} .
\end{align*}
We denote the quantiles by $c_{1-\frac{\alpha}{2}}^{L}$ and $c_{1-\frac{\alpha}{2}}^{U}$,
respectively. 
\item Let 
\begin{align*}
\tilde{CI}\left(F_{\Delta|X}^{L}(\delta;x),1-\alpha\right) & \equiv\left[\hat{F}_{\Delta|X,n}^{L}(\delta|x)-c_{1-\frac{\alpha}{2}}^{L},\hat{F}_{\Delta|X,n}^{L}(\delta|x)+c_{1-\frac{\alpha}{2}}^{L}\right],\\
\tilde{CI}\left(F_{\Delta|X}^{U}(\delta;x),1-\alpha\right) & \equiv\left[\hat{F}_{\Delta|X,n}^{U}(\delta|x)-c_{1-\frac{\alpha}{2}}^{U},\hat{F}_{\Delta|X,n}^{U}(\delta|x)+c_{1-\frac{\alpha}{2}}^{U}\right].
\end{align*}
 The $(1-\alpha)\times100\%$ confidence bands of $F_{\Delta|X}^{L}(\cdot|x)$
and $F_{\Delta|X}^{U}(\cdot|x)$ can be constructed as 
\begin{align*}
CI\left(F_{\Delta|X}^{L}(\delta;x),1-\alpha\right) & \equiv\left[\max\left\{ \hat{F}_{\Delta|X,n}^{L}(\delta|x)-c_{1-\frac{\alpha}{2}}^{L},0\right\} ,\min\left\{ \hat{F}_{\Delta|X,n}^{L}(\delta|x)+c_{1-\frac{\alpha}{2}}^{L},1\right\} \right],\\
CI\left(F_{\Delta|X}^{U}(\delta;x),1-\alpha\right) & \equiv\left[\max\left\{ \hat{F}_{\Delta|X,n}^{U}(\delta|x)-c_{1-\frac{\alpha}{2}}^{U},0\right\} ,\min\left\{ \hat{F}_{\Delta|X,n}^{U}(\delta|x)+c_{1-\frac{\alpha}{2}}^{U},1\right\} \right],
\end{align*}
 respectively. 
\end{enumerate}
The difference between $\tilde{CI}$ and $CI$ in the above bootstrap
procedure is that $CI$'s are confidence bands obtained to impose
the logical bounds when necessary. \citet{chen2021shape} show that
one advantage of applying such operators to the original confidence
bands is that the resulting confidence bands have greater coverage
than the original confidence bands. In addition, this procedure is
very simple and easy to implement in practice. 

The above bootstrap procedure allows us to construct the pointwise
confidence set of the identified set, where the identified set is
$[F_{\Delta|X}^{L}(\delta|x),F_{\Delta|X}^{U}(\delta|x)]$ for given
$\delta\in Supp(\Delta|X=x)$. Specifically, we can show that 
\[
\underset{n\rightarrow\infty}{\lim\inf}\Pr\left(F_{\Delta|X}(\delta|x)\in\left[\hat{F}_{\Delta|X,n}^{L}(\delta|x)-c_{1-\frac{\alpha}{2}}^{L},\hat{F}_{\Delta|X,n}^{U}(\delta|x)+c_{1-\frac{\alpha}{2}}^{U}\right]\right)\geq1-\alpha,
\]
 and this can be used as a $(1-\alpha)\times100\%$ confidence set
for the identified set. This confidence set does not require the uniqueness
of the infimum and supremum in the Makarov bounds. As pointed out
by \citet{firpo2021uniform}, however, this confidence set is likely
to be conservative. 

\subsection{Nonparametric Estimators of the Bounds on Conditional Distributions
of Potential Outcomes }

We propose to use kernel-type estimators of bounds on the conditional
distributions of the potential outcomes in (\ref{eq:est_bound}).
Let $K(\cdot):\mathbb{R}^{d_{x}}\rightarrow\mathbb{R}$ be a $d_{x}$-dimensional
kernel function and $h_{n}$ be a bandwidth such that $h_{n}\rightarrow0$,
$nh_{n}^{d_{x}}\rightarrow\infty$, and $nh_{n}^{d_{x}+4}\rightarrow0$
as $n\rightarrow\infty$. We start with the case where Assumptions
\ref{assu:unconfounded} and \ref{assu:bound_propensity} hold. In
this case, the conditional distributions of the potential outcomes
are point identified (i.e., $LB_{j|X}(y|x)=UB_{j|X}(y|x)=F_{j|X}(y|x)$
for each $j\in\{0,1\}$), and we can use 
\begin{align*}
\hat{F}_{1|X,n}(y_{1}|x) & \equiv\frac{\sum_{i}^{n}\mathbf{1}(Y_{i}\leq y_{1})D_{i}K(\frac{X_{i}-x}{h_{n}})}{\sum_{i}^{n}D_{i}K(\frac{X_{i}-x}{h_{n}})},\\
\hat{F}_{0|X,n}(y_{0}|x) & \equiv\frac{\sum_{i}^{n}\mathbf{1}(Y_{i}\leq y_{0})(1-D_{i})K(\frac{X_{i}-x}{h_{n}})}{\sum_{i}^{n}(1-D_{i})K(\frac{X_{i}-x}{h_{n}})}
\end{align*}
 as estimators of $F_{1|X}(y_{1}|x)$ and $F_{0|X}(y_{0}|x)$, respectively.
We then define 
\begin{equation}
\hat{\mathbb{F}}_{\mathbf{Y}|X,n}((y_{1l},y_{1u},y_{0l},y_{0u})|x)\equiv\left(\hat{F}_{1|X,n}(y_{1l}|x),\hat{F}_{1|X,n}(y_{1u}|x),\hat{F}_{0|X,n}(y_{0l}|x),\hat{F}_{0|X,n}(y_{0u}|x)\right)^{t}.\label{eq:kernel_unconfounded}
\end{equation}
 Note that the kernel function and bandwidth used to estimate $F_{1|X}$
can be different from those used to estimate $F_{0|X}$. In addition,
we have $\Pi_{L}(\mathbb{F}_{\mathbf{Y}|X,n})(y,\delta|x)=\Pi_{U}(\mathbb{F}_{\mathbf{Y}|X,n})(y,\delta|x)=F_{1|X}(y|x)-F_{0|X}(y-\delta|x)$
and can use the bootstrap procedure with $r_{n}=\sqrt{nh_{n}^{d_{x}}}$
under a set of regularity conditions. 

We now consider the case where the treatment is endogenous and Assumptions \ref{assu:fsd1} and \ref{assu:fsd2} hold. Recall that
$F_{d|d^{'}X}(y|x)=\Pr(Y_{d}\leq y|D=d^{'},X=x)$ for given $d,d^{'}\in\{0,1\}$,
and the conditional distribution of $Y$ given $X=x$ is denoted by
$F_{Y|X}(\cdot|x)$. By corollary \ref{coro:marginal_ID_FSD12}, we
can construct kernel estimators of these objects as 
\begin{align*}
\hat{F}_{1|1X,n}(y_{1}|x) & \equiv\frac{\sum_{i}^{n}\mathbf{1}(Y_{i}\leq y_{1})D_{i}K(\frac{X_{i}-x}{h_{n}})}{\sum_{i}^{n}D_{i}K(\frac{X_{i}-x}{h_{n}})},\\
\hat{F}_{0|0X,n}(y_{0}|x) & \equiv\frac{\sum_{i}^{n}\mathbf{1}(Y_{i}\leq y_{0})(1-D_{i})K(\frac{X_{i}-x}{h_{n}})}{\sum_{i}^{n}(1-D_{i})K(\frac{X_{i}-x}{h_{n}})},\\
\hat{F}_{Y|X,n}(y|x) & \equiv\frac{\sum_{i}^{n}\mathbf{1}(Y_{i}\leq y)K(\frac{X_{i}-x}{h_{n}})}{\sum_{i}^{n}K(\frac{X_{i}-x}{h_{n}})},
\end{align*}
 and define 
\begin{equation}
\hat{\mathbb{F}}_{\mathbf{Y}|X,n}((y_{1l},y_{1u},y_{0l},y_{0u})|x)\equiv\left(\hat{F}_{1|1X,n}(y_{1l}|x),\hat{F}_{Y|X,n}(y_{1u}|x),\hat{F}_{Y|X,n}(y_{0l}|x),\hat{F}_{0|0X,n}(y_{0u}|x)\right)^{t}.\label{eq:kernel_endo}
\end{equation}
  Then, we have 
\begin{align*}
\Pi_{L}(\hat{\mathbb{F}}_{\mathbf{Y}|X,n})(y,\delta|x) & =\hat{F}_{1|1X,n}(y|x)-\hat{F}_{0|0X,n}(y-\delta|x),\\
\Pi_{U}(\hat{\mathbb{F}}_{\mathbf{Y}|X,n})(y,\delta|x) & =\hat{F}_{Y|X,n}(y|x)-\hat{F}_{Y|X,n}(y-\delta|x).
\end{align*}
We provide a description on how to estimate the influence functions
in Appendix \ref{sec:Influence}. 

\section{\label{sec:Inference} Asymptotic Theory }

\subsection{Inference under the Unconfoundedness Assumption }

We first develop the asymptotic theory for the kernel estimators in
(\ref{eq:kernel_unconfounded}). 

\begin{assumption}\label{assu:iid} $\{W_{i}\equiv(Y_{i},D_{i},X_{i}^{t})^{t}:i=1,2,...,n\}$
is a random sample. \end{assumption} 

\begin{assumption}\label{assu:dist_X} (i) The support of $X$, $\mathcal{X}$,
is a compact subset of $\mathbb{R}^{d_{x}}$; (ii) the distribution
of $X$ admits its density $f_{X}(\cdot)$ on $\mathcal{X}$ such
that $0<\inf_{x\in\mathcal{X}}f_{X}(x)<\sup_{x\in\mathcal{X}}f_{X}(x)<\infty$.
The density function $f_{X}(\cdot)$ is twice continuously differentiable
and $\sup_{x\in\mathcal{X}}|f_{X}^{(1)}(x)|$ and $\sup_{x\in\mathcal{X}}|f_{X}^{(2)}(x)|$
are bounded. \end{assumption}

\begin{assumption}\label{assu:smoothness} (i) The propensity score
function $p_{0}(x)$ is twice continuously differentiable and all
derivatives are uniformly bounded over $\mathcal{X}$; (ii) For each
$j\in\{0,1\}$, $F_{j|X}(y|x)$ is twice continuously differentiable
with respect to $x$ and all derivatives are uniformly bounded. \end{assumption}

\begin{assumption}\label{assu:kernel} The kernel function $K(\cdot)$
is a product of a univariate bounded kernel function $k(\cdot):\mathbb{R}\rightarrow\mathbb{R}_{+}$
such that $\int k(u)du=1$, $\int uk(u)du=0$, and $\int u^{2}k(u)du\equiv k_{2}<\infty$.
The support of the univariate kernel function is compact. \end{assumption}

\begin{assumption}\label{assu:bandwidth} The bandwidth $h_{n}$
satisfies the following conditions: (i) $h_{n}\rightarrow0$; (ii)
$nh_{n}^{d_{x}}\rightarrow\infty$; (iii) $nh_{n}^{d_{x}+4}\rightarrow0$
as $n\rightarrow\infty$. \end{assumption} 

Assumption \ref{assu:iid} is an i.i.d assumption on the sample. This
can be relaxed at a cost of more complicated proofs of the theoretical
results. Assumption \ref{assu:dist_X} imposes some degree of smoothness
on the distribution of $X$. This assumption is standard in the literature
on kernel estimation. Assumption \ref{assu:smoothness} requires that
the conditional distribution functions of $Y_{1}$ and $Y_{0}$ and
the propensity score functions be smooth enough. This condition, together
with Assumptions \ref{assu:kernel} and \ref{assu:bandwidth}, allows
to eliminate the bias of estimators of conditional distribution functions
of $Y_{1}$ and $Y_{0}$. Assumption \ref{assu:kernel} is also standard
in the literature, and there are several kernel functions that satisfy
this assumption (e.g., Epanechnikov, biweight, triweight kernels).
Assumption \ref{assu:bandwidth} restricts the rate of bandwidth.
The last condition $nh_{n}^{d_{x}+4}\rightarrow0$ is required to
eliminate the asymptotic bias of the kernel estimators. Note that
when the dimension of $X$ is large, one can use a higher-order kernel
to handle the bias term of kernel estimators. 

We first establish the weak convergence of the kernel estimators under
the unconfoundedness assumption. For given $x\in\mathcal{X}$, define
$\mathbb{F}_{\mathbf{Y}|X}(\cdot|x)\equiv\left(F_{1|X}(\cdot|x),F_{1|X}(\cdot|x),F_{0|X}(\cdot|x),F_{0|X}(\cdot|x)\right)^{t}$
and $\hat{\mathbb{F}}_{\mathbf{Y}|X,n}(\cdot|x)\equiv\left(\hat{F}_{1|X,n}(\cdot|x),\hat{F}_{1|X,n}(\cdot|x),\hat{F}_{0|X,n}(\cdot|x),\hat{F}_{0|X,n}(\cdot|x)\right)^{t}$. 

\begin{theorem}\label{thm:asymp_marginals} Suppose that Assumptions
\ref{assu:unconfounded} and \ref{assu:bound_propensity} hold. Let
$x\in int(\mathcal{X})$ be given. If Assumptions \ref{assu:iid}--\ref{assu:bandwidth}
hold, then, 
\[
\sqrt{nh_{n}^{d_{x}}}\left(\hat{\mathbb{F}}_{\mathbf{Y}|X,n}(\cdot|x)-\mathbb{F}_{\mathbf{Y}|X}(\cdot|x)\right)\Rightarrow\mathbb{G}_{x}\left(\cdot\right)\equiv\left(\mathbb{G}_{1,x}(\cdot),\mathbb{G}_{1,x}(\cdot),\mathbb{G}_{0,x}(\cdot),\mathbb{G}_{0,x}(\cdot)\right)^{t}\text{ in }\left(l^{\infty}(\mathcal{Y})\right)^{4},
\]
 where $\mathbb{G}_{1,x}(\cdot)$ and $\mathbb{G}_{0,x}(\cdot)$ are
mean zero Gaussian processes with covariance kernels $H_{1,x}(s,t)$
and $H_{0,x}(s,t)$, respectively, whose the forms are given in Appendix
\ref{sec:Proofs}, and $\mathbb{G}_{x}((y_{1l},y_{1u},y_{0l},y_{0u}))=\left(\mathbb{G}_{1,x}(y_{1l}),\mathbb{G}_{1,x}(y_{1u}),\mathbb{G}_{0,x}(y_{0l}),\mathbb{G}_{0,x}(y_{0u})\right)^{t}$.
\end{theorem}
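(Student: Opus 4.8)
The plan is to establish the joint weak convergence of the vector $\sqrt{nh_n^{d_x}}(\hat{\mathbb{F}}_{\mathbf{Y}|X,n}(\cdot|x)-\mathbb{F}_{\mathbf{Y}|X}(\cdot|x))$ by treating its two distinct components---$\hat{F}_{1|X,n}(\cdot|x)$ and $\hat{F}_{0|X,n}(\cdot|x)$---and then noting that the four-dimensional object is just a linear (coordinate-duplicating) map applied to the two-dimensional one, so weak convergence is preserved by the continuous mapping theorem. For each $j\in\{0,1\}$, the estimator $\hat{F}_{j|X,n}(y|x)$ is a Nadaraya--Watson-type ratio of kernel averages. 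First I would write $\hat{F}_{j|X,n}(y|x)$ as $\hat{A}_{j,n}(y|x)/\hat{B}_{j,n}(x)$ with numerator $\hat{A}_{j,n}(y|x)=\frac{1}{nh_n^{d_x}}\sum_i \mathbf{1}(Y_i\le y)\,\mathbf{1}(D_i=j)\,K(\tfrac{X_i-x}{h_n})$ (with the obvious modification $D_i$ or $1-D_i$) and denominator $\hat{B}_{j,n}(x)=\frac{1}{nh_n^{d_x}}\sum_i \mathbf{1}(D_i=j)\,K(\tfrac{X_i-x}{h_n})$. Under Assumptions~\ref{assu:iid}--\ref{assu:bandwidth}, $\hat{B}_{j,n}(x)\to \pi_j(x)f_X(x)$ in probability, where $\pi_1(x)=p_0(x)$, $\pi_0(x)=1-p_0(x)$, both bounded away from zero by Assumption~\ref{assu:bound_propensity}; and $\hat{A}_{j,n}(y|x)\to F_{j|X}(y|x)\pi_j(x)f_X(x)$ uniformly in $y$. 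A linearization (delta method for the ratio, uniformly in $y$) then reduces the problem to the asymptotic behavior of the empirical process
\[
\sqrt{nh_n^{d_x}}\Big(\hat{A}_{j,n}(y|x)-F_{j|X}(y|x)\hat{B}_{j,n}(x)\Big)\Big/\big(\pi_j(x)f_X(x)\big),
\]
plus negligible remainder terms.

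The core analytic step is to show that this centered, rescaled kernel-weighted empirical process converges weakly in $\ell^\infty(\mathcal Y)$ to a mean-zero Gaussian process $\mathbb{G}_{j,x}(\cdot)$. I would first dispose of the bias: by Assumptions~\ref{assu:smoothness}, \ref{assu:kernel}, and the bandwidth condition $nh_n^{d_x+4}\to 0$ in Assumption~\ref{assu:bandwidth}, a second-order Taylor expansion in the smoothing variable shows the bias is $O(h_n^2)=o((nh_n^{d_x})^{-1/2})$, uniformly in $y\in\mathcal Y$, so the process is asymptotically centered at zero. For the stochastic part I would (i) establish convergence of the finite-dimensional distributions via a Lyapunov/Lindeberg CLT for triangular arrays---the summands are i.i.d.\ within each $n$, with variance of order $h_n^{-d_x}$, and the Lindeberg condition follows from boundedness of $K$ and $nh_n^{d_x}\to\infty$; the limiting covariance between $y=s$ and $y=t$ is
\[
H_{j,x}(s,t)=\frac{\big(\int K(u)^2\,du\big)\,\big(F_{j|X}(s\wedge t|x)-F_{j|X}(s|x)F_{j|X}(t|x)\big)}{\pi_j(x)\,f_X(x)},
\]
and (ii) establish asymptotic tightness (stochastic equicontinuity) of the process indexed by $y\in\mathcal Y$. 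For tightness I would invoke a standard maximal-inequality / bracketing argument for kernel-weighted empirical processes over the VC-subgraph class $\{\mathbf 1(\cdot\le y):y\in\mathbb R\}$ (its product with the fixed kernel slice remains a bounded VC-type class), giving the requisite modulus-of-continuity control; here the monotonicity of $y\mapsto\mathbf 1(Y_i\le y)$ and continuity of $F_{j|X}(\cdot|x)$ (the potential outcomes are continuous random variables) ensure the bracketing entropy integral is finite. Independence of $\mathbb{G}_{1,x}$ and $\mathbb{G}_{0,x}$ follows because the numerator/denominator terms for $j=1$ involve $D_i=1$ and those for $j=0$ involve $D_i=0$, so the corresponding summands have zero product for every $i$, hence zero cross-covariance. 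Combining finite-dimensional convergence with tightness yields weak convergence of each marginal process, and joint convergence of $(\hat F_{1|X,n},\hat F_{0|X,n})$ follows because the two are asymptotically independent (jointly their finite-dimensional distributions converge to a block-diagonal Gaussian, and the pair is tight).

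The main obstacle I anticipate is the tightness/stochastic-equicontinuity argument for the kernel-weighted empirical process: one must handle the simultaneous $n$-dependence (the kernel shrinking, the normalization $\sqrt{nh_n^{d_x}}$) and the index $y$, which rules out a direct application of a classical Donsker theorem and requires a triangular-array empirical-process bound (e.g.\ via the results adapted from \citet{firpo2021uniform} or standard kernel-process lemmas). Once that is in hand, the bias control and the finite-dimensional CLT are routine, and the passage from the two-dimensional limit to the stated four-dimensional limit $\mathbb{G}_x=(\mathbb{G}_{1,x},\mathbb{G}_{1,x},\mathbb{G}_{0,x},\mathbb{G}_{0,x})^t$ is immediate by the continuous mapping theorem applied to the duplication map $(a,b)\mapsto(a,a,b,b)$ on $(\ell^\infty(\mathcal Y))^2\to(\ell^\infty(\mathcal Y))^4$. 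I would relegate the explicit formulas for $H_{1,x}$ and $H_{0,x}$ (including the exact constants from the kernel and the conditional-on-$D$ decomposition of the variance) to Appendix~\ref{sec:Proofs}, as indicated in the statement.
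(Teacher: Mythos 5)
Your proposal is correct and follows essentially the same route as the paper: the same Nadaraya--Watson ratio decomposition, the same second-order Taylor bias argument using $nh_n^{d_x+4}\to 0$, and the same triangular-array functional CLT for the kernel-weighted empirical process over the VC class $\{\mathbf{1}(\cdot\le y)\}$ (the paper packages your finite-dimensional-CLT-plus-tightness step by verifying conditions (i)--(v) of Theorem 10.6 in Pollard (1990), whose manageability condition is exactly the uniform-entropy control you describe), arriving at the same covariance kernels. The only cosmetic difference is that the paper passes to the four-dimensional limit by citing the Donsker property of products of Donsker classes rather than your (equally valid) duplication-map/continuous-mapping argument.
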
 

Theorem \ref{thm:asymp_marginals} is useful not only for deriving
the asymptotic distribution of the estimated bounds on $F_{\Delta|X}(\cdot|x)$,
but also for conducting uniform inference for some policy-relevant
parameter (e.g., quantile treatment effects). 

We now establish the asymptotic theory for confidence bands for the
identified set of $F_{\Delta|X}(\cdot|x)$. To this end, we consider
the following two hypotheses: $F_{\Delta|X}^{L}(\delta|x)=F_{\Delta|X}^{L,0}(\delta|x)$
and $F_{\Delta|X}^{U}(\delta|x)=F_{\Delta|X}^{U,0}(\delta|x)$, where
$F_{\Delta|X}^{L,0}(\delta|x)$ and $F_{\Delta|X}^{U,0}(\delta|x)$
are some fixed lower and upper bounds on the conditional distribution
of treatment effects, respectively. 

The main challenge to constructing uniform confidence bands, however,
is that the mappings $\phi_{L}$ and $\phi_{U}$ are not Hadamard
differentiable. To overcome this difficulty, we use Theorem 3.2 in
\citet{firpo2021uniform} that shows that these functionals are Hadamard
directionally differentiable. Based on this result, we employ the
inference method of \citet{fang2019inference} to establish the asymptotic
theory. 

Recall that $\Pi_{L}(\mathbb{F}_{\mathbf{Y}|X})(y,\delta|x)=\Pi_{U}(\mathbb{F}_{\mathbf{Y}|X})(y,\delta|x)=F_{1|X}(y|x)-F_{0|X}(y-\delta|x)$.
The following theorem establishes the limiting distributions of $\phi_{L}\left(\hat{\mathbb{F}}_{\mathbf{Y}|X,n}(\cdot|x)\right)$
and $\phi_{U}\left(\hat{\mathbb{F}}_{\mathbf{Y}|X,n}(\cdot|x)\right)$. 

\begin{theorem}\label{thm:Asymptotic_Dist of Derivative} Suppose
that Assumptions \ref{assu:unconfounded} and \ref{assu:bound_propensity}
hold. Let $x\in int(\mathcal{X})$ be given. If Assumptions \ref{assu:iid}--\ref{assu:bandwidth}
hold, then, 
\begin{equation}
\begin{aligned}\sqrt{nh_{n}^{d_{x}}}\left(\phi_{L}\left(\hat{\mathbb{F}}_{\mathbf{Y}|X,n}(\cdot|x)\right)-\phi_{L}\left(\mathbb{F}_{\mathbf{Y}|X}(\cdot|x)\right)\right) & \Rightarrow\phi_{L}^{'}\left(\Pi_{L}(\mathbb{F}_{\mathbf{Y}|X})(y,\delta|x);\mathbb{G}_{x}(\cdot)\right)\ \text{in }l^{\infty}\left(\mathcal{Y}^{4}\right),\\
\sqrt{nh_{n}^{d_{x}}}\left(\phi_{U}\left(\hat{\mathbb{F}}_{\mathbf{Y}|X,n}(\cdot|x)\right)-\phi_{U}\left(\mathbb{F}_{\mathbf{Y}|X}(\cdot|x)\right)\right) & \Rightarrow\phi_{U}^{'}\left(\Pi_{U}(\mathbb{F}_{\mathbf{Y}|X})(y,\delta|x)+1;\mathbb{G}_{x}(\cdot)\right)\ \text{in }l^{\infty}\left(\mathcal{Y}^{4}\right),
\end{aligned}
\label{eq:te_asymptotic}
\end{equation}
 where and $\mathbb{G}_{x}(\cdot)$ is the Gaussian process defined
in Theorem \ref{thm:asymp_marginals}. The forms of $\phi_{L}^{'}$
and $\phi_{U}^{'}$ are provided in Appendix \ref{sec:HD-Derivatives}.
\end{theorem}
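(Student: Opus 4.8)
The plan is to apply the functional delta method for Hadamard directionally differentiable maps of \citet{fang2019inference} to the weak convergence established in Theorem~\ref{thm:asymp_marginals}. Here $r_n=\sqrt{nh_n^{d_x}}$, and under Assumptions~\ref{assu:unconfounded}--\ref{assu:bandwidth} Theorem~\ref{thm:asymp_marginals} gives $r_n(\hat{\mathbb{F}}_{\mathbf{Y}|X,n}(\cdot|x)-\mathbb{F}_{\mathbf{Y}|X}(\cdot|x))\Rightarrow\mathbb{G}_x(\cdot)$ in $(l^\infty(\mathcal{Y}))^4$, with $\mathbb{G}_x$ a tight, Borel measurable Gaussian element whose sample paths are coordinatewise continuous (continuity following from the smoothness in Assumption~\ref{assu:smoothness}). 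It therefore suffices to show that $\phi_L$ and $\phi_U$ are Hadamard directionally differentiable at $\mathbb{F}_{\mathbf{Y}|X}(\cdot|x)$---tangentially to all of $(l^\infty(\mathcal{Y}))^4$, so that the tangent-cone requirement is vacuous---with directional derivatives equal to the expressions recorded in Appendix~\ref{sec:HD-Derivatives}; then (\ref{eq:te_asymptotic}) follows from \citet[Theorem~2.1]{fang2019inference}.

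The key step is to decompose each functional into elementary maps. For $\phi_L$, I would write $\phi_L=\varphi\circ\Pi_L$, where $\Pi_L:(l^\infty(\mathcal{Y}))^4\to l^\infty(\mathcal{Y}\times\mathcal{D})$ is the linear bounded map $\mathbb{F}\mapsto\Pi_L(\mathbb{F})(\cdot,\cdot|x)$ (hence fully Hadamard differentiable with derivative equal to itself), and $\varphi$ is the composition of (i) the partial-supremum map $g\mapsto(\delta\mapsto\sup_{y\in\mathcal{Y}}g(y,\delta))\in l^\infty(\mathcal{D})$, (ii) the affine translation $v\mapsto v-F_{\Delta|X}^{L,0}(\cdot|x)$, (iii) the pointwise absolute value $w\mapsto|w(\cdot)|$, and (iv) the supremum map $u\mapsto\sup_{\delta\in\mathcal{D}}u(\delta)$, where $\mathcal{Y}$ and $\mathcal{D}$ are the (compact) sets over which the suprema are taken. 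Maps (i) and (iv) are Hadamard directionally differentiable by Theorem~3.2 in \citet{firpo2021uniform} together with the classical results for the sup map: $\Pi_L(\mathbb{F}_{\mathbf{Y}|X})(y,\delta|x)=F_{1|X}(y|x)-F_{0|X}(y-\delta|x)$ is jointly continuous by Assumption~\ref{assu:smoothness}, so the relevant argmax correspondences are nonempty and compact and the derivatives take the usual supremum-over-argmax form. Map (ii) is affine and (iii) is Hadamard directionally differentiable pointwise. The functional $\phi_U$ admits the same decomposition with $\sup_y$ replaced by $\inf_y=-\sup_y(-\,\cdot\,)$ and the constant $+1$ absorbed into step (ii).

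I would then invoke the chain rule for Hadamard directional derivatives (\citet{fang2019inference}, after Shapiro) to conclude that $\phi_L$ and $\phi_U$ are Hadamard directionally differentiable at $\mathbb{F}_{\mathbf{Y}|X}(\cdot|x)$; propagating a direction through $\Pi_L$ (resp.\ $\Pi_U$) and then through (i)--(iv) produces exactly the derivatives $\phi_L'(\Pi_L(\mathbb{F}_{\mathbf{Y}|X})(y,\delta|x);\cdot)$ and $\phi_U'(\Pi_U(\mathbb{F}_{\mathbf{Y}|X})(y,\delta|x)+1;\cdot)$ of Appendix~\ref{sec:HD-Derivatives}. The delta method of \citet[Theorem~2.1]{fang2019inference}, combined with Theorem~\ref{thm:asymp_marginals}, then yields (\ref{eq:te_asymptotic}).

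The step I expect to be the main obstacle is the careful verification of Hadamard directional differentiability for the composition and the attendant bookkeeping: one must check at each stage that the output lies in a space on which the next map's directional derivative is defined, that the argmax and argmin sets entering the derivatives of the supremum and infimum maps are nonempty (so that each such derivative is a genuine supremum rather than a supremum over an empty set), and that joint continuity and uniform boundedness are preserved along the chain---all of which rest on compactness of $\mathcal{Y}$ and $\mathcal{D}$ and on the smoothness in Assumptions~\ref{assu:dist_X}--\ref{assu:smoothness}. Once these points are settled, translating them into the precise hypotheses of \citet{firpo2021uniform} and \citet{fang2019inference} and invoking the delta method is routine.
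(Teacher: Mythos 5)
Your proposal is correct and follows essentially the same route as the paper: the paper's proof likewise consists of invoking Theorem 2.1 of \citet{fang2019inference} applied to the weak convergence in Theorem \ref{thm:asymp_marginals}, with the Hadamard directional differentiability of $\phi_{L}$ and $\phi_{U}$ supplied by Theorem 3.2 of \citet{firpo2021uniform} (the paper's Appendix \ref{sec:HD-Derivatives} derives $\phi_{U}^{'}$ via $\inf f=-\sup(-f)$, exactly as you sketch). Your explicit decomposition of $\phi_{L}$ into the chain of elementary maps is simply an unpacking of what the cited Firpo--Galvao--Parker result delivers, so no substantive difference.
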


It is worth pointing out that the development of weak convergence
of the estimated bounds does not rely on some pointwise asymptotic
theory when establishing the asymptotic theory for the estimated Makarov
bounds. Instead, we first derive the weak convergence of the standard
kernel estimators for a fixed conditioning value and consider the
double-supremum as an operator to establish the weak convergence of
the estimated bounds, as in \citet{firpo2021uniform}. In doing so,
we can avoid imposing the uniqueness of $argsup$ and $arginf$ that
is needed for pointwise inference. 

Theorem \ref{thm:Asymptotic_Dist of Derivative} is a direct consequence
of Theorem 2.1 of \citet{fang2019inference}. The limiting distribution
presented in Theorem \ref{thm:Asymptotic_Dist of Derivative} is non-standard,
and therefore we need to rely on some resampling method to mimic the
limiting distribution and conduct inference. Since the functionals
$\phi_{L}$ and $\phi_{U}$ are not Hadamard differentiable but only
Hadamard directionally differentiable, the standard bootstrap fails
(cf. Theorem 3.1 of \citet{fang2019inference}). To resolve this problem,
we utilize the result of bootstrap validity that was proposed by \citet{firpo2021uniform}.
The next assumption imposes conditions on the bootstrap weight $B$: 

\begin{assumption}\label{assu:boot_weight} Let $B$ be a random
variable that is independent of the data $\mathcal{W}$ such that
$\mathbb{E}[B]=0$, $Var(B)=1$, and $\int_{0}^{\infty}\sqrt{\Pr(|B|>x)}dx<\infty$.
\end{assumption} 

The last condition in Assumption \ref{assu:boot_weight} is satisfied
if $\mathbb{E}\left[|B|^{2+\epsilon}\right]<\infty$ for some $\epsilon>0$.
One can use a standard normal random variable as the bootstrap weight
$B$. 

We employ the approach of \citet{fang2019inference} to approximate
the limiting distribution presented in Theorem \ref{thm:Asymptotic_Dist of Derivative},
which was also considered by \citet{firpo2021uniform}. Note that
it is required to consistently estimate the Hadamard directional derivatives
in (\ref{eq:phi_derivative}) and that the Hadamard directional derivatives
are defined in terms of the limit operator. To this end, we consider
a tuning sequence $(a_{n})$ that satisfies the conditions in the
following assumption: 

\begin{assumption}\label{assu:est_der_tuning}Let $(a_{n})$ be a
sequence of positive real numbers such that $a_{n}\downarrow0$ and
$a_{n}\sqrt{nh_{n}^{d_{x}}}\rightarrow\infty$.\end{assumption} 

Since the limiting distributions in Theorem \ref{thm:Asymptotic_Dist of Derivative}
are nonstandard, we use a bootstrap to mimic the limiting distributions.
The next theorem demonstrates that one can use the bootstrap procedure
in Section \ref{sec:Estimation} to approximate the asymptotic distributions
of $\phi_{L}^{'}\left(\Pi_{L}(\mathbb{F}_{\mathbf{Y}|X})(y,\delta|x);\mathbb{G}_{x}(\cdot)\right)$
and $\phi_{U}^{'}\left(\Pi_{U}(\mathbb{F}_{\mathbf{Y}|X})(y,\delta|x)+1;\mathbb{G}_{x}(\cdot)\right)$.
Let $\mathbb{\hat{F}}_{\mathbf{Y}|X,n}^{*}(\cdot|x)$ denote the vector
of simulated stochastic processes using estimated influence functions
in (\ref{eq:influence_exo}) in Appendix \ref{sec:Influence}. 

\begin{theorem}\label{thm:bootstrap} Suppose that Assumptions \ref{assu:unconfounded}
and \ref{assu:bound_propensity} hold. Let $x\in int(\mathcal{X})$
be given and Assumptions \ref{assu:iid}--\ref{assu:bandwidth},
\ref{assu:boot_weight}, and \ref{assu:est_der_tuning} hold. Then,
we have 
\begin{align*}
\widehat{\phi_{L}^{'}}\left(\Pi_{L}(\hat{\mathbb{F}}_{\mathbf{Y}|X,n})(y,\delta|x);\sqrt{nh_{n}^{d_{x}}}\mathbb{\hat{F}}_{\mathbf{Y}|X,n}^{*}(\cdot|x),a_{n}\right) & \Rightarrow\phi_{L}^{'}\left(\Pi_{L}(\mathbb{F}_{\mathbf{Y}|X})(y,\delta|x);\mathbb{G}_{x}(\cdot)\right),\\
\widehat{\phi_{U}^{'}}\left(\Pi_{U}(\hat{\mathbb{F}}_{\mathbf{Y}|X,n})(y,\delta|x)+1;\sqrt{nh_{n}^{d_{x}}}\mathbb{\hat{F}}_{\mathbf{Y}|X,n}^{*}(\cdot|x),a_{n}\right) & \Rightarrow\phi_{U}^{'}\left(\Pi_{U}(\mathbb{F}_{\mathbf{Y}|X})(y,\delta|x)+1;\mathbb{G}_{x}(\cdot)\right),
\end{align*}
 in $l^{\infty}\left(\mathcal{Y}^{4}\right)$, conditional on data.
The forms of $\widehat{\phi_{L}^{'}}$ and $\widehat{\phi_{U}^{'}}$
are provided in Appendix \ref{sec:HD-Derivatives}. \end{theorem}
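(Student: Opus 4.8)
The plan is to derive Theorem \ref{thm:bootstrap} as a consequence of the bootstrap-validity result for Hadamard directionally differentiable functionals in \citet{fang2019inference} (Theorem 3.2 there), combined with the consistency of the proposed derivative estimators $\widehat{\phi_L'}$ and $\widehat{\phi_U'}$. The three ingredients that must be verified are: (i) the bootstrapped process $\sqrt{nh_n^{d_x}}\,\hat{\mathbb{F}}^*_{\mathbf Y|X,n}(\cdot|x)$ is consistent for the Gaussian limit $\mathbb{G}_x(\cdot)$ of Theorem \ref{thm:asymp_marginals}, conditional on the data, in $(l^\infty(\mathcal Y))^4$; (ii) the functionals $\phi_L,\phi_U$ are Hadamard directionally differentiable at $\Pi_L(\mathbb{F}_{\mathbf Y|X})$ and $\Pi_U(\mathbb{F}_{\mathbf Y|X})+1$ with derivatives $\phi_L'(\cdot;\cdot)$, $\phi_U'(\cdot;\cdot)$; and (iii) the estimated directional derivatives, built from the estimated argsup/arginf sets with tuning sequence $a_n$, converge uniformly (in the appropriate sense) to the true directional derivatives evaluated at the bootstrap input. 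Given (i)--(iii), the continuous-mapping-type argument of \citet{fang2019inference} delivers the stated conditional weak convergence.

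First I would establish (i). The bootstrap draws are $\sqrt{nh_n^{d_x}}\,\hat{\mathbb{F}}^*_{\mathbf Y|X,n}(\cdot|x)=\sqrt{nh_n^{d_x}}\sum_i B_i\hat\Psi_i(\cdot|x)$, a multiplier (wild) bootstrap of the kernel estimators with estimated influence functions. Under Assumptions \ref{assu:iid}--\ref{assu:bandwidth} the kernel estimators $\hat{\mathbb{F}}_{\mathbf Y|X,n}$ admit an asymptotically linear representation $\sqrt{nh_n^{d_x}}(\hat{\mathbb{F}}_{\mathbf Y|X,n}-\mathbb{F}_{\mathbf Y|X})(\cdot|x)=\sqrt{nh_n^{d_x}}\sum_i\Psi_i(\cdot|x)+o_p(1)$ uniformly on $\mathcal Y$, which is precisely what underlies the proof of Theorem \ref{thm:asymp_marginals}. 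Standard multiplier-bootstrap theory for such locally-weighted empirical processes (combining Assumption \ref{assu:boot_weight} on $B$, the Donsker-type entropy control used for Theorem \ref{thm:asymp_marginals}, and consistency of $\hat\Psi_i$ for $\Psi_i$) gives the conditional weak convergence $\sqrt{nh_n^{d_x}}\,\hat{\mathbb{F}}^*_{\mathbf Y|X,n}(\cdot|x)\Rightarrow\mathbb{G}_x(\cdot)$ in $(l^\infty(\mathcal Y))^4$, in probability. Point (ii) is not proved from scratch: it is exactly Theorem 3.2 of \citet{firpo2021uniform}, which gives the explicit forms of $\phi_L',\phi_U'$ in terms of suprema over the argsup/arginf sets of $\Pi_L,\Pi_U$; I would cite it and reproduce the derivative formulas in Appendix \ref{sec:HD-Derivatives} as already promised in the text.

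The substance is (iii): showing that $\widehat{\phi_L'}\big(\Pi_L(\hat{\mathbb{F}}_{\mathbf Y|X,n});\,\sqrt{nh_n^{d_x}}\,\hat{\mathbb{F}}^*_{\mathbf Y|X,n},a_n\big)$ is a consistent estimator of the map $\mathbb{H}\mapsto\phi_L'(\Pi_L(\mathbb{F}_{\mathbf Y|X});\mathbb{H})$ evaluated at $\mathbb{H}=\mathbb{G}_x$. This is where the tuning sequence $a_n$ enters: the estimated argsup set is the set of $(y,\delta)$ at which $\Pi_L(\hat{\mathbb{F}}_{\mathbf Y|X,n})(y,\delta|x)$ comes within $a_n$ of its supremum over $y$, and one must show this set converges (in Hausdorff distance, or in the sense of set-valued consistency used by \citet{firpo2021uniform} and \citet{fang2019inference}) to the true argsup set. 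Assumption \ref{assu:est_der_tuning} — $a_n\downarrow0$ but $a_n\sqrt{nh_n^{d_x}}\to\infty$ — is exactly the rate condition ensuring that, with probability tending to one, the estimated set contains the true set (because the uniform estimation error $\sqrt{nh_n^{d_x}}$-rate is dominated by $a_n$) while not being too large (because $a_n\to0$). Combining set consistency with the uniform convergence of the bootstrap input from step (i) and continuity of the sup/inf operations over the nested sets yields the claimed convergence; the upper-bound functional $\phi_U$ is handled symmetrically with arginf in place of argsup and the $+1$ shift carried along harmlessly. Finally, assembling (i)--(iii) inside the framework of Theorem 3.2 of \citet{fang2019inference} (which is stated for abstract Hadamard directionally differentiable maps composed with a bootstrap-consistent input) gives the conditional weak convergence asserted in the theorem.

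The main obstacle I anticipate is step (iii), specifically the uniform (over $\delta\in\mathcal D$) control of the estimated argsup/arginf correspondences. Because $\phi_L$ involves an outer $\sup_\delta$ of $|\sup_y\Pi_L(\cdot)(y,\delta|x)-F^{L,0}_{\Delta|X}(\delta|x)|$, the relevant optimizing sets live in $(y,\delta)$-space and may fail to be singletons or even to vary continuously in $\delta$; one must show the estimated level-set converges to the true one uniformly and that the supremum over this possibly-irregular set of the (linear-in-$\mathbb{H}$) derivative is continuous in the set argument. The absolute value outside the inner sup also forces a case split according to the sign of $\sup_y\Pi_L-F^{L,0}_{\Delta|X}$ (and whether the outer $\max(\cdot,0)$ in the Makarov lower bound binds), so the directional derivative is genuinely a maximum of several pieces; keeping the estimated version aligned with the true one across these regimes, uniformly in $\delta$, is the delicate part. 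All of this is available in \citet{firpo2021uniform}; the work here is to check that the kernel-estimator setting (with rate $r_n=\sqrt{nh_n^{d_x}}$ rather than $\sqrt n$) satisfies their hypotheses, which Assumptions \ref{assu:iid}--\ref{assu:est_der_tuning} are designed to guarantee.
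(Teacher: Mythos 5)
Your proposal follows essentially the same route as the paper: the paper proves this theorem by verifying Assumptions A1--A3 of Theorem 4.3 in \citet{firpo2021uniform} --- weak convergence of the kernel estimators (Theorem \ref{thm:asymp_marginals}), its uniformity in the underlying probability measure (Lemma \ref{lem:uniform_weak}, via Theorem 2.8.3 of \citet{vvw1996}), and conditional weak convergence of the multiplier bootstrap process $\sqrt{nh_n^{d_x}}\hat{\mathbb{F}}^*_{\mathbf{Y}|X,n}(\cdot|x)$ (Lemma \ref{lem:Asymptotic_Marginal_Bootstrap}, proved by re-verifying the conditions of Pollard's functional CLT for the multiplier-weighted process) --- which matches your ingredients (i)--(iii), with the derivative-estimator consistency you identify as the substantive step (iii) absorbed into the cited Theorem 4.3 rather than verified from scratch. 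The one hypothesis you do not flag is the uniformity-in-$P$ condition (Firpo et al.'s Assumption A2), which the paper must and does check separately; otherwise your decomposition and the key lemma are the same as the paper's.
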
 

\subsection{Inference with an Endogenous Treatment}

We now develop the asymptotic theory when the treatment is endogenous.
Our asymptotic theory for an endogenous treatment focuses on the situation
where Assumptions \ref{assu:fsd1} and \ref{assu:fsd2} hold, and
thus, one can use the kernel estimators in Section \ref{sec:Estimation}.
To be concrete, define, for given $x\in\mathcal{X}$, $\mathbb{F}_{\mathbf{Y}|X}(\cdot|x)\equiv\left(F_{1|1X}(\cdot|x),F_{Y|X}(\cdot|x),F_{Y|X}(\cdot|x),F_{0|0X}(\cdot|x)\right)^{t}$
and $\hat{\mathbb{F}}_{\mathbf{Y}|X,n}(\cdot|x)\equiv\left(\hat{F}_{1|1X,n}(\cdot|x),\hat{F}_{Y|X,n}(\cdot|x),\hat{F}_{Y|X,n}(\cdot|x),\hat{F}_{0|0X,n}(\cdot|x)\right)^{t}$,
where each component of $\hat{\mathbb{F}}_{\mathbf{Y}|X,n}(\cdot|x)$
is given in (\ref{eq:kernel_endo}). Let $\Pi_{L}(\mathbb{F}_{\mathbf{Y}|X,n})(y,\delta|x)\equiv F_{1|1X}(y|x)-F_{0|0X}(y-\delta|x)$
and $\Pi_{U}(\mathbb{F}_{\mathbf{Y}|X,n})(y,\delta|x)\equiv F_{Y|X}(y|x)-F_{Y|X}(y-\delta|x)$. 

\begin{assumption}\label{assu:smooth_endo} $F_{1|1,X}(y|x)$, $F_{0|0,X}(y|x)$,
and $F_{Y|X}(y|x)$ are twice continuously differentiable with respect
to $x$ and all derivatives are uniformly bounded. \end{assumption}

\begin{theorem}\label{thm:endo_marginal_weak_limit} Suppose that
Assumptions \ref{assu:fsd1} and \ref{assu:fsd2} hold. Let $x\in int(\mathcal{X})$
be given and Assumptions \ref{assu:iid}, \ref{assu:dist_X}, \ref{assu:kernel},
\ref{assu:bandwidth}, and \ref{assu:smooth_endo} hold. Then, 
\[
\sqrt{nh_{n}^{d_{x}}}\left(\hat{\mathbb{F}}_{\mathbf{Y}|X,n}(\cdot|x)-\mathbb{F}_{\mathbf{Y}|X}(\cdot|x)\right)\Rightarrow\mathbb{G}_{x}^{e}(\cdot)\ \text{in }\left(l^{\infty}(\mathcal{Y})\right)^{4},
\]
 where $\mathbb{G}_{x}^{e}((y_{1l},y_{1u},y_{0l},y_{0u}))\equiv(\mathbb{G}_{1,x}^{e}(y_{1l}),\mathbb{G}_{Y,x}^{e}(y_{1u}),\mathbb{G}_{Y,x}^{e}(y_{0l}),\mathbb{G}_{0,x}^{e}(y_{0u}))^{t}$
and $\mathbb{G}_{1,x}^{e}$, $\mathbb{G}_{0,x}^{e}$, and $\mathbb{G}_{Y,x}^{e}$
are Gaussian processes with mean zero and covariance kernels being
$H_{1,x}^{e}(\cdot,\cdot)$, $H_{0,x}^{e}(\cdot,\cdot)$, and $H_{Y,x}^{e}(\cdot,\cdot)$,
respectively. The forms of the covariance kernels are given in Appendix
\ref{sec:Proofs}. 

In addition, 
\begin{equation}
\begin{aligned}\sqrt{nh_{n}^{d_{x}}}\left(\phi_{L}\left(\hat{\mathbb{F}}_{\mathbf{Y}|X,n}(\cdot|x)\right)-\phi_{L}\left(\mathbb{F}_{\mathbf{Y}|X}(\cdot|x)\right)\right) & \Rightarrow\phi_{L}^{'}\left(\Pi_{L}(\mathbb{F}_{\mathbf{Y}|X})(y,\delta|x);\mathbb{G}_{x}^{e}(\cdot)\right),\\
\sqrt{nh_{n}^{d_{x}}}\left(\phi_{U}(\hat{\mathbb{F}}_{\mathbf{Y}|X,n}(\cdot|x))-\phi_{U}^{e}(\mathbb{F}_{\mathbf{Y}|X}(\cdot|x))\right) & \Rightarrow\phi_{U}^{'}\left(\Pi_{U}(\mathbb{F}_{\mathbf{Y}|X})(y,\delta|x)+1;\mathbb{G}_{x}^{e}(\cdot)\right),
\end{aligned}
\label{eq:te_asymptotic_endo}
\end{equation}
 in $l^{\infty}\left(\mathcal{Y}^{4}\right)$. The forms of $\phi_{L}^{'}$
and $\phi_{U}^{'}$ are provided in Appendix \ref{sec:HD-Derivatives}.
\end{theorem}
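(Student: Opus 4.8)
The plan is to establish the two displays in sequence, the first being the essential input to the second. For the weak convergence of $\sqrt{nh_n^{d_x}}(\hat{\mathbb{F}}_{\mathbf{Y}|X,n}(\cdot|x)-\mathbb{F}_{\mathbf{Y}|X}(\cdot|x))$, the key observation is that all three components $\hat F_{1|1X,n}$, $\hat F_{0|0X,n}$, and $\hat F_{Y|X,n}$ are Nadaraya--Watson-type kernel estimators of conditional CDFs, of exactly the same structural form as the estimators $\hat F_{1|X,n}$ and $\hat F_{0|X,n}$ treated in Theorem \ref{thm:asymp_marginals}, differing only in which subsample enters the numerator and denominator. First I would write each estimator in the standard form, perform the usual linearization by expanding the ratio around its probability limit (using Assumption \ref{assu:dist_X} to bound the denominator away from zero), and show that the leading term is an empirical process indexed by $y\in\mathcal{Y}$ with an i.i.d.\ summand of the form $h_n^{-d_x}K((X_i-x)/h_n)$ times a bounded indicator-valued function. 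The bias term is $O(h_n^2)$ by Assumptions \ref{assu:kernel} and \ref{assu:smooth_endo} (two continuous derivatives in $x$, second-moment kernel, compact support), so $nh_n^{d_x+4}\to0$ from Assumption \ref{assu:bandwidth}(iii) kills it after scaling by $\sqrt{nh_n^{d_x}}$. Finite-dimensional convergence follows from a triangular-array CLT (Lyapunov), and asymptotic tightness in $l^\infty(\mathcal{Y})$ follows because the indicator class $\{\mathbf{1}(\cdot\le y):y\in\mathcal{Y}\}$ is VC with square-integrable envelope, so the standard maximal-inequality/stochastic-equicontinuity argument for kernel-weighted empirical processes (as in the proof of Theorem \ref{thm:asymp_marginals}) applies verbatim. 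Joint convergence of the four stacked coordinates then holds because they are continuous functionals of the same underlying empirical process, and the limit $\mathbb{G}_x^e$ is the centered Gaussian process whose covariance kernels $H_{1,x}^e$, $H_{0,x}^e$, $H_{Y,x}^e$ are computed as the limits of the scaled covariances of the linearized summands; I would record these forms and relegate the explicit expressions to Appendix \ref{sec:Proofs}.

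For the second display, the point is that under Assumptions \ref{assu:fsd1}--\ref{assu:fsd2} Corollary \ref{coro:marginal_ID_FSD12} gives $LB_{1|X}=F_{1|1X}$, $UB_{1|X}=UB_{0|X}=F_{Y|X}$, and $UB_{0|X}=\cdots$ wait---more precisely the bounds on $F_{\Delta|X}$ are built from exactly the coordinates stacked in $\mathbb{F}_{\mathbf{Y}|X}$, so $\phi_L$ and $\phi_U$ are the same KS-type double-sup functionals of $\mathbb{F}_{\mathbf{Y}|X}$ as in the unconfounded case, merely evaluated at a different (vector of) conditional distribution(s). By Theorem 3.2 of \citet{firpo2021uniform}, $\phi_L$ and $\phi_U$ are Hadamard directionally differentiable at $\Pi_L(\mathbb{F}_{\mathbf{Y}|X})$ and $\Pi_U(\mathbb{F}_{\mathbf{Y}|X})+1$, tangentially to the space of continuous functions on $\mathcal{Y}$, with derivatives $\phi_L'(\cdot;\cdot)$, $\phi_U'(\cdot;\cdot)$ given in Appendix \ref{sec:HD-Derivatives}. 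The maps $\mathbb{F}\mapsto\Pi_L(\mathbb{F})$ and $\mathbb{F}\mapsto\Pi_U(\mathbb{F})+1$ are linear and continuous in the sup-norm, hence (fully) Hadamard differentiable, so by the chain rule for Hadamard directional differentiability the composition $\mathbb{F}\mapsto\phi_L(\mathbb{F})$ (resp.\ $\phi_U$) is Hadamard directionally differentiable with derivative obtained by composing. Combining this with the first display and the Delta method for Hadamard directionally differentiable maps (Theorem 2.1 of \citet{fang2019inference}) yields the stated weak limits $\phi_L'(\Pi_L(\mathbb{F}_{\mathbf{Y}|X})(y,\delta|x);\mathbb{G}_x^e(\cdot))$ and $\phi_U'(\Pi_U(\mathbb{F}_{\mathbf{Y}|X})(y,\delta|x)+1;\mathbb{G}_x^e(\cdot))$ in $l^\infty(\mathcal{Y}^4)$, with the Gaussian process now being $\mathbb{G}_x^e$ rather than $\mathbb{G}_x$.

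The only genuinely new work relative to Theorems \ref{thm:asymp_marginals} and \ref{thm:Asymptotic_Dist of Derivative} is in the first display: verifying that $\hat F_{1|1X,n}$, $\hat F_{0|0X,n}$, and $\hat F_{Y|X,n}$ jointly satisfy the same empirical-process linearization, which requires the smoothness of $F_{1|1,X}$, $F_{0|0,X}$, and $F_{Y|X}$ in $x$ (Assumption \ref{assu:smooth_endo}) in place of Assumption \ref{assu:smoothness}, and computing the new covariance kernels. The main obstacle---though a mild one---is bookkeeping: because $UB_{1|X}$ and $LB_{0|X}$ are \emph{both} equal to $F_{Y|X}$, the two middle coordinates of $\mathbb{G}_x^e$ are driven by the \emph{same} process $\mathbb{G}_{Y,x}^e$, so one must be careful that the cross-covariance structure is correctly propagated through $\Pi_U$ (where the same $\hat F_{Y|X,n}$ is evaluated at $y$ and at $y-\delta$, inducing perfect within-coordinate dependence) into the derivative $\phi_U'$; this affects only the form of the limiting process, not the validity of the argument, and the degeneracy is in fact benign for the directional Delta method since that method does not require a nondegenerate limit.
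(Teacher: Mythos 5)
Your proposal is correct and follows essentially the same route as the paper: linearize each Nadaraya--Watson-type estimator, kill the $O(h_n^2)$ bias using Assumptions \ref{assu:kernel}, \ref{assu:bandwidth}, and \ref{assu:smooth_endo}, establish weak convergence of the kernel-weighted empirical processes exactly as in Theorem \ref{thm:asymp_marginals} (the paper verifies the conditions of Pollard's functional CLT, which packages your fidi-plus-tightness step into one theorem), stack the coordinates, and then apply the Hadamard-directional delta method of \citet{fang2019inference} together with Theorem 3.2 of \citet{firpo2021uniform} for the second display. Your remark that the two middle coordinates are both driven by $\mathbb{G}_{Y,x}^{e}$ and that this degeneracy is harmless for the directional delta method is a correct and useful observation that the paper leaves implicit.
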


We simulate the limiting distribution provided in Theorem \ref{thm:endo_marginal_weak_limit}
in a similar way to the exogenous treatment case. Let $\mathbb{\hat{F}}_{\mathbf{Y}|X,n}^{*}(\cdot|x)$
denote the vector of simulated stochastic processes using estimated
influence functions in (\ref{eq:Influence_endo}) in Appendix \ref{sec:Influence}.
The following theorem shows the validity of the bootstrap procedure: 

\begin{theorem}\label{thm:endo_bootstrap} Suppose that conditions
in Theorem \ref{thm:endo_marginal_weak_limit} are satisfied. In addition,
if Assumptions \ref{assu:boot_weight} and \ref{assu:est_der_tuning}
also hold, then, we have 
\begin{align*}
\widehat{\phi_{L}^{'}}\left(\Pi_{L}(\hat{\mathbb{F}}_{\mathbf{Y}|X,n})(y,\delta|x);\sqrt{nh_{n}^{d_{x}}}\mathbb{\hat{F}}_{\mathbf{Y}|X,n}^{*}(\cdot|x),a_{n}\right) & \Rightarrow\phi_{L}^{'}\left(\Pi_{L}(\mathbb{F}_{\mathbf{Y}|X})(y,\delta|x);\mathbb{G}_{x}^{e}(\cdot)\right),\\
\widehat{\phi_{U}^{'}}\left(\Pi_{U}(\hat{\mathbb{F}}_{\mathbf{Y}|X,n})(y,\delta|x)+1;\sqrt{nh_{n}^{d_{x}}}\mathbb{\hat{F}}_{\mathbf{Y}|X,n}^{*}(\cdot|x),a_{n}\right) & \Rightarrow\phi_{U}^{'}\left(\Pi_{U}(\mathbb{F}_{\mathbf{Y}|X})(y,\delta|x)+1;\mathbb{G}_{x}^{e}(\cdot)\right),
\end{align*}
 in $l^{\infty}\left(\mathcal{Y}^{4}\right)$, conditional on data.
The forms of $\widehat{\phi_{L}^{'}}$ and $\widehat{\phi_{U}^{'}}$
are provided in Appendix \ref{sec:HD-Derivatives}. \end{theorem}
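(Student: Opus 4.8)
The plan is to run the argument used to prove Theorem \ref{thm:bootstrap} essentially verbatim, the only substantive change being that the underlying kernel estimators are now $\hat{F}_{1|1X,n}$, $\hat{F}_{Y|X,n}$, $\hat{F}_{0|0X,n}$ from (\ref{eq:kernel_endo}) and their associated influence functions, rather than the unconfounded-case estimators. I would organize the proof in three steps. First, I would establish the conditional-on-data weak convergence of the multiplier bootstrap at the level of the marginal objects, namely $\sqrt{nh_n^{d_x}}\,\hat{\mathbb{F}}^{*}_{\mathbf{Y}|X,n}(\cdot|x)\Rightarrow\mathbb{G}^{e}_{x}(\cdot)$ in $\left(l^{\infty}(\mathcal{Y})\right)^{4}$. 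This rests on the asymptotically linear (influence-function) representation of the three kernel conditional-CDF estimators, which is exactly what is produced en route to Theorem \ref{thm:endo_marginal_weak_limit}, together with the uniform consistency of the estimated influence functions in Appendix \ref{sec:Influence}. Given Assumption \ref{assu:boot_weight} on $B$, a conditional multiplier central limit theorem over the Donsker-type classes indexed by $y\in\mathcal{Y}$ delivers the conditional weak limit, and the limit is the same Gaussian process $\mathbb{G}^{e}_{x}$ appearing in Theorem \ref{thm:endo_marginal_weak_limit}.

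Second, I would invoke the bootstrap delta method for Hadamard directionally differentiable maps. By Theorem 3.2 of \citet{firpo2021uniform}, the KS-type functionals $\phi_{L}$ and $\phi_{U}$ (double suprema/infima composed with $\Pi_{L}$, $\Pi_{U}$) are Hadamard directionally differentiable, so combining (i) $\sqrt{nh_n^{d_x}}\left(\hat{\mathbb{F}}_{\mathbf{Y}|X,n}-\mathbb{F}_{\mathbf{Y}|X}\right)\Rightarrow\mathbb{G}^{e}_{x}$ from Theorem \ref{thm:endo_marginal_weak_limit}, (ii) the conditional bootstrap consistency from Step 1, and (iii) the consistency of the numerical-derivative estimators $\widehat{\phi_{L}^{'}}(\,\cdot\,;\,\cdot\,,a_{n})$ and $\widehat{\phi_{U}^{'}}(\,\cdot\,;\,\cdot\,,a_{n})$, Theorem 3.2 of \citet{fang2019inference} yields the stated conditional weak convergence, with $\mathbb{G}^{e}_{x}$ in place of the $\mathbb{G}_{x}$ of Theorem \ref{thm:bootstrap}.

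Third, and this is where the main work lies, I would verify (iii): that $\widehat{\phi_{L}^{'}}\left(\Pi_{L}(\hat{\mathbb{F}}_{\mathbf{Y}|X,n})(y,\delta|x);\,\cdot\,,a_{n}\right)$ is uniformly consistent for $\phi_{L}^{'}\left(\Pi_{L}(\mathbb{F}_{\mathbf{Y}|X})(y,\delta|x);\,\cdot\,\right)$ on compact sets of directions, and likewise for $\phi_{U}^{'}$. The derivative estimators are built from plug-in estimates of the argsup set in $y$ (for each $\delta$) and of the argsup set in $\delta$ of $\Pi_{L}$, thickened by the tuning level $a_{n}$, so the task reduces to showing these thickened sets are Hausdorff-consistent for the true argsup/arginf sets. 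The condition $a_{n}\sqrt{nh_n^{d_x}}\rightarrow\infty$ from Assumption \ref{assu:est_der_tuning}, combined with the uniform rate $\sqrt{nh_n^{d_x}}$ at which $\Pi_{L}(\hat{\mathbb{F}}_{\mathbf{Y}|X,n})$ approximates $\Pi_{L}(\mathbb{F}_{\mathbf{Y}|X})$ (from Step 1), rules out under-selection, so the true argsup is contained in the estimated set with probability approaching one; while $a_{n}\downarrow0$, continuity of $\Pi_{L}(\mathbb{F}_{\mathbf{Y}|X})(\cdot,\cdot|x)$ in $(y,\delta)$ (implied by the smoothness in Assumption \ref{assu:smooth_endo}), and compactness of $\mathcal{Y}$ and $\mathcal{D}$ rule out over-selection. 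The key point is that \emph{no} uniqueness of the argsup or arginf is needed. Once Hausdorff consistency is established, continuity of the map sending a continuous function and a compact set to the supremum of the function over the set transfers the conditional weak limit of $\sqrt{nh_n^{d_x}}\hat{\mathbb{F}}^{*}_{\mathbf{Y}|X,n}$ through the derivative formulas of Appendix \ref{sec:HD-Derivatives}, giving $\phi_{L}^{'}\left(\Pi_{L}(\mathbb{F}_{\mathbf{Y}|X})(y,\delta|x);\mathbb{G}^{e}_{x}(\cdot)\right)$ and $\phi_{U}^{'}\left(\Pi_{U}(\mathbb{F}_{\mathbf{Y}|X})(y,\delta|x)+1;\mathbb{G}^{e}_{x}(\cdot)\right)$ as the limits. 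The only endogenous-case wrinkle to check is that $\Pi_{U}(\mathbb{F}_{\mathbf{Y}|X})(y,\delta|x)=F_{Y|X}(y|x)-F_{Y|X}(y-\delta|x)$ is a difference of the \emph{same} conditional CDF, so the relevant Gaussian increment in the derivative formula is $\mathbb{G}^{e}_{Y,x}(y)-\mathbb{G}^{e}_{Y,x}(y-\delta)$; this enters without further complication.
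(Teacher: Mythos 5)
Your proposal is correct and reaches the theorem by the same overall strategy the paper uses: the paper's own proof of Theorem \ref{thm:endo_bootstrap} is a one-line reduction to the proof of Theorem \ref{thm:bootstrap}, which in turn verifies Assumptions A1--A3 of \citet{firpo2021uniform} (weak convergence of the marginals at rate $\sqrt{nh_n^{d_x}}$, uniformity of that convergence in the underlying probability measure via Lemma \ref{lem:uniform_weak}, and conditional multiplier-bootstrap convergence via Lemma \ref{lem:Asymptotic_Marginal_Bootstrap}) and then invokes their Theorem 4.3 as a package. Your Steps 1 and 2 match this exactly, with $\hat{F}_{1|1X,n}$, $\hat{F}_{Y|X,n}$, $\hat{F}_{0|0X,n}$ and $\mathbb{G}^{e}_{x}$ replacing their exogenous-case counterparts. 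Where you genuinely diverge is Step 3: instead of citing Firpo and Pinto's Theorem 4.3, you re-derive its content by directly proving Hausdorff consistency of the $a_n$-thickened argsup/arginf sets (no under-selection from $a_n\sqrt{nh_n^{d_x}}\to\infty$, no over-selection from $a_n\downarrow 0$ plus continuity and compactness) and then feeding the result into the Fang--Santos bootstrap delta method. This buys a more self-contained and transparent account of why the tuning sequence works and why no uniqueness of the maximizers is needed, at the cost of redoing work the paper outsources; correspondingly, you drop the uniform-in-$P$ verification (the analogue of Lemma \ref{lem:uniform_weak}), which is a hypothesis of the cited Theorem 4.3 but is not needed on your direct, fixed-DGP route. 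Your closing observation that $\Pi_U(\mathbb{F}_{\mathbf{Y}|X})(y,\delta|x)$ involves the same process $\mathbb{G}^{e}_{Y,x}$ at two arguments is the right (and only) endogenous-case wrinkle to flag.
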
 

We provide several extensions of the main results in this section
in Appendix. We extend the results of \citet{abrevaya2015estimating}
who consider estimating conditional average treatment effects on a
subset of covariates to the conditional distribution of treatment
effects (Appendix \ref{sec:semipara}). This result may be practically
relevant when the number of covariates is large. We also discuss how
to adopt the results to conduct global hypothesis testing (Appendix
\ref{sec:hypothesis_testing}). 

It is worth noting that \citet{firpo2019partial} provide formal definitions
of pointwise and uniform sharpness of bounds on distribution functions
and show that the Makarov bounds are not uniformly sharp. The inference
methods developed in this paper are based on pointwise sharp bounds
on the conditional distribution of treatment effects and are valid
uniformly over the support of treatment effects. Inference for uniformly
sharp bounds on the distribution of treatment effects is left as future
research. 

\section{\label{sec:MC} Monte Carlo Simulation }

We conduct a set of simulations to investigate the performance of
the bootstrap in finite samples. To this end, the following DGP is
considered: 
\begin{align*}
Y_{1} & =\mu_{1}+X\beta_{1}+(\phi_{1}+X\gamma_{1})U_{1},\\
Y_{0} & =\mu_{0}+X\beta_{0}+(\phi_{0}+X\gamma_{0})U_{0},\\
D & =\mathbf{1}(X\alpha\geq V),
\end{align*}
 where $X=2\tilde{X}-1$ with $\tilde{X}$ being an uniform a random
variable on $[0,1]$, $U\equiv(U_{1},U_{0})^{t}\sim N\left(\begin{pmatrix}0\\
0
\end{pmatrix},\begin{pmatrix}1 & 0\\
0 & 1
\end{pmatrix}\right)$, and $V\sim N(0,1)$. The conditional treatment effect on $X=x$
is defined as $(\mu_{1}-\mu_{0})+x(\beta_{1}-\beta_{0})+((\phi_{1}+X\gamma_{1})U_{1}-(\phi_{0}+X\gamma_{0})U_{0})$.
The parameter values are set as follows: $\beta_{1}=\gamma_{1}=1$,
$\beta_{0}=\gamma_{0}=0.9$, $\phi_{1}=\phi_{0}=1$, and $\alpha=1$.
We consider various values for parameter vector $(\mu_{1},\mu_{0})^{t}$
to investigate the performance of a KS test statistic under the null
and alternative hypotheses. Specifically, to investigate the performance
of the KS test under null hypotheses, we consider the following hypothesis:
\[
H_{0}:F_{\Delta|X}^{L}(\delta|x=0)=\left(2\cdot\Phi\left(\frac{\delta}{2}\right)-1\right)\cdot\mathbf{1}(\delta\geq0)\text{ for all }\delta,
\]
 where $\Phi(\cdot)$ is the standard normal distribution function. 

The lower bound in the null hypothesis is one derived by \citet{frank1987best},
and the null hypothesis is true if and only if $\mu_{1}=\mu_{0}=0$.
The KS statistic for this null is constructed as follows: 
\[
KS_{n}\equiv\sup_{\delta}\sqrt{nh_{n}}\Bigg|F_{\Delta|X}^{L}(\delta|x=0)-\left(2\cdot\Phi\left(\frac{\delta}{2}\right)-1\right)\cdot\mathbf{1}(\delta\geq0)\Bigg|.
\]
 To investigate the performance of the KS test under some alternative
hypotheses, we consider the cases where $(\mu_{1},\mu_{0})^{t}=(\mu,0)^{t}$
for $\mu\in\{-1,1\}$. The bandwidth is chosen to be $h_{n}=1.06\times s.d(X)\times n^{-1/6}$,
where $s.d(X)$ denotes the (sample) standard deviation of $X$.

It is worth emphasizing that the asymptotic distribution of the KS
test statistic is nonstandard and may not be uniformly valid with
respect to the DGP. For this reason, it is important to investigate
whether the KS statistic performs well with various choices for $(a_{n})$.
Specifically, we consider several rates at which $a_{n}$ grows to
the infinity: (i) $a_{n}=c\times\log\left(\log\left(nh_{n}\right)\right)/\sqrt{nh_{n}}$,
(ii) $a_{n}=c\times\sqrt{\log\left(nh_{n}\right)}/\sqrt{nh_{n}}$,
and (iii) $a_{n}=c\times\left(nh_{n}\right)^{1/6}/\sqrt{nh_{n}}$
for some $c>0$. These choices of $(a_{n})$ satisfy Assumption \ref{assu:est_der_tuning}.
We consider various values of $c$ ranging from $0.1$ to $0.5$ to
see whether the finite-sample performance of the KS test is sensitive
to the choice of $c$.\footnote{\citet{firpo2021uniform} suggest using $c=0.2$ with $a_{n}=c\times\log\left(\log\left(nh_{n}\right)\right)/\sqrt{nh_{n}}$
in our context. } 

The sample size $n$ is set to be 500, and the number of bootstrap
iterations is 500. The bootstrap weight $B$ is drawn from the standard
normal distribution, and all simulation results are obtained from
500 iterations. The nominal level is set to be 0.05. 

Table \ref{tab:sim_rp} presents the simulation results for rejection
probabilities. We find that the rejection probability under $H_{0}$
tends to decrease as $c$ increases, except for the case of $\sqrt{nh_{n}}a_{n}\propto\left(nh_{n}\right)^{1/6}$.\footnote{We also considered larger values of $c$ than $0.5$, but they are
not reported here. The simulation results with those values of $c$
suggest that the resulting confidence bands would be too conservative
(i.e., the rejection probabilities under the null hypothesis are relatively
smaller than the nominal rate). As a result, we do \textit{not }recommend
using a too large value of $c$, based on our simulation results.
It would be an interesting question how to choose $c$ or the rate
of $(a_{n})$ in a data-dependent way, but this is far beyond the
scope of this paper. We therefore leave this interesting and important
question for future research.} The KS statistic performs well in finite samples for various values
of $(a_{n})$ in the sense that the rejection probability under $H_{0}$
is close to the nominal probability and that the rejection probabilities
under $H_{1}$ are large in general. 

\begin{table}[h]
\caption{\label{tab:sim_rp} Rejection Probabilities, $n=500$, $B=500$, $x=Q_{X}(0.5)$}

\bigskip{}

\begin{centering}
\begin{tabular}{cccccccccc}
\hline 
$c$\textbackslash$a_{n}$ & \multicolumn{3}{c}{$c\cdot\log\left(\log\left(nh_{n}\right)\right)/\sqrt{nh_{n}}$} & \multicolumn{3}{c}{$c\sqrt{\log\left(nh_{n}\right)}/\sqrt{nh_{n}}$} & \multicolumn{3}{c}{$c\left(nh_{n}\right)^{1/6}/\sqrt{nh_{n}}$}\tabularnewline
\cline{2-10} \cline{3-10} \cline{4-10} \cline{5-10} \cline{6-10} \cline{7-10} \cline{8-10} \cline{9-10} \cline{10-10} 
 & $\mu=0$ & $\mu=-1$ & $\mu_{1}=1$ & $\mu=0$ & $\mu=-1$ & $\mu_{1}=1$ & $\mu=0$ & $\mu=-1$ & $\mu_{1}=1$\tabularnewline
\hline 
0.1 & 0.078 & 0.998 & 0.824 & 0.068 & 0.996 & 0.816 & 0.044 & 0.996 & 0.816\tabularnewline
0.2 & 0.066 & 0.986 & 0.806 & 0.060 & 0.992 & 0.782 & 0.064 & 0.992 & 0.782\tabularnewline
0.3 & 0.060 & 0.994 & 0.778 & 0.058 & 0.986 & 0.756 & 0.050 & 0.992 & 0.754\tabularnewline
0.4 & 0.058 & 0.994 & 0.734 & 0.044 & 0.992 & 0.748 & 0.056 & 0.988 & 0.746\tabularnewline
0.5 & 0.050 & 0.988 & 0.772 & 0.040 & 0.998 & 0.704 & 0.058 & 0.984 & 0.738\tabularnewline
\hline 
\end{tabular}
\par\end{centering}
\bigskip{}

Note: The nominal level is set to be 0.05. The case of $\mu=0$ is
where the null hypothesis is true. 
\end{table}

\section{\label{sec:Empirical} Empirical Application: The Effect of 401(k)
Plans on Net Financial Assets}

In this section, we provide an empirical example to illustrate the
usefulness of the methods proposed in this paper. We revisit the empirical
question on the effect of participation in 401(k) plans on net financial
assets investigated by many studies in the literature (e.g., \citet{Abadie2003,Chernozhukov2004,Wuethrich2019,SantAnna2022}).
Our main goals in this empirical application are twofold. First, we
empirically show that the stochastic dominance assumptions (Assumptions
\ref{assu:fsd1} and \ref{assu:fsd2}) can provide considerable identifying
power. Second, we complement the existing results that there is substantial
heterogeneity in treatment effects across income groups by estimating
the bounds on the conditional distribution of treatment effects on
income levels without an instrumental variable. In doing so, we complement
the empirical results on the effect of 401(k) plans on net financial
assets documented in the literature by providing estimation results
on the distribution of the treatment effect. Since our focus is on
verifying the identifying power of the stochastic dominance assumptions
and potential heterogeneity in the treatment effect across different
subpopulations, we do not report confidence bands for clear illustration.

The U.S. introduced several tax-deferred retirement plans, including
401(k) plans and individual retirement accounts (IRAs), in the early
1980s. These retirement plans can be used as a way to accumulate individual
assets. Many papers in the literature have considered how those tax-deferred
retirement plans affect asset accumulation or savings. The main challenge
with identifying and estimating the causal effect of 401(k) participation
on assets is that participation in 401(k) plans is endogenously determined.
Furthermore, the effect of 401(k) plans on net financial assets is
heterogeneous across income levels, as shown by \citet{Chernozhukov2004}.
Motivated by the empirical results of \citet{Chernozhukov2004}, we
focus on the distribution of treatment effects conditional on an individual's
income. Moreover, while it is common to use the eligibility for 401(k)
as an instrumental variable to point identify some distributional
effects (e.g., quantile treatment effects), our identification and
estimation strategies do not rely on such an instrumental variable. 

We use the data from \citet{SantAnna2022} for this empirical analysis.
The original dataset contains 9,910 households from the 1991 Survey
of Income and Program Participation. The dependent variable is the
amount of net financial assets measured in ten thousand dollars. We
exclude observations with a value of the dependent variable higher
than the 0.99 sample quantile and lower than the 0.01 sample quantile
of the net financial assets. This results in a sample of 9,712 households.
The treatment variable is a binary variable indicating whether a household
participates in 401(k) plans. To investigate the potential heterogeneity
across income levels, we use the income variable as the covariate
of interest. For stability of estimation, we standardize the original
variable of income and consider the sample mean and various quantiles
of income. Table \ref{tab:emp2_summary} reports the summary statistics
of the data. 

\begin{table}[H]
\caption{\label{tab:emp2_summary} Summary Statistics of the Data}

\bigskip{}

\begin{centering}
\begin{tabular}{ccccccc}
\hline 
Variables & Mean & Median & S.D. & Min & Max & Obs.\tabularnewline
\hline 
Net financial assets & 1.4506 & 0.1499 & 3.1991 & -2.350 & 21.995 & 9,712\tabularnewline
Treatment & 0.26 & 0 & 0.4387 & 0 & 1 & 9,712\tabularnewline
Income & 3.661 & 3.120 & 2.379 & 0.003 & 19.299 & 9,712\tabularnewline
Age & 40.969 & 40 & 10.311 & 25 & 64 & 9,712\tabularnewline
\hline 
\end{tabular}
\par\end{centering}
\bigskip{}

Note: The net financial assets and income are measured in \$10,000. 
\end{table}

We first discuss the validity of Assumptions \ref{assu:fsd1} and
\ref{assu:fsd2} in this empirical example. It is well known that
the preference for saving is heterogeneous in that some people have
a stronger preference for saving than others. This leads to the nonrandom
selection into participation in 401(k) plans or other tax-deferred
retirement plans (e.g., \citet{Chernozhukov2004}). Based on this
observation, it is likely that people who participate in 401(k) plans
would have a stronger preference for saving than those who do not
participate. Therefore, the net financial assets of people with a
strong preference for saving tend to be larger than those of people
with a weak preference for saving, suggesting that it may be plausible
to impose Assumption \ref{assu:fsd1} on the model. 

Assumption \ref{assu:fsd2} is consistent with the empirical example
as most tax-deferred retirement plans, including 401(k) plans, by
themselves increase the amount of assets that an individual possess.
Therefore, regardless of whether an individual participates in 401(k)
plans or not, the potential net financial assets that one would have
had if she participated in 401(k) plans are likely to be larger than
those she would have had if she did not participate in 401(k). As
a result, it is plausible to impose Assumption \ref{assu:fsd2} on
the model. 

For estimation, we set $h_{n}=1.06\times n^{-1/(5+d_{x})}$ and $a_{n}=0.2\times\log\left(\log\left(nh_{n}^{d_{x}}\right)\right)/\sqrt{nh_{n}^{d_{x}}}$
with $d_{x}=1$. We denote the $\tau$-th (sample) quantile of income
by $Q_{income}(\tau)$. 

We note that when Assumptions \ref{assu:fsd1} and \ref{assu:fsd2}
are not imposed, the estimated bounds are the logical ones, regardless
of the conditioning value of the income.\footnote{By the logical bounds, we mean that the lower and upper bounds are
equal to 0 and 1, respectively, for all $\delta\in Supp(\Delta|X=x)$. } On the other hand, the estimated bounds under Assumptions \ref{assu:fsd1}
and \ref{assu:fsd2} are informative in the sense that they are not
the logical bounds. This indicates that the stochastic dominance assumptions
have considerable identifying power. 

We find substantial heterogeneity in the distribution of treatment
effects across different values of the income. Figure \ref{fig:endo_treat}
compares the estimated bounds on the conditional distribution of treatment
effects at the 0.2 and 0.8 quantiles of the income. Specifically,
the star-marked lines are the bounds on the conditional distribution
of treatment effects given the 0.2 quantile of the income. The circle-marked
lines are the bounds on the conditional distribution of treatment
effects given the 0.8 quantile of income. We find that the lower bound
conditional on the 0.2 quantile of the income is larger than the lower
bound conditional on the 0.8 quantile of the income. Conversely, the
upper bound conditional on the 0.8 quantile of the income is larger
than the upper bound conditional on the 0.2 quantile of the income
for all $\delta\in[-24,24]$ (i.e., from -\$240,000 to \$240,000).
When considering the bounds on $\Pr\left(Y_{1}-Y_{0}\leq1|X=x\right)$
for $x\in\left\{ Q_{income}(0.2),Q_{income}(0.8)\right\} $, the estimation
results show that $\Pr\left(Y_{1}-Y_{0}\leq1|X=Q_{income}(0.2)\right)\in\left[0.5804,1\right]$
and that $\Pr\left(Y_{1}-Y_{0}\leq1|X=Q_{income}(0.8)\right)\in\left[0.0638,1\right]$.
These estimated bounds suggest that the proportion of individuals
who experience a positive treatment effect larger than \$10,000 among
those with the income being equal to the 0.2 sample quantile is at
most 41.96\%. The proportion among those with the income being equal
to the 0.8 sample quantile is at most 93.62\%. As a result, there
may be a possibility that the proportion of individuals who experience
a certain level of positive treatment effect is larger when considering
a higher level of income. This argument is in part consistent with
the finding of \citet{Chernozhukov2004} that the quantile treatment
effects of 401(k) plans on net financial wealth tend to increase as
the income level increases (see Figure 2 in \citet{Chernozhukov2004}). 

Figure \ref{fig:comparison_all_quantiles} compares the estimated
bounds at a specific quantile level with those at the mean of income
under Assumptions \ref{assu:fsd1} and \ref{assu:fsd2}. When considering
a low quantile level, e.g., $\tau\in\{0.1,0.2,0.3\}$, we find that
the lower bound conditional on $X=Q_{income}(\tau)$ is larger than
that conditional on the mean of income. The upper bound conditional
on $X=Q_{income}(\tau)$ is smaller than that conditional on the mean
of income over the potential support of the treatment effect. However,
when considering a high quantile level, e.g., $\tau\in\{0.7,0.8,0.9\}$,
the estimation results are the opposite. These estimation results
indicate that the treatment effect of 401(k) plans on net financial
assets is likely to be heterogeneous across income levels, which is
consistent with the finding of \citet{Chernozhukov2004}. 

\section{\label{sec:Conclusion} Conclusion }

This paper considers identification and estimation of bounds on the
conditional distribution of treatment effects. The conditional distribution
may provide evidence on potential heterogeneity in treatment effects
across subpopulations that are defined in terms of values of covariates,
and therefore, is of practical importance in many empirical studies.
We show that when the treatment is endogenously determined, one can
tighten the bounds by imposing stochastic dominance assumptions. These
assumptions are consistent with many economic theories and easy to
interpret, and the resulting bounds on the distribution of treatment
effects are easy to compute. We propose nonparametric estimators of
the bounds and establish the uniform asymptotic theory based on the
novel approach of \citet{fang2019inference} and \citet{firpo2021uniform}.
The asymptotic theory in this paper is useful for constructing uniform
confidence bands and conducting statistical tests for global hypotheses.
We then provide an empirical application of the methodology proposed
in this paper to illustrate its relevance to empirical research. 

There are several interesting directions for future research. First,
one can consider inference that is uniformly valid regardless of whether
the (conditional) distributions of the potential outcomes are point
identified or partially identified, as in, for example, \citet{Imbens2004},
\citet{Stoye2009}, and \citet{andrews2010inference}. Second, one
can consider testing global hypotheses, such as stochastic dominance.
Although we cannot directly test stochastic dominance between two
conditional distribution of treatment effects as they are not point
identified, one can provide weak evidence by using similar arguments
to those in \citet{firpo2021uniform}. Third, it would be fruitful
to develop inference methods that are uniformly valid in values of
covariates. In work in progress, we consider a semiparametric approach
to uniform inference over the support of treatment effects and covariates.
It is expected to help resolve many interesting questions in empirical
analysis that cannot be answered by the framework proposed in this
paper. Lastly, it is worth considering inference in the presence of
instrumental variables. 

\bibliographystyle{chicago}
\bibliography{dist_te}

\newpage{} 

\appendix 
\counterwithin{figure}{section} 
\counterwithin{table}{section}
\counterwithin{equation}{section}

\section{\label{sec:HD-Derivatives} Hadamard Directional Derivatives }

To derive the Hadamard directional derivatives of $\phi_{L}$ and
$\phi_{U}$, we introduce some additional notations that are used
in \citet{firpo2021uniform}. For a set-valued map (or correspondence)
from $\mathcal{A}$ to the collection of subsets of $\mathcal{B}$,
$S$, $gr(S)$ is the graph of $S$ in $\mathcal{A}\times\mathcal{B}$.
Recall that for 
\[
\mathbb{F}_{\mathbf{Y}|X}((y_{1l},y_{1u},y_{0l},y_{0u})|x)\equiv\begin{pmatrix}LB_{1|X}(y_{1l}|x)\\
UB_{1|X}(y_{1u}|x)\\
LB_{0|X}(y_{0l}|x)\\
UB_{0|X}(y_{0u}|x)
\end{pmatrix},
\]
 we have defined 
\begin{align*}
\Pi_{L}\left(\mathbb{F}_{\mathbf{Y}|X}\right)(y,\delta|x) & \equiv LB_{1|X}(y|x)-UB_{0|X}(y-\delta|x),\\
\Pi_{U}\left(\mathbb{F}_{\mathbf{Y}|X}\right)(y,\delta|x) & \equiv UB_{1|X}(y|x)-LB_{0|X}(y-\delta|x).
\end{align*}

Let $S^{+}:\mathcal{D}\rightrightarrows\mathcal{Y}$ be a set-valued
map such that for each $\delta\in\mathcal{D}$, 
\[
S^{+}(\delta)\equiv\left\{ y\in\mathcal{Y}:y=\arg\sup\Pi_{L}(\mathbb{F}_{\mathbf{Y}|X})(y,\delta|x)\right\} .
\]
 Similarly, we define a set-valued map $S^{-}:\mathcal{D}\rightrightarrows\mathcal{Y}$
such that for each $\delta\in\mathcal{D}$, 
\[
S^{-}(\delta)\equiv\left\{ y\in\mathcal{Y}:y=\arg\inf(\Pi_{U}(\mathbb{F}_{\mathbf{Y}|X})(y,\delta|x)+1)\right\} .
\]
 For given $\epsilon>0$, $\delta\in\mathcal{D}$, and a set-valued
correspondence $S:\mathcal{D}\rightrightarrows\mathcal{Y}$, we define
the set of $\epsilon$-maximizers as 
\[
\Lambda_{f}(\delta,\epsilon;S)\equiv\left\{ y\in S(\delta):f(\delta,y)\geq\sup_{y\in\mathcal{Y}}f(\delta,y)-\epsilon\right\} .
\]

Let $\mathbb{H}\equiv(h_{1},h_{2},h_{3},h_{4})^{t}$ be a four-dimensional
vector-valued function. After some algebra, one can show that the
Hadamard directional derivatives of $\phi_{L}$ and $\phi_{U}$ are
\begin{equation}
\begin{aligned}\phi_{L}^{'}(f;\mathbb{H}) & \equiv\max\left\{ \lim_{\epsilon\rightarrow0^{+}}\sup_{\delta\in\mathcal{D}}\sup_{y\in\Lambda_{f}(\delta,\epsilon;S^{+})}(h_{1}(y)-h_{4}(y-\delta)),\right.\\
 & \ \ \ \ \ \ \ \ \ \ \ \left.\lim_{\epsilon\rightarrow0^{+}}\sup_{\delta\in\mathcal{D}}\inf_{y\in\Lambda_{f}(\delta,\epsilon;S^{+})}-(h_{2}(y)-h_{3}(y-\delta))\right\} ,\\
\phi_{U}^{'}(f;\mathbb{H}) & \equiv\max\left\{ \lim_{\epsilon\rightarrow0^{+}}\sup_{\delta\in\mathcal{D}}\sup_{y\in\Lambda_{-f}(\mathcal{\delta},\epsilon;S^{-})}-(h_{1}(y)-h_{4}(y-\delta)+1),\right.\\
 & \ \ \ \ \ \ \ \ \ \ \ \left.\lim_{\epsilon\rightarrow0^{+}}\sup_{\delta\in\mathcal{D}}\inf_{y\in\Lambda_{-f}(\delta,\epsilon;S^{-})}(h_{2}(y)-h_{3}(y-\delta)+1)\right\} .
\end{aligned}
\label{eq:phi_derivative}
\end{equation}
 The form of $\phi_{L}^{'}(f;\mathbb{H})$ can be found in the second
part of Theorem 3.2. of \citet{firpo2021uniform}. Here we derive
the Hadamard directional derivative of $\phi_{U}$. Recall that 
\begin{align*}
\phi_{U}(f) & =\sup_{\delta\in\mathcal{D}}\Big|\inf_{y\in\mathcal{Y}}f(\delta,y)\Big|\\
 & =\sup_{\delta\in\mathcal{D}}\Big|\sup_{y\in\mathcal{Y}}(-f(\delta,y))\Big|
\end{align*}
 since $\inf f=-\sup(-f)$. 
\begin{align*}
\phi_{U}(f+t_{n}h)-\phi_{U}(f) & =\sup_{\delta\in\mathcal{D}}\Big|\sup_{y\in\mathcal{Y}}(-f(\delta,y)-t_{n}h(\delta,y))\Big|-\sup_{\delta\in\mathcal{D}}\Big|\sup_{y\in\mathcal{Y}}(-f(\delta,y))\Big|\\
 & =\sup_{\delta\in\mathcal{D}}\Big|\sup_{y\in\mathcal{Y}}(-f(\delta,y)+t_{n}(-h(\delta,y)))\Big|-\sup_{\delta\in\mathcal{D}}\Big|\sup_{y\in\mathcal{Y}}(-f(\delta,y))\Big|.
\end{align*}
 By using the same argument of the proof of Theorem 3.2. of \citet{firpo2021uniform},
we can obtain that 
\[
\phi_{U}^{'}(f;h)\equiv\max\left\{ \lim_{\epsilon\rightarrow0^{+}}\sup_{\delta\in\mathcal{D}}\sup_{y\in\Lambda_{-f}(\mathcal{\delta},\epsilon)}-h(\delta,y),\lim_{\epsilon\rightarrow0^{+}}\sup_{\delta\in\mathcal{D}}\inf_{y\in\Lambda_{-f}(\delta,\epsilon)}h(\delta,y)\right\} .
\]

Let $a_{n}$ be a positive real sequence satisfying the conditions
in Assumption \ref{assu:est_der_tuning}. Then, one can estimate $\phi_{L}^{'}(f;\mathbb{H})$
and $\phi_{U}^{'}(f;\mathbb{H})$ as follows: 

\begin{equation}
\begin{aligned}\widehat{\phi_{L}^{'}}\left(f;\mathbb{H},a_{n}\right) & \equiv\max\left\{ \sup_{\delta\in\mathcal{D}}\sup_{y\in\Lambda_{f}(\delta,a_{n};S^{+})}(h_{1}(y)-h_{4}(y-\delta)),\right.\\
 & \ \ \ \ \ \ \ \ \ \ \ \left.\sup_{\delta\in\mathcal{D}}\inf_{y\in\Lambda_{f}(\delta,a_{n};S^{+})}-(h_{2}(y)-h_{3}(y-\delta))\right\} ,\\
\widehat{\phi_{U}^{'}}\left(f;\mathbb{H},a_{n}\right) & \equiv\max\left\{ \sup_{\delta\in\mathcal{D}}\sup_{y\in\Lambda_{-f}(\mathcal{\delta},a_{n};S^{-})}-(h_{1}(y)-h_{4}(y-\delta)+1),\right.\\
 & \ \ \ \ \ \ \ \ \ \ \ \left.\sup_{\delta\in\mathcal{D}}\inf_{y\in\Lambda_{-f}(\delta,a_{n};S^{-})}(h_{2}(y)-h_{3}(y-\delta)+1)\right\} .
\end{aligned}
\label{eq:est_phi_derivative}
\end{equation}

\section{\label{sec:Influence} Estimation of Influence Functions}

When the unconfoundedness and overlap assumptions hold, the conditional
distributions of the potential outcomes are point identified and they
can be estimated as 
\begin{align*}
\hat{F}_{1|X,n}(y_{1}|x) & \equiv\frac{\sum_{i}^{n}\mathbf{1}(Y_{i}\leq y_{1})D_{i}K(\frac{X_{i}-x}{h_{n}})}{\sum_{i}^{n}D_{i}K(\frac{X_{i}-x}{h_{n}})},\\
\hat{F}_{0|X,n}(y_{0}|x) & \equiv\frac{\sum_{i}^{n}\mathbf{1}(Y_{i}\leq y_{0})(1-D_{i})K(\frac{X_{i}-x}{h_{n}})}{\sum_{i}^{n}(1-D_{i})K(\frac{X_{i}-x}{h_{n}})}.
\end{align*}
 Then, one can estimate the influence functions of $\hat{F}_{1|X,n}(y|x)$
and $\hat{F}_{0|X,n}(y|x)$ for the $i-$th observation by 

\begin{equation}
\begin{aligned}\hat{\psi}_{1,i}(y|x) & \equiv\frac{\{\mathbf{1}(Y_{i}\leq y)-\hat{F}_{1|X,n}(y|x)\}\cdot D_{i}\cdot K(\frac{X_{i}-x}{h_{n}})}{\sum_{j}^{n}D_{j}\cdot K(\frac{X_{j}-x}{h_{n}})},\\
\hat{\psi}_{0,i}(y|x) & \equiv\frac{\{\mathbf{1}(Y_{i}\leq y)-\hat{F}_{0|X,n}(y|x)\}\cdot(1-D_{i})\cdot K(\frac{X_{i}-x}{h_{n}})}{\sum_{j}^{n}(1-D_{j})\cdot K(\frac{X_{j}-x}{h_{n}})},
\end{aligned}
\label{eq:influence_exo}
\end{equation}
 respectively. 

We now assume that Assumptions \ref{assu:fsd1} and \ref{assu:fsd2}
hold. Then, the bounds on the conditional distributions of the potential
outcomes can be estimated by using 
\begin{align*}
\hat{F}_{1|1X,n}(y_{1}|x) & \equiv\frac{\sum_{i}^{n}\mathbf{1}(Y_{i}\leq y_{1})D_{i}K(\frac{X_{i}-x}{h_{n}})}{\sum_{i}^{n}D_{i}K(\frac{X_{i}-x}{h_{n}})},\\
\hat{F}_{0|0X,n}(y_{0}|x) & \equiv\frac{\sum_{i}^{n}\mathbf{1}(Y_{i}\leq y_{0})(1-D_{i})K(\frac{X_{i}-x}{h_{n}})}{\sum_{i}^{n}(1-D_{i})K(\frac{X_{i}-x}{h_{n}})},\\
\hat{F}_{Y|X,n}(y|x) & \equiv\frac{\sum_{i}^{n}\mathbf{1}(Y_{i}\leq y)K(\frac{X_{i}-x}{h_{n}})}{\sum_{i}^{n}K(\frac{X_{i}-x}{h_{n}})},
\end{align*}

Then, one can estimate the influence functions of $\hat{F}_{1|1X,n}(y|x)$,
$\hat{F}_{0|0X,n}(y|x)$, and $\hat{F}_{Y|X,n}(y|x)$ for the $i$-th
observation by 
\begin{equation}
\begin{aligned}\hat{\psi}_{11,i}(y|x) & \equiv\frac{\{\mathbf{1}(Y_{i}\leq y)-\hat{F}_{1|1X,n}(y|x)\}\cdot D_{i}\cdot K(\frac{X_{i}-x}{h_{n}})}{\sum_{j}^{n}D_{j}K(\frac{X_{j}-x}{h_{n}})},\\
\hat{\psi}_{00,i}(y|x) & \equiv\frac{\{\mathbf{1}(Y_{i}\leq y)-\hat{F}_{0|0X,n}(y|x)\}\cdot(1-D_{i})\cdot K(\frac{X_{i}-x}{h_{n}})}{\sum_{j}^{n}(1-D_{j})K(\frac{X_{j}-x}{h_{n}})},\\
\hat{\psi}_{Y,i}(y|x) & \equiv\frac{\{\mathbf{1}(Y_{i}\leq y)-\hat{F}_{Y|X,n}(y|x)\}K(\frac{X_{i}-x}{h_{n}})}{\sum_{j}^{n}K(\frac{X_{j}-x}{h_{n}})},
\end{aligned}
\label{eq:Influence_endo}
\end{equation}
 respectively.

\section{\label{sec:semipara} Semiparametric Estimation of Conditional Distributions
of Treatment Effects on a Subset of $X$ }

We may be interested in the conditional distribution of treatment
effects within some subpopulation characterized by a subset of $X$.
For example, suppose we are interested in the effect of smoking on
birth weight. There are many potential factors that affect the birth
weight, such as mother's age and education level, family income, and
baby's gender. Among those factors, the focus may be on the heterogeneity
in the treatment effect across mother's age (we call this variable
$X_{1}$). Note that Assumption \ref{assu:unconfounded} does not
necessarily imply that $(Y_{1},Y_{0})\perp D|X_{1}$, and \citet{abrevaya2015estimating}
develop approaches to estimating conditional average treatment effects
to capture heterogeneity in some subpopulation. We complement \citet{abrevaya2015estimating}
by providing a way to identify and estimate the distribution of treatment
effects of some subpopulation.

Let $x_{1}\in\mathcal{X}_{1}\equiv Supp(X_{1})\subseteq\mathbb{R}^{d_{1}}$
be given. Then, from Lemma \ref{lem:id_unconfounded} and the law
of iterated expectations, it is straightforward to see that 
\begin{equation}
\begin{alignedat}{1}F_{1|X_{1}}(y|x_{1}) & =\mathbb{E}\left[\frac{D\mathbf{1}(Y\leq y)}{p_{0}(X)}\Big|X_{1}=x_{1}\right],\\
F_{0|X_{1}}(y|x_{1}) & =\mathbb{E}\left[\frac{(1-D)\mathbf{1}(Y\leq y)}{1-p_{0}(X)}\Big|X_{1}=x_{1}\right],
\end{alignedat}
\label{eq:id_cond_dist-1}
\end{equation}
 where $F_{d|X_{1}}(y|x_{1})\equiv\Pr(Y_{d}\leq y|X_{1}=x_{1})$ for
each $d\in\{0,1\}$. We focus on a semiparametric approach that uses
parametric estimation for the propensity score but nonparametric kernel-smoothing
estimation for the second-step, which was proposed by \citet{abrevaya2015estimating}.
We use parametric estimators of the propensity score mainly for practical
reasons. First, we can easily incorporate discrete regressors when
estimating the propensity score. Second, we can avoid the curse of
dimensionality when the dimension of $X$ is very large. Lastly, the
semiparametric approach is expected to be less sensitive to tuning
parameters as the number of tuning parameters required for estimation
is fewer for the semiparametric approach than fully nonparametric
approaches. While the semiparametric approach using a parametric specification
for the propensity has some advantages over the fully nonparametric
approach, it can lead to model misspecification. To mitigate the issues
about potential model misspecification, one may employ the fully nonparametric
approach in \citet{abrevaya2015estimating}.

We assume that the propensity score function is parameterized by a
finite-dimensional parameter $\theta_{0}$: $p_{0}(x)=p(x;\theta_{0})$
for all $x\in\mathcal{X}$. Let $\hat{\theta}_{n}$ be an estimator
of $\theta_{0}$, then the propensity score function can be estimated
by $p(x;\hat{\theta}_{n})$. Let $K_{1}(\cdot):\mathbb{R}^{d_{1}}\rightarrow\mathbb{R}$
be a kernel function that is symmetric around zero. The identification
results in equation (\ref{eq:id_cond_dist-1}) suggest the following
semiparametric estimators of the conditional distributions of $Y_{1}$
and $Y_{0}$ on $X_{1}=x_{1}$: 
\begin{align}
\hat{F}_{1|X_{1},n}(y|x_{1}) & \equiv\sum_{i}^{n}\frac{D_{i}\cdot\mathbf{1}(Y_{i}\leq y)}{p(X_{i};\hat{\theta}_{n})}\cdot K_{1}(\frac{X_{1i}-x_{1}}{h_{1n}})\Big/\sum_{i}K_{1}(\frac{X_{1i}-x_{1}}{h_{1n}}),\label{eq:est_F1-1}\\
\hat{F}_{0|X_{1},n}(y|x_{1}) & \equiv\sum_{i}^{n}\frac{(1-D_{i})\cdot\mathbf{1}(Y_{i}\leq y)}{(1-p(X_{i};\hat{\theta}_{n}))}\cdot K_{1}(\frac{X_{1i}-x_{1}}{h_{1n}})\Big/\sum_{i}K_{1}(\frac{X_{1i}-x_{1}}{h_{1n}}),\label{eq:est_F0-1}
\end{align}
 where $h_{1n}$ is a bandwidth. Let $\hat{\mathbb{F}}_{\mathbf{Y}|X_{1},n}((y_{1},y_{0})|x_{1})\equiv\left(\hat{F}_{1|X_{1},n}(y_{1}|x_{1}),\hat{F}_{0|X_{1},n}(y_{0}|x_{1})\right)^{t}$.
With these estimators of $F_{1|X_{1}}$ and $F_{0|X_{1}}$, the bounds
in Lemma \ref{prop:ID_dist_TE_Unconfounded} can be estimated as follows:
\begin{eqnarray}
\hat{F}_{\Delta|X_{1},n}^{L}(\delta|x_{1}) & \equiv & \max\left(\sup_{y}\left\{ \hat{F}_{1|X_{1},n}(y|x_{1})-\hat{F}_{0|X_{1},n}(y-\delta|x_{1})\right\} ,0\right),\label{eq:est_F_Delta_LB-1}\\
\hat{F}_{\Delta|X_{1},n}^{U}(\delta|x_{1}) & = & \min\left(\inf_{y}\left\{ \hat{F}_{1|X_{1},n}(y|x)-\hat{F}_{0|X_{1},n}(y-\delta|x_{1})\right\} ,0\right)+1.\label{eq:est_F_Delta_UB-1}
\end{eqnarray}

One can use the same bootstrap procedure to approximate the distributions
of the estimated bounds on $F_{\Delta|X_{1}}(\cdot|x_{1})$. Define
\[
\tilde{\mathbb{F}}_{\mathbf{Y}|X_{1},n}^{*}(y|x;B_{1})=\begin{pmatrix}F_{1|X_{1},n}^{*}(y|x_{1};B_{1})\\
F_{0|X_{1},n}^{*}(y|x_{1};B_{1})
\end{pmatrix}\equiv\begin{pmatrix}\sum_{i}B_{1i}\hat{\psi}_{11,i}(y|x_{1})\\
\sum_{i}B_{1i}\hat{\psi}_{10,i}(y|x_{1})
\end{pmatrix},
\]
 where 
\begin{align*}
\hat{\psi}_{11,i}(y|x_{1}) & \equiv\left(\frac{D_{i}\cdot(\mathbf{1}(Y_{i}\leq y)-\hat{F}_{1|X_{1},n}(y|x_{1}))}{\hat{p}_{n}(X_{i};\hat{\theta}_{n})}\right)\cdot K_{1}(\frac{X_{1i}-x_{1}}{h_{1n}})\Big/\sum_{j}K_{1}(\frac{X_{1j}-x_{1}}{h_{1n}}),\\
\hat{\psi}_{10,i}(y|x_{1}) & \equiv\left(\frac{(1-D_{i})\cdot(\mathbf{1}(Y_{i}\leq y)-\hat{F}_{0|X_{1},n}(y|x_{1}))}{1-\hat{p}_{n}(X_{i};\hat{\theta}_{n})}\right)\cdot K_{1}(\frac{X_{1i}-x_{1}}{h_{1n}})\Big/\sum_{j}K_{1}(\frac{X_{1j}-x_{1}}{h_{1n}}),
\end{align*}
 and $B_{1}$ is a random variable independent of the data. We consider
the following set of assumptions to prove the validity of the bootstrap
in this case.

\begin{assumption}\label{assu:fx1} (i) The support of $X_{1}$,
$\mathcal{X}_{1}$, is a compact subset of $\mathbb{R}^{d_{1}}$;
(ii) the distribution of $X_{1}$ admits its density $f_{X_{1}}(\cdot)$
on $\mathcal{X}_{1}$ such that $0<\inf_{x_{1}\in\mathcal{X}_{1}}f_{X_{1}}(x_{1})<\sup_{x_{1}\in\mathcal{X}_{1}}f_{X_{1}}(x_{1})<\infty$.
The density function $f_{X_{1}}(\cdot)$ is twice continuously differentiable
and $\sup_{x_{1}\in\mathcal{X}_{1}}|f_{X_{1}}^{(1)}(x_{1})|$ and
$\sup_{x_{1}\in\mathcal{X}_{1}}|f_{X_{1}}^{(2)}(x_{1})|$ are bounded.
\end{assumption}

\begin{assumption}\label{assu:x1_smooth} For given $j\in\{0,1\}$
and $x_{1}\in\mathcal{X}_{1}$, $F_{j|X_{1}}(y|x_{1})$ is continuously
differentiable with respect to $x$ and the derivative is uniformly
bounded. \end{assumption}

\begin{assumption}\label{assu:propensity_para} Let $\hat{\theta}_{n}$
be an estimator of $\theta_{0}\in\Theta\subseteq\mathbb{R}^{d_{\theta}}$,
where propensity score function $p(x;\theta_{0})$ is the propensity
score function. $\hat{\theta}_{n}$ satisfies $\sup_{x\in\mathcal{X}}|p(x;\hat{\theta}_{n})-p(x;\theta_{0})|=O_{p}(n^{-1/2})$.
\end{assumption}

\begin{assumption}\label{assu:kernel1} The kernel function $K_{1}(\cdot)$
is a $d_{1}$-dimensional product kernel with a univariate bounded
kernel function $k_{1}(\cdot):\mathbb{R}\rightarrow\mathbb{R}_{+}$
such that $\int k_{1}(u)du=1$, $\int uk_{1}(u)du=0$, and $\int u^{2}k_{1}(u)du<\infty$.
The support of the univariate kernel function is compact.\end{assumption}

\begin{assumption}\label{assu:bandwidth1} (i) $h_{1n}\rightarrow0$
; (ii) $nh_{1n}^{d_{1}}\rightarrow\infty$; and (iii) $nh_{1n}^{d_{1}+4}\rightarrow0$.
\end{assumption}

\begin{assumption}\label{assu:boot_weight1} Let $B_{1}$ be a random
variable that is independent of the data $\mathcal{W}$ such that
$\mathbb{E}[B_{1}]=0$, $Var(B_{1})=1$, and $\int_{0}^{\infty}\sqrt{\Pr(|B_{1}|>x)}dx<\infty$.
\end{assumption}

\begin{assumption}\label{assu:tuning_1} Let $a_{1n}$ be a sequence
of positive real numbers such that $a_{1n}\downarrow0$ and $a_{1n}\sqrt{nh_{1n}^{d_{1}}}\rightarrow\infty$
\end{assumption}

Define $||K_{1}||_{2}^{2}\equiv\int K_{1}^{2}(u)du$, $G_{1|X_{1}}(y|x_{1})\equiv\mathbb{E}\left[\frac{F_{1|X}(y|X)}{p_{0}(X)}\Big|X_{1}=x_{1}\right]$,
and $G_{0|X_{1}}(y|x_{1})\equiv\mathbb{E}\left[\frac{F_{0|X}(y|X)}{1-p_{0}(X)}\Big|X_{1}=x_{1}\right]$.
The following theorem establishes the asymptotic distributions of
the estimated bounds on the conditional distribution of treatment
effects and the bootstrap validity: $\Pi_{L}(\mathbb{F}_{\mathbf{Y}|X_{1},n})(y,\delta|x_{1})=\Pi_{U}(\mathbb{F}_{\mathbf{Y}|X_{1}})(y,\delta|x_{1})=F_{1|X_{1}}(y|x_{1})-F_{0|X_{1}}(y-\delta|x_{1})$.

\begin{theorem}\label{thm:bootstrap-1} Let $x_{1}\in int(\mathcal{X}_{1})$
be given. Suppose that Assumptions \ref{assu:unconfounded}, \ref{assu:bound_propensity},
\ref{assu:iid}, and \ref{assu:fx1}--\ref{assu:bandwidth1} hold.
Then, 
\begin{equation}
\begin{aligned}\sqrt{nh_{1n}^{d_{1}}}\left(\phi_{L}\left(\hat{\mathbb{F}}_{\mathbf{Y}|X_{1},n}(\cdot|x_{1})\right)-\phi_{L}\left(\mathbb{F}_{\mathbf{Y}|X_{1}}(\cdot|x_{1})\right)\right) & \Rightarrow\phi_{L}^{'}\left(\Pi_{L}(\mathbb{F}_{\mathbf{Y}|X_{1}})(y,\delta|x_{1});\tilde{\mathbb{G}}(\cdot)\right)\ \text{in }l^{\infty}(\mathcal{Y}^{2}),\\
\sqrt{nh_{1n}^{d_{1}}}\left(\phi_{U}\left(\hat{\mathbb{F}}_{\mathbf{Y}|X_{1},n}(\cdot|x_{1})\right)-\phi_{U}\left(\mathbb{F}_{\mathbf{Y}|X_{1}}(\cdot|x_{1})\right)\right) & \Rightarrow\phi_{U}^{'}\left(\Pi_{U}(\mathbb{F}_{\mathbf{Y}|X_{1}})(y,\delta|x_{1})+1;\tilde{\mathbb{G}}(\cdot)\right)\ \text{in }l^{\infty}(\mathcal{Y}^{2}),
\end{aligned}
\label{eq:te_asymptotic-1}
\end{equation}
 where $\mathbb{F}_{\mathbf{Y}|X_{1}}((y_{1},y_{0})|x_{1})\equiv\left(F_{1|X_{1}}(y_{1}|x_{1}),F_{0|X_{1}}(y_{0}|x_{0})\right)^{t}$
and $\tilde{\mathbb{G}}((\cdot,\cdot))\equiv(\tilde{\mathbb{G}}_{1}(\cdot),\tilde{\mathbb{G}}_{0}(\cdot))^{t}$
is a two-dimensional Gaussian process with mean zero and covariance
kernels 
\[
\tilde{H}_{1}(y_{1},y_{2})\equiv\left\{ G_{1|X_{1}}(\min(y_{1},y_{2})|x_{1})-F_{1|X_{1}}(y_{1}|x_{1})F_{1|X_{1}}(y_{2}|x_{1})\right\} \frac{||K_{1}||_{2}^{2}}{f_{X_{1}}(x_{1})}
\]
 and 
\[
\tilde{H}_{0}(y_{1},y_{2})\equiv\left\{ G_{0|X_{1}}(\min(y_{1},y_{2})|x_{1})-F_{0|X_{1}}(y_{1}|x_{1})F_{0|X_{1}}(y_{2}|x_{1})\right\} \frac{||K_{1}||_{2}^{2}}{f_{X_{1}}(x_{1})},
\]
 respectively.

If, in addition, Assumptions \ref{assu:boot_weight1}--\ref{assu:tuning_1}
hold, then, conditional on data, 
\begin{align*}
\widehat{\phi_{L}^{'}}\left(\Pi_{L}(\hat{\mathbb{F}}_{\mathbf{Y}|X_{1},n})(y,\delta|x_{1});\sqrt{nh_{1n}^{d_{1}}}\tilde{\mathbb{F}}_{\mathbf{Y}|X_{1},n}^{*}(\cdot|x_{1};B_{1}),a_{1n}\right) & \Rightarrow\phi_{L}^{'}\left(\Pi_{L}(\mathbb{F}_{\mathbf{Y}|X_{1}})(y,\delta|x_{1});\tilde{\mathbb{G}}(\cdot)\right),\\
\widehat{\phi_{U}^{'}}\left(\Pi_{U}(\hat{\mathbb{F}}_{\mathbf{Y}|X_{1},n})(y,\delta|x_{1})+1;\sqrt{nh_{1n}^{d_{1}}}\mathbb{\tilde{F}}_{\mathbf{Y}|X_{1},n}^{*}(\cdot|x_{1};B_{1}),a_{1n}\right) & \Rightarrow\phi_{U}^{'}\left(\Pi_{U}(\mathbb{F}_{\mathbf{Y}|X_{1}})(y,\delta|x_{1})+1;\mathbb{\tilde{G}}(\cdot)\right),
\end{align*}
 in $l^{\infty}(\mathcal{Y}^{2})$. \end{theorem}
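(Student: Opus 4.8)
The plan is to follow the same two-stage route used to prove Theorems~\ref{thm:asymp_marginals}--\ref{thm:bootstrap}, now adapted to the semiparametric estimator: first obtain weak convergence of the vector $\hat{\mathbb{F}}_{\mathbf{Y}|X_{1},n}(\cdot|x_{1})$ as a process in $(y_{1},y_{0})\in\mathcal{Y}^{2}$, and then push this through the functional delta method of \citet{fang2019inference} together with the Hadamard directional differentiability of $\phi_{L}$ and $\phi_{U}$ established in Theorem~3.2 of \citet{firpo2021uniform}. Because under Assumption~\ref{assu:unconfounded} the marginals are point identified, we have $\Pi_{L}(\mathbb{F}_{\mathbf{Y}|X_{1}})(y,\delta|x_{1})=\Pi_{U}(\mathbb{F}_{\mathbf{Y}|X_{1}})(y,\delta|x_{1})=F_{1|X_{1}}(y|x_{1})-F_{0|X_{1}}(y-\delta|x_{1})$, so $\phi_{L}$ and $\phi_{U}$ are exactly the Kolmogorov--Smirnov type functionals covered by that theorem, with directional derivatives given by the two-dimensional specialization of~(\ref{eq:phi_derivative}).

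The substantive new ingredient is the first stage. I would show
\[
\sqrt{nh_{1n}^{d_{1}}}\bigl(\hat{\mathbb{F}}_{\mathbf{Y}|X_{1},n}(\cdot|x_{1})-\mathbb{F}_{\mathbf{Y}|X_{1}}(\cdot|x_{1})\bigr)\Rightarrow\tilde{\mathbb{G}}(\cdot)\ \text{ in }\ (l^{\infty}(\mathcal{Y}))^{2},
\]
where $\tilde{\mathbb{G}}=(\tilde{\mathbb{G}}_{1},\tilde{\mathbb{G}}_{0})^{t}$ has mean zero and covariance kernels $\tilde{H}_{1}$ and $\tilde{H}_{0}$. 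By Lemma~\ref{lem:id_unconfounded} and the law of iterated expectations, $\hat{F}_{1|X_{1},n}(y|x_{1})$ is a Nadaraya--Watson estimator of $F_{1|X_{1}}(y|x_{1})=\mathbb{E}[D\mathbf{1}(Y\le y)/p_{0}(X)\mid X_{1}=x_{1}]$ in which the infeasible weight $p_{0}(X_{i})$ is replaced by the parametric plug-in $p(X_{i};\hat{\theta}_{n})$, and analogously for $\hat{F}_{0|X_{1},n}$. The key point is that by Assumption~\ref{assu:propensity_para}, $\sup_{x\in\mathcal{X}}|p(x;\hat{\theta}_{n})-p(x;\theta_{0})|=O_{p}(n^{-1/2})$, and since $p$ is bounded away from $0$ and $1$ (Assumption~\ref{assu:bound_propensity}) this first-stage error perturbs $\hat{\mathbb{F}}_{\mathbf{Y}|X_{1},n}$ by an $O_{p}(n^{-1/2})$ term uniformly in $y$, which is $o_{p}((nh_{1n}^{d_{1}})^{-1/2})$ because $\sqrt{nh_{1n}^{d_{1}}}\,n^{-1/2}=\sqrt{h_{1n}^{d_{1}}}\to0$ under Assumption~\ref{assu:bandwidth1}; this is the negligibility argument of \citet{abrevaya2015estimating}. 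With $\theta_{0}$ treated as known, what remains is a standard kernel average over the Donsker class $\{\mathbf{1}(\cdot\le y):y\in\mathcal{Y}\}$ under i.i.d.\ sampling (Assumption~\ref{assu:iid}), so the stochastic-equicontinuity argument used in the proof of Theorem~\ref{thm:asymp_marginals} gives tightness in $l^{\infty}(\mathcal{Y})$; Assumptions~\ref{assu:fx1}--\ref{assu:bandwidth1} make the smoothing bias $o((nh_{1n}^{d_{1}})^{-1/2})$; and the finite-dimensional limits are mean-zero Gaussian, with $\tilde{H}_{1}$ obtained from $\mathbb{E}[D\mathbf{1}(Y\le\min(y_{1},y_{2}))/p_{0}(X)^{2}\mid X_{1}=x_{1}]=G_{1|X_{1}}(\min(y_{1},y_{2})|x_{1})$ together with the kernel-variance scaling $\|K_{1}\|_{2}^{2}/f_{X_{1}}(x_{1})$, and $\tilde{H}_{0}$ analogously.

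Given this process-level limit, the asymptotic distribution in~(\ref{eq:te_asymptotic-1}) follows by applying Theorem~2.1 of \citet{fang2019inference} to the Hadamard directionally differentiable maps $\phi_{L}$ and $\phi_{U}$, exactly as in the proof of Theorem~\ref{thm:Asymptotic_Dist of Derivative}. For the bootstrap claim I would invoke the bootstrap-validity result of \citet{firpo2021uniform} (which extends \citet{fang2019inference}): one needs (i) that the multiplier process $\sqrt{nh_{1n}^{d_{1}}}\tilde{\mathbb{F}}_{\mathbf{Y}|X_{1},n}^{*}(\cdot|x_{1};B_{1})$ converges weakly to $\tilde{\mathbb{G}}(\cdot)$ conditionally on the data in probability, which follows from consistency of the estimated influence functions $\hat{\psi}_{11,i},\hat{\psi}_{10,i}$ for their population counterparts (again using that the $\hat{\theta}_{n}$-error is negligible) and a conditional multiplier central limit theorem under Assumption~\ref{assu:boot_weight1}; and (ii) that the plug-in derivative estimators $\widehat{\phi_{L}^{'}}(\,\cdot\,;\,\cdot\,,a_{1n})$ and $\widehat{\phi_{U}^{'}}(\,\cdot\,;\,\cdot\,,a_{1n})$ are consistent, which holds under Assumption~\ref{assu:tuning_1} because, uniformly in $\delta\in\mathcal{D}$, the sets $\Lambda_{f}(\delta,a_{1n};S^{+})$ and $\Lambda_{-f}(\delta,a_{1n};S^{-})$ converge in the Painlev\'e--Kuratowski sense to the true $\arg\sup/\arg\inf$ sets once $\Pi_{L}(\hat{\mathbb{F}}_{\mathbf{Y}|X_{1},n})$ converges uniformly on $\mathcal{Y}\times\mathcal{D}$ to its population version at a rate faster than $a_{1n}$. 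Feeding (i) and (ii) into the delta-method / extended-continuous-mapping machinery yields the stated conditional weak convergence, mirroring the proof of Theorem~\ref{thm:bootstrap}.

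The step I expect to be the main obstacle is the first stage: checking, uniformly in $y\in\mathcal{Y}$ rather than merely pointwise, that the parametric first stage is asymptotically negligible and that the remaining kernel average is stochastically equicontinuous in $y$, so that the full vector converges as a process in $(l^{\infty}(\mathcal{Y}))^{2}$ with precisely the covariance kernels $\tilde{H}_{1}$ and $\tilde{H}_{0}$. Once this process-level central limit theorem is in place, Steps 2 and 3 are essentially verbatim repetitions of the arguments already used for Theorems~\ref{thm:Asymptotic_Dist of Derivative} and~\ref{thm:bootstrap}, with the four-dimensional limit $\mathbb{G}_{x}$ replaced by the two-dimensional $\tilde{\mathbb{G}}$.
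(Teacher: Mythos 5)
Your proposal follows essentially the same route as the paper: the paper first proves a first-stage weak convergence lemma (Lemma \ref{lem:weak_convergence_abrevaya}) by Taylor-expanding the moment function in the propensity score, showing the $\hat{\theta}_{n}$-plug-in error is $o_{p}(1)$ after scaling by $\sqrt{nh_{1n}^{d_{1}}}$ exactly as you argue, and verifying Pollard's FCLT conditions to get the Gaussian limit with kernels $\tilde{H}_{1},\tilde{H}_{0}$; it then applies the Fang--Santos delta method for the Hadamard directionally differentiable maps and establishes the multiplier bootstrap validity (Lemma \ref{lem:bootstrap_abrevaya}) by repeating the argument of Theorem \ref{thm:bootstrap} with $r_{n}=\sqrt{nh_{1n}^{d_{1}}}$. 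Your identification of the first-stage negligibility and the computation $\mathbb{E}[D\mathbf{1}(Y\leq\min(y_{1},y_{2}))/p_{0}(X)^{2}\mid X_{1}=x_{1}]=G_{1|X_{1}}(\min(y_{1},y_{2})|x_{1})$ for the covariance kernel match the paper's proof, so the proposal is correct and not materially different.
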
 

\section{\label{sec:hypothesis_testing} Global Hypotheses Testing }

In this section, we briefly discuss global hypotheses testing for
the bounds on the distribution of treatment effects to illustrate
the usefulness of our approach. We focus on the case where we compare
lower bounds between two groups, but the result can easily be generalized
to other cases. 

We first consider testing lower bounds on the distribution of treatment
effects between two groups, where each group is defined in terms of
the value of the covariate. Suppose that we consider two values $x_{A},x_{B}\in int(\mathcal{X})$.
The resulting lower bounds are denoted by $L_{A}(\delta)$ and $L_{B}(\delta)$,
respectively (i.e. $L_{A}(\delta)=\sup_{y}\Pi(\mathbb{F}_{\mathbf{Y}|X})(y,\delta|x_{A})$
and $L_{B}(\delta)=\sup_{y}\Pi(\mathbb{F}_{\mathbf{Y}|X})(y,\delta|x_{B})$),
and the null hypothesis is as follows: 
\[
H_{0}:L_{A}(\delta)=L_{B}(\delta),\ \forall\delta\in\mathcal{D}.
\]
 Let $\mu$ be the Lebesgue measure on $\mathcal{D}$ and define 
\[
\theta_{L,p,e}\equiv\left(\int_{\mathcal{D}}\Big|L_{A}(\delta)-L_{B}(\delta)\Big|^{p}d\mu(\delta)\right)^{1/p}
\]
 for some $1\leq p<\infty$. Replacing $L_{A}(\delta)$ and $L_{B}(\delta)$
with their estimators, we obtain a uniform test statistic $\hat{\theta}_{L,P,e,n}.$
Define $\rho(f)(\delta|x)\equiv\sup_{y\in\mathcal{Y}}f(y,\delta|x)$,
then the Hadamard directional derivative of $\rho(f)(\delta|x)$ for
directions $h$ at $f$ is 
\[
\rho_{f}^{'}(h)(\delta|x)=\lim_{\epsilon\rightarrow0}\sup_{y\in\Lambda_{f}(\delta,\epsilon;S^{+})}h(y,\delta|x)
\]
 (cf. \citet{firpo2021uniform}). We also define $\mathbb{G}_{(x_{A},x_{B})}((y_{1},y_{2},y_{3},y_{4}))\equiv\left(\mathbb{G}_{x_{A}}((y_{1},y_{2})),\mathbb{G}_{x_{B}}((y_{3},y_{4}))\right)^{t}$. 

\begin{theorem}\label{thm:test_equality} Suppose that Assumptions
\ref{assu:unconfounded} and \ref{assu:bound_propensity} hold and
that $\mu(\mathcal{D})<\infty$. Let $x_{A},x_{B}\in int(\mathcal{X})$
be given and Assumptions \ref{assu:iid}--\ref{assu:bandwidth} hold.
Then, under $H_{0}$, 
\[
\sqrt{nh_{n}^{d_{x}}}\hat{\theta}_{L,P,e,n}\Rightarrow\theta_{L,P,e}^{'}\left(\mathbb{G}_{(x_{A},x_{B})}((\cdot,\cdot,\cdot,\cdot))\right),
\]
 where, for $\mathbb{H}_{A}(\cdot,\cdot)=(h_{1}(\cdot|x_{A}),h_{2}(\cdot|x_{A}))^{t}$,
$\mathbb{H}_{B}(\cdot,\cdot)=(h_{3}(\cdot|x_{B}),h_{4}(\cdot|x_{B}))^{t}$,
\[
\theta_{L,P,e}^{'}(\mathbb{H}_{A},\mathbb{H}_{B})\equiv\left(\int_{\mathcal{D}}|\rho_{\Pi(\mathbb{F}_{\mathbf{Y}|X})(y,\delta|x_{A})}^{'}(\Pi(\mathbb{H}_{A})(y,\delta|x_{A}))-\rho_{\Pi(\mathbb{F}_{\mathbf{Y}|X})(y,\delta|x_{B})}^{'}(\Pi(\mathbb{H}_{B})(y,\delta|x_{B}))|^{p}d\mu\right)^{1/p}.
\]
 If Assumptions \ref{assu:boot_weight} and \ref{assu:est_der_tuning}
additionally hold, then, conditional on data, 
\[
\hat{\theta}_{L,P,e}^{'}\left(\mathbb{F}_{\mathbf{Y}|X,n}^{*}(\cdot|x_{A}),\mathbb{F}_{\mathbf{Y}|X,n}^{*}(\cdot|x_{B})\right)\Rightarrow\theta_{L,P,e}^{'}\left(\mathbb{G}_{(x_{A},x_{B})}((\cdot,\cdot,\cdot,\cdot))\right),
\]
 where $\hat{\rho}_{f}^{'}(h)(\delta|x)\equiv\sup_{y\in\Lambda_{f}(\delta,a_{n};S^{+})}h(y,\delta|x)$
and 
\[
\hat{\theta}_{L,P,e}^{'}\left(\mathbb{H}_{A},\mathbb{H}_{B}\right)\equiv\left(\int_{\mathcal{D}}|\hat{\rho}_{\Pi(\hat{\mathbb{F}}_{\mathbf{Y}|X,n})(y,\delta|x_{A})}^{'}(\Pi(\mathbb{H}_{A})(y,\delta|x_{A}))-\hat{\rho}_{\Pi(\hat{\mathbb{F}}_{\mathbf{Y}|X,n})(y,\delta|x_{B})}^{'}(\Pi(\mathbb{H}_{B})(y,\delta|x_{B}))|^{p}d\mu\right)^{1/p}.
\]
\end{theorem}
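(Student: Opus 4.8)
The strategy is to realize $\hat{\theta}_{L,P,e,n}$ as the image of the pair of kernel estimators $\bigl(\hat{\mathbb{F}}_{\mathbf{Y}|X,n}(\cdot|x_A),\hat{\mathbb{F}}_{\mathbf{Y}|X,n}(\cdot|x_B)\bigr)$ under a single composite map and then to invoke the functional delta method for Hadamard directionally differentiable functionals, exactly as in the proofs of Theorems~\ref{thm:Asymptotic_Dist of Derivative} and~\ref{thm:bootstrap}. The first step is joint weak convergence of the two estimator vectors. Since $x_A\neq x_B$ are fixed and the kernel $K$ has compact support with $h_n\to 0$, for all $n$ large enough the index sets $\{i:K((X_i-x_A)/h_n)\neq 0\}$ and $\{i:K((X_i-x_B)/h_n)\neq 0\}$ are disjoint; hence, with probability tending to one, the two vectors are functions of disjoint subsamples and are therefore asymptotically independent. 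Combining this with Theorem~\ref{thm:asymp_marginals} applied separately at $x_A$ and $x_B$ gives
\[
\sqrt{nh_n^{d_x}}\Bigl(\hat{\mathbb{F}}_{\mathbf{Y}|X,n}(\cdot|x_A)-\mathbb{F}_{\mathbf{Y}|X}(\cdot|x_A),\,\hat{\mathbb{F}}_{\mathbf{Y}|X,n}(\cdot|x_B)-\mathbb{F}_{\mathbf{Y}|X}(\cdot|x_B)\Bigr)\Rightarrow\bigl(\mathbb{G}_{x_A}(\cdot),\mathbb{G}_{x_B}(\cdot)\bigr)
\]
in $\bigl(l^{\infty}(\mathcal{Y})\bigr)^{8}$, where the two limit processes are independent; this product process is $\mathbb{G}_{(x_A,x_B)}$.

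Next I would show that the map sending $\bigl(\mathbb{F}_{\mathbf{Y}|X}(\cdot|x_A),\mathbb{F}_{\mathbf{Y}|X}(\cdot|x_B)\bigr)$ to $\theta_{L,P,e}$ is Hadamard directionally differentiable, using the fact that this property is preserved under composition. The map factors as: (i) the bounded linear maps $\mathbb{F}_{\mathbf{Y}|X}(\cdot|x)\mapsto\Pi(\mathbb{F}_{\mathbf{Y}|X})(\cdot,\cdot|x)$ at $x\in\{x_A,x_B\}$, which are fully Hadamard differentiable; (ii) the supremum operator $\rho(f)(\delta|x)=\sup_{y\in\mathcal{Y}}f(y,\delta|x)$, which is Hadamard directionally differentiable with derivative $\rho_f'(h)(\delta|x)=\lim_{\epsilon\to 0}\sup_{y\in\Lambda_f(\delta,\epsilon;S^+)}h(y,\delta|x)$ by the argument behind Theorem~3.2 of \citet{firpo2021uniform}; (iii) the bounded linear difference $(L_A,L_B)\mapsto L_A-L_B$ into $l^{\infty}(\mathcal{D})$; and (iv) the $L^p(\mathcal{D},\mu)$ norm $g\mapsto\bigl(\int_{\mathcal{D}}|g|^p\,d\mu\bigr)^{1/p}$, which, because $\mu(\mathcal{D})<\infty$, is well defined and Lipschitz on $l^{\infty}(\mathcal{D})$ with constant $\mu(\mathcal{D})^{1/p}$ and, being a norm, is convex and hence Hadamard directionally differentiable everywhere. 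Under $H_0$ we have $L_A\equiv L_B$, so step (iv) is evaluated at $g=0$, where its directional derivative in direction $h$ is simply $\|h\|_{L^p(\mu)}$; composing the derivatives in (i)--(iv) yields precisely the expression for $\theta_{L,P,e}'(\mathbb{H}_A,\mathbb{H}_B)$ in the statement.

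Applying Theorem~2.1 of \citet{fang2019inference} to this composite map, using the joint weak convergence above and the directional differentiability just established, and noting that under $H_0$ the map takes the value $\theta_{L,P,e}=0$, gives $\sqrt{nh_n^{d_x}}\hat{\theta}_{L,P,e,n}\Rightarrow\theta_{L,P,e}'\bigl(\mathbb{G}_{(x_A,x_B)}\bigr)$. For the bootstrap claim I would combine (a) consistency of the multiplier bootstrap $\mathbb{F}_{\mathbf{Y}|X,n}^{*}$ for $\mathbb{G}_{x_A}$ and $\mathbb{G}_{x_B}$, established as in Theorem~\ref{thm:bootstrap} and again asymptotically independent across the two points by the disjoint-support argument, with (b) consistency of the plug-in derivative estimators $\hat{\rho}_f'(h)(\delta|x)=\sup_{y\in\Lambda_f(\delta,a_n;S^+)}h(y,\delta|x)$ under Assumption~\ref{assu:est_der_tuning}; the bootstrap functional delta method of \citet{fang2019inference} then delivers the stated conditional weak convergence of $\hat{\theta}_{L,P,e}'\bigl(\mathbb{F}_{\mathbf{Y}|X,n}^{*}(\cdot|x_A),\mathbb{F}_{\mathbf{Y}|X,n}^{*}(\cdot|x_B)\bigr)$.

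The principal technical point is the careful verification that the supremum operator is Hadamard directionally differentiable jointly in $(\delta,y)$ uniformly over $\mathcal{D}$ and that its $\epsilon$-argmax construction composes correctly both with the outer $L^p$ norm and with the derivative estimator built from $\Lambda_f(\delta,a_n;S^+)$. This largely parallels the analysis already carried out for $\phi_L$ and $\phi_U$ in the proofs of Theorems~\ref{thm:Asymptotic_Dist of Derivative}--\ref{thm:bootstrap}, so the work is to adapt rather than redevelop those arguments; the one genuinely new ingredient is the behavior of the $L^p$ norm at $g=0$, which is elementary.
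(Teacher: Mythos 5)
Your proposal is correct and follows essentially the same route as the paper's (very terse) proof: Hadamard directional differentiability of the composite map via the chain rule, the delta method of \citet{fang2019inference} applied at the null where $L_A-L_B\equiv 0$ so the outer $L^{p}$ norm is differentiated at zero, and bootstrap validity exactly as in Theorem \ref{thm:bootstrap}. You additionally spell out the joint weak convergence at $x_A\neq x_B$ with asymptotic independence via the disjoint kernel supports and the explicit factorization of the map, details the paper leaves implicit by citing Corollary 4.2 of \citet{firpo2021uniform}; these additions are accurate and strengthen rather than alter the argument.
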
 

Theorem \ref{thm:test_equality} establishes the limiting distribution
of the normalized test statistic and the validity of the bootstrap.
Although it is interesting to investigate the performance of this
uniform test, it is beyond the scope of this paper. Therefore, we
leave this interesting topic for future research. 

\section{\label{sec:Proofs} Mathematical Proofs }

\subsection{Proof of Lemma \ref{lem:id_unconfounded} }

\begin{proof} We only prove the first result, as the second result
can be proven in a similar way. Observe that 
\begin{align*}
\mathbb{E}\left[D\cdot G(Y)|X=x\right] & =\mathbb{E}\left[D\cdot G(Y)|X=x,D=1\right]\Pr(D=1|X=x)\\
 & =\mathbb{E}\left[G(Y_{1})|X=x,D=1\right]\Pr(D=1|X=x)\\
 & =\mathbb{E}[G(Y_{1})|X=x]\cdot\Pr(D=1|X=x).
\end{align*}
 where the last holds by Assumption \ref{assu:unconfounded}. In addition,
Assumption \ref{assu:bound_propensity} implies that $\Pr(D=1|X=x)>0$.
Therefore, 
\[
\frac{\mathbb{E}[D\cdot G(Y)|X=x]}{\mathbb{E}[D|X=x]}=\frac{\mathbb{E}[G(Y_{1})|X=x]\cdot\Pr(D=1|X=x)}{\Pr(D=1|X=x)}=\mathbb{E}[G(Y_{1})|X=x],
\]
 and this completes the proof. \end{proof}

\subsection{Proof of Theorem \ref{thm:ID_Endo_FH_Bound}}

\begin{proof} To see this, recall that from Proposition \ref{prop:ID_dist_TE_Unconfounded},
we have 
\[
\max\left(\sup_{y}\left\{ F_{1|X}(y|x)-F_{0|X}(y-\delta|x)\right\} ,0\right)\leq F_{\Delta|X}(\delta|x)\leq\min\left(\inf_{y}\left\{ F_{1|X}(y|x)-F_{0|X}(y-\delta|x)\right\} ,0\right)+1.
\]
 Since the conditional distributions of the potential outcomes given
$X=x$ are partially identified by the hypothesis and the functions
$\max[\cdot,\cdot]$ and $\min[\cdot,\cdot]$ are non-decreasing,
we obtain that 
\[
F_{\Delta|X}^{e,L}(\delta|x)\leq\max\left(\sup_{y}\left\{ F_{1|X}(y|x)-F_{0|X}(y-\delta|x)\right\} ,0\right)
\]
and that 
\[
F_{\Delta|X}^{e,U}(\delta|x)\geq\min\left(\inf_{y}\left\{ F_{1|X}(y|x)-F_{0|X}(y-\delta|x)\right\} ,0\right)+1.
\]
 Replacing the bounds in Theorem \ref{prop:ID_dist_TE_Unconfounded}
with $F_{\Delta|X}^{e,L}(\delta|x)$ and $F_{\Delta|X}^{e,U}(\delta|x)$
yields the result. \end{proof}

\subsection{Proof of Theorem \ref{thm:marginal_ID_FSD1} }

\begin{proof} To prove the results in Theorem \ref{thm:marginal_ID_FSD1},
recall that $F_{1|X}(y|x)=F_{1|1X}(y|x)\Pr(D=1|X=x)+F_{1|0X}(y|x)\Pr(D=0|X=x)$.
Since $F_{1|0X}(y|x)\geq F_{1|1X}(y|x)$ for all $y$, replacing $F_{1|0X}(y|x)$
with $F_{1|1X}(y|x)$ in the above decomposition of $F_{1|X}(y|x)$
results in that $F_{1|X}^{L,FSD1}(y|x)=\Pr(Y\leq y|D=1,X=x)=\Pr(Y_{1}\leq y|D=1,X=x)$
and that $F_{1|X}(y|x)=\Pr(Y_{1}\leq y|X=x)\geq F_{1|X}^{L,FSD1}(y|x)$.
Note that $F_{1|X}^{U,FSD1}(y|x)$ and $F_{0|X}^{L,FSD1}(y|x)$ are
identical to $F_{1|X}^{U}(y|x)$ and $F_{0|X}^{L}(y|x)$ in (\ref{eq:manski_bound})
and hence, they are valid. Since $F_{0|X}(y|x)=F_{0|X}^{L,FSD1}(y|x)+\Pr(D=1|X=x)\cdot F_{0|1X}(y|x)$,
$\Pr(Y_{0}\leq y|D=1,X=x)\leq\Pr(Y_{0}\leq y|D=0,X=x)$ and $\Pr(Y\leq y|D=0,X=x)=\Pr(Y_{0}\leq y|D=0,X=x)$,
we obtain that $F_{0|X}(y|x)\leq\Pr(Y_{0}\leq y|D=0,X=x)=\Pr(Y\leq y|D=0,X=x)=F_{0|X}^{U,FSD1}(y|x)$.
\end{proof}

\subsection{Proof of Theorem \ref{thm:marginal_ID_FSD2} }

\begin{proof} The proof proceeds as follows. Since we have, for all
$y\in\mathbb{R}$, $\Pr(Y_{1}\leq y|D=0,X=x)\leq\Pr(Y_{0}\leq y|D=0,X=x)$,
we obtain the upper bound on $F_{1|X}(y|x)$ by replacing $\Pr(Y_{1}\leq y|D=0,X=x)$
with $\Pr(Y_{0}\leq y|D=0,X=x)$. Similarly, we have $\Pr(Y_{1}\leq y|D=1,X=x)\leq\Pr(Y_{0}\leq y|D=1,X=x)$,
and this inequality is used to establish the lower bound on $F_{0|X}(y|x)$.
Note that $\Pr(Y\leq y|D=d,X=x)=\Pr(Y_{d}\leq y|D=d,X=x)$ for $d\in\{0,1\}$
and that $\Pr(Y\leq y|D=1,X=x)\Pr(D=1|X=x)+\Pr(Y\leq y|D=0,X=x)\Pr(D=0|X=x)=\Pr(Y\leq y|X=x)$,
and this completes the proof. \end{proof}

\subsection{Proof of Corollary \ref{coro:marginal_ID_FSD12}}

\begin{proof} The results in Corollary \ref{coro:marginal_ID_FSD12}
are directly implied by Theorems \ref{thm:marginal_ID_FSD1} and \ref{thm:marginal_ID_FSD2}.
Recall that 
\begin{eqnarray*}
F_{1|X}^{U,FSD2}(y|x) & = & F_{1|1X}(y|x)\Pr(D=1|X=x)+F_{0|0X}(y|x)\Pr(D=0|X=x)\\
 & \leq & F_{1|1X}(y|x)\Pr(D=1|X=x)+\Pr(D=0|X=x)=F_{1|X}^{U,FSD1}(y|x)
\end{eqnarray*}
and 
\[
F_{1|X}^{L,FSD1}(y|x)=\Pr(Y\leq y|D=1,X=x)\geq F_{1|1X}(y|x)\Pr(D=1|X=x)=F_{1|X}^{L,FSD2}(y|x),
\]
 and therefore these establish the bounds on $F_{1|X}(y|x)$. One
can use a similar argument for the bounds on $F_{0|X}(y|x)$. \end{proof}

For any real sequences $(a_{n})$ and $(b_{n})$, $a_{n}\lesssim b_{n}$
means that there is a constant $C$, not depending on $n$, such that
$|a_{n}|\leq C\cdot|b_{n}|$ for all $n\in\mathbb{N}$. For a set
$A\subseteq\mathbb{R}$, $l^{\infty}(A)$ denotes the set of uniformly
bounded functions on the set $A$. Let $(Z_{n})$ be a sequence of
random variables and $Z$ be a random variable. $Z_{n}\overset{p}{\rightarrow}Z$
means that $Z_{n}$ converges in probability to $Z$. We also denote
the weak convergence of $Z_{n}$ to $Z$ by $Z_{n}\Rightarrow Z$.
We abbreviate Vapnik-\v{C}ervonenkis to VC. For a class of functions,
$\mathcal{F}$, a probability measure $Q$ and $\epsilon>0$, $N(\epsilon,\mathcal{F},L_{r}(Q))$
denotes the covering number which is the minimum number of $L_{r}(Q)$
$\epsilon$-balls that cover $\mathcal{F}$, where $r\geq1$. For
a pseudo-metric space $(T,\rho)$, the diameter of $T$ is $\sup_{s,t\in T}\rho(s,t)$.
Throughout this section, I sometimes use the same notation but for
some possibly different object. 

\subsection{Proof of Theorem \ref{thm:asymp_marginals} }

We first provide the forms of covariance kernels: 
\begin{align*}
H_{1,x}(s,t) & \equiv\int K^{2}(u)du\left\{ F_{1|X}(\min(s,t)|x)-F_{1|X}(s|x)\cdot F_{1|X}(t|x)\right\} \frac{1}{p_{0}(x)\cdot f_{X}(x)},\\
H_{0,x}(s,t) & \equiv\int K^{2}(u)du\left\{ F_{0|X}(\min(s,t)|x)-F_{0|X}(s|x)\cdot F_{0|X}(t|x)\right\} \frac{1}{(1-p_{0}(x))\cdot f_{X}(x)}.
\end{align*}

\begin{proof} I only consider the weak convergence of $\sqrt{nh_{n}^{d_{x}}}(\hat{F}_{1|X,n}(\cdot|x)-F_{1|X}(\cdot|x))$,
and for the other objects one can use a similar argument. Note that 

\begin{align*}
\hat{F}_{1|X,n}(y|x)-F_{1|X}(y|x) & =\frac{\sum_{i}^{n}\mathbf{1}(Y_{i}\leq y)\cdot D_{i}\cdot K(\frac{X_{i}-x}{h_{n}})}{\sum_{i}^{n}D_{i}K(\frac{X_{i}-x}{h_{n}})}-F_{1|X}(y|x)\\
 & =\frac{1}{p_{0}(x)\cdot f_{X}(x)+o_{p}(1)}\cdot\frac{1}{nh_{n}^{d_{x}}}\sum_{i}^{n}\left[\{\mathbf{1}(Y_{i}\leq y)-F_{1|X}(y|x)\}D_{i}\cdot K(\frac{X_{i}-x}{h_{n}})\right]\\
 & =\frac{1}{p_{0}(x)\cdot f_{X}(x)+o_{p}(1)}\cdot\frac{1}{nh_{n}^{d_{x}}}\sum_{i}^{n}\left[\{\mathbf{1}(Y_{i}\leq y)-F_{1|X}(y|X_{i})\}D_{i}\cdot K(\frac{X_{i}-x}{h_{n}})\right]\\
 & \ +\frac{1}{p_{0}(x)\cdot f_{X}(x)+o_{p}(1)}\cdot\frac{1}{nh_{n}^{d_{x}}}\sum_{i}^{n}\left[\{F_{1|X}(y|X_{i})-F_{1|X}(y|x)\}D_{i}\cdot K(\frac{X_{i}-x}{h_{n}})\right]
\end{align*}
 under Assumption \ref{assu:dist_X}, \ref{assu:smoothness}, \ref{assu:kernel},
and \ref{assu:bandwidth}. To show the weak convergence, I verify
the conditions for the functional central limit theorem in \citet[Theorem 10.6]{pollard1990empirical}.
We first consider the second term in the above equation. Note that
\begin{align*}
 & \frac{1}{nh_{n}^{d_{x}}}\sum_{i}^{n}\mathbb{E}\left[\{F_{1|X}(y|X_{i})-F_{1|X}(y|x)\}\cdot D_{i}\cdot K(\frac{X_{i}-x}{h_{n}})\right]\\
= & \frac{1}{h_{n}^{d_{x}}}\int\{F_{1|X}(y|t)-F_{1|X}(y|x)\}p_{0}(t)K(\frac{t-x}{h_{n}})f_{X}(t)dt\\
= & \int\left\{ F_{1|X}^{(1)}(y|x)h_{n}\sum_{j=1}^{d_{x}}u_{j}+\frac{1}{2}F_{1|X}^{(2)}(y|\dot{x})h_{n}^{2}\sum_{j=1}^{d_{x}}u_{j}^{2}\right\} \\
 & \times\left\{ p_{0}(x)+p_{0}^{(1)}(x)h_{n}\sum_{j=1}^{d_{x}}u_{j}+\frac{1}{2}p_{0}^{(2)}(\ddot{x})h_{n}^{2}\sum_{j=1}^{d_{x}}u_{j}\right\} \\
 & \times\left\{ f_{X}(x)+f_{X}^{(1)}(x)h_{n}\sum_{j=1}^{d_{x}}u_{j}+\frac{1}{2}f_{X}^{(2)}(\tilde{x})h_{n}^{2}\sum_{j=1}^{d_{x}}u_{j}\right\} K(u)du,
\end{align*}
 where $\dot{x}$, $\ddot{x}$, and $\tilde{x}$ are some values between
$x+th_{n}$ and $x$, under Assumptions \ref{assu:dist_X}, \ref{assu:smoothness}
and \ref{assu:kernel}. Since this bias term is $O\left(h_{n}^{2}\right)$
and $\sqrt{nh_{n}^{d_{x}+4}}\rightarrow0$ by Assumption \ref{assu:bandwidth},
this term is $o\left((nh_{n}^{d_{x}})^{-1/2}\right)$. 

Now we consider the term $\frac{1}{p_{0}(x)f_{X}(x)}\cdot\frac{1}{nh_{n}^{d_{x}}}\sum_{i}^{n}\left[\{\mathbf{1}(Y_{i}\leq y)-F_{1|X}(y|X_{i})\}D_{i}\cdot K(\frac{X_{i}-x}{h_{n}})\right]$.
For all $\omega\in\Omega$, define $f_{ni}(\omega,y;x)\equiv\frac{1}{\sqrt{nh_{n}^{d_{x}}}\cdot p_{0}(x)\cdot f_{X}(x)}\{\mathbf{1}(Y_{i}\leq y)-F_{1|X}(y|X_{i})\}D_{i}\cdot K(\frac{X_{i}-x}{h_{n}})$.
Then, we have $\mathbb{E}[f_{ni}(\cdot,y;x)]=0$ by the law of iterated
expectations. Define an envelope function $F_{ni}(x)\equiv\frac{1}{\underline{p}\underline{f}\sqrt{nh_{n}^{d_{x}}}}K(\frac{X_{i}-x}{h_{n}})$,
where $\underline{p}\equiv\inf_{x\in\mathcal{X}}p_{0}(x)>0$ and $\underline{f}\equiv\inf_{x\in\mathcal{X}}f_{X}(x)>0$. 

Let $\mathcal{C}=\left\{ (-\infty,y]:y\in\mathcal{Y}\right\} $ and
define 
\begin{align*}
\mathcal{F}_{1n} & \equiv\left\{ \frac{1}{p_{0}(x)\cdot f_{X}(x)\sqrt{nh_{n}^{d_{x}}}}K(\frac{X-x}{h_{n}})\cdot D\cdot\mathbf{1}_{C}:C\in\mathcal{C}\right\} ,\\
\mathcal{F}_{2n} & \equiv\left\{ -\frac{1}{p_{0}(x)\cdot f_{X}(x)\sqrt{nh_{n}^{d_{x}}}}K(\frac{X-x}{h_{n}})\cdot D\cdot F_{1|X}(y|X):y\in\mathcal{Y}\right\} ,
\end{align*}
 and $\mathcal{F}_{n}\equiv\{f_{ni}(\omega,y;x):y\in\mathcal{Y}\}=\mathcal{F}_{1n}+\mathcal{F}_{2n}$.
The VC index of $\mathcal{C}$ is 2, and thus it follows from Theorem
9.2 and Lemma 9.9 in \citet{kosorok2008} that $N(\epsilon,\mathcal{F}_{1n},L_{2}(\mathbb{P}))\lesssim\epsilon^{-2}$
for any probability measure $\mathbb{P}$. Since $F_{1|X}(\cdot|X)$
is a monotone increasing function almost all $X$, applying Lemmas
9.9 and 9.10 in \citet{kosorok2008} implies that the VC index of
$\mathcal{F}_{2n}$ is equal to 2, and therefore $N(\epsilon,\mathcal{F}_{2n},L_{2}(\mathbb{P}))\lesssim\epsilon^{-2}$
for any probability measure $\mathbb{P}$. By the proof of Theorem
3 in \citet{andrews1994empirical}, one obtains that 
\begin{align*}
\sup_{Q}\log N\left(\epsilon,\mathcal{F}_{n},L_{2}(Q)\right) & \leq\sup_{Q}\log N\left(\frac{\epsilon}{2},\mathcal{F}_{1n},L_{2}(Q)\right)+\sup_{Q}\log N\left(\frac{\epsilon}{2},\mathcal{F}_{2n},L_{2}(Q)\right)\lesssim\log\epsilon^{-2},
\end{align*}
 where the supremum is taken over all finite probability measures
$Q$. Therefore, 
\[
\int_{0}^{1}\sqrt{\sup_{Q}\log N(\epsilon,\mathcal{F}_{n},L_{2}(Q))}d\epsilon\lesssim\int_{0}^{1}\sqrt{\log(1/\epsilon^{2})}d\epsilon=\int_{0}^{\infty}u^{1/2}\exp(-\frac{u}{2})du<\infty,
\]
which implies that $\mathcal{F}_{n}$ satisfies Pollard's entropy
condition. Since Pollard's entropy condition implies manageability
(see \citet[p.2284]{andrews1994empirical}), condition (i) of Theorem
10.6 in \citet{pollard1990empirical} is satisfied. 

Let $\mathcal{Z}_{n}(\omega,t;x)\equiv\sum_{i}^{n}f_{ni}(\omega,t;x)$
and consider $\lim_{n\rightarrow\infty}\mathbb{E}[\mathcal{Z}_{n}(\omega,s;x)\mathcal{Z}_{n}(\omega,t;x)]$
for any $s,t\in\mathcal{Y}$. Then, 
\begin{align*}
 & \lim_{n\rightarrow\infty}\mathbb{E}[\mathcal{Z}_{n}(\omega,s;x)\mathcal{Z}_{n}(\omega,t;x)]\\
= & \lim_{n\rightarrow\infty}\mathbb{E}\left[\sum_{i}^{n}f_{ni}(\omega,s;x)\cdot\sum_{j}^{n}f_{nj}(\omega,t;x)\right]\\
= & \lim_{n\rightarrow\infty}\mathbb{E}\left[\sum_{i}^{n}f_{ni}(\omega,s;x)f_{ni}(\omega,t;x)\right]\ \text{(by i.i.d assumption)}\\
= & \lim_{n\rightarrow\infty}\mathbb{E}\left[\left\{ F_{1|X}(\min(s,t)|X)-F_{1|X}(s|X)\cdot F_{1|X}(t|X)\right\} K^{2}(\frac{X-x}{h_{n}})\frac{p_{0}(X)}{p_{0}(x)^{2}\cdot f_{X}(x)^{2}}\frac{1}{h_{n}^{d_{x}}}\right]\\
= & \lim_{n\rightarrow\infty}\left[\int K^{2}(u)du\left\{ F_{1|X}(\min(s,t)|x)-F_{1|X}(s|x)\cdot F_{1|X}(t|x)\right\} \frac{1}{p_{0}(x)\cdot f_{X}(x)}+O(h_{n}^{d_{x}})\right]\\
= & \int K^{2}(u)du\left\{ F_{1|X}(\min(s,t)|x)-F_{1|X}(s|x)\cdot F_{1|X}(t|x)\right\} \frac{1}{p_{0}(x)\cdot f_{X}(x)}\\
\equiv & H_{1,x}(s,t)
\end{align*}
 is well-defined under Assumptions \ref{assu:smoothness}, \ref{assu:kernel},
and \ref{assu:bandwidth}. Therefore, condition (ii) of Theorem 10.6
in \citet{pollard1990empirical} is satisfied. 

To verify condition (iii) of Theorem 10.6 in \citet{pollard1990empirical},
we can show that 

\begin{align*}
\mathbb{E}[F_{ni}^{2}] & \lesssim\frac{1}{nh_{n}^{d_{x}}}\int K^{2}(\frac{t-x}{h_{n}})f_{X}(t)dt\\
 & \lesssim\frac{1}{nh_{n}^{d_{x}}}\int K(\frac{t-x}{h_{n}})f_{X}(t)dt\\
 & =\frac{1}{n}\left\{ f_{X}(x)+O\left(h_{n}^{2d_{x}}\right)\right\} 
\end{align*}
 by using the standard arguments for kernel estimation. Note that
the second line holds due to the kernel function being bounded. Therefore,
$\sum_{i}\mathbb{E}[F_{ni}^{2}(x)]\lesssim f_{X}(x)+O\left(h_{n}^{2d_{x}}\right)<\infty$,
which implies condition (iii) of Theorem 10.6 in \citet{pollard1990empirical}. 

Take any $\epsilon>0$, then, 
\begin{align*}
\mathbb{E}\left[F_{ni}^{2+\epsilon}\right] & \lesssim\frac{1}{\{nh_{n}^{d_{x}}\}^{1+\epsilon/2}}\int K^{2+\epsilon}(\frac{t-x}{h_{n}})f_{X}(t)dt\\
 & \lesssim\frac{1}{n}\cdot\frac{1}{(nh_{n})^{\epsilon/2}}\cdot\int u^{2}K(u)du=\frac{1}{n}\cdot o(1)
\end{align*}
 under Assumptions \ref{assu:dist_X}, \ref{assu:kernel}, and \ref{assu:bandwidth}.
Therefore, one can show that for any $\epsilon_{0}>0$, we have 
\begin{align*}
\sum_{i}\mathbb{E}\left[F_{ni}^{2}\mathbf{1}(F_{ni}>\epsilon_{0})\right]= & \sum_{i}\mathbb{E}\left[F_{ni}^{2+\epsilon}F_{ni}^{-\epsilon}\mathbf{1}(F_{ni}>\epsilon_{0})\right]\\
\leq & \sum_{i}\frac{1}{\epsilon_{0}^{\epsilon}}\mathbb{E}\left[F_{ni}^{2+\epsilon}\right]\lesssim\frac{1}{\epsilon_{0}^{\epsilon}}\cdot o(1)=o(1),
\end{align*}
 and thus condition (iv) of Theorem 10.6 in \citet{pollard1990empirical}
is satisfied. Lastly, define $\rho_{n}^{2}(s,t)\equiv\sum_{i}\mathbb{E}|f_{ni}(\cdot,s;x)-f_{ni}(\cdot,t;x)|^{2}$
for any $s,t\in\mathcal{Y}$. Without loss of generality, we assume
that $s\geq t$. Then, 
\begin{align}
 & \sum_{i}\mathbb{E}|f_{ni}(\cdot,s;x)-f_{ni}(\cdot,t;x)|^{2}\nonumber \\
= & \frac{1}{nh_{n}^{d_{x}}p_{0}(x)^{2}\cdot f_{X}(x)^{2}}\sum_{i}\mathbb{E}\Bigg|K(\frac{X_{i}-x}{h_{n}})\cdot D_{i}\cdot[\{\mathbf{1}(Y_{i}\leq s)-\mathbf{1}(Y_{i}\leq t)\}-\{F_{1|X}(s|X_{i})-F_{1|X}(t|X_{i})\}]\Bigg|^{2}\nonumber \\
= & \frac{\sum_{i}\mathbb{E}\left[p_{0}(X_{i})\cdot K^{2}(\frac{X_{i}-x}{h_{n}})\cdot\left\{ F_{1|X}(s|X_{i})+F_{1|X}(t|X_{i})-2F_{1|X}(t|X_{i})-(F_{1|X}(s|X_{i})-F_{1|X}(t|X_{i}))^{2}\right\} \right]}{p_{0}(x)^{2}\cdot f_{X}(x)^{2}\cdot nh_{n}^{d_{x}}}\nonumber \\
= & \frac{\int p_{0}(z)\cdot K^{2}(\frac{z-x}{h_{n}})\cdot\left\{ F_{1|X}(s|z)+F_{1|X}(t|z)-2F_{1|X}(t|z)-(F_{1|X}(s|z)-F_{1|X}(t|z))^{2}\right\} dz}{p_{0}(x)^{2}\cdot f_{X}(x)^{2}\cdot h_{n}^{d_{x}}}\nonumber \\
= & \frac{\int K^{2}(u)du}{p_{0}(x)\cdot f_{X}(x)}\cdot\left\{ F_{1|X}(s|x)+F_{1|X}(t|x)-2F_{1|X}(t|x)-(F_{1|X}(s|x)-F_{1|X}(t|x))^{2}\right\} +O\left(h_{n}^{d_{x}}\right)\nonumber \\
= & \frac{\int K^{2}(u)du}{p_{0}(x)\cdot f_{X}(x)}\left\{ \left(F_{1|X}(s|x)-F_{1|X}(t|x)\right)\cdot\left(1-\left(F_{1|X}(s|x)-F_{1|X}(t|x)\right)\right)\right\} +O\left(h_{n}^{d_{x}}\right),\label{eq:FCLT_rho_n}
\end{align}
 where the term $O\left(h_{n}^{d_{x}}\right)$ holds uniformly in
$s$ and $t$. Letting $\rho(s,t)\equiv\lim_{n\rightarrow\infty}\rho_{n}(s,t)$,
we have 
\[
\rho^{2}(s,t)=\frac{\int K^{2}(u)du}{p_{0}(x)\cdot f_{X}(x)}\left\{ \left(F_{1|X}(s|x)-F_{1|X}(t|x)\right)\cdot\left(1-\left(F_{1|X}(s|x)-F_{1|X}(t|x)\right)\right)\right\} 
\]
 for $s\geq t$. Take any $\{s_{n}\}$ and $\{t_{n}\}$ such that
$\rho(s_{n},t_{n})\rightarrow0$. Since the term $O\left(h_{n}^{d_{x}}\right)$
in (\ref{eq:FCLT_rho_n}) holds uniformly in $s$ and $t$ and the
conditional distribution is continuous, $\rho_{n}(s_{n},t_{n})\rightarrow0$,
and this establishes that condition (v) of Theorem 10.6 in \citet{pollard1990empirical}
is met. 

In all, all conditions of Theorem 10.6 in \citet{pollard1990empirical}
are satisfied, and thus applying Theorem 10.6 in \citet{pollard1990empirical}
results in that $\sqrt{nh_{n}^{d_{x}}}\left(\hat{F}_{1|X,n}(\cdot|x)-F_{1|X}(\cdot|x)\right)\Rightarrow\mathbb{G}_{1,x}(\cdot)$
in $l^{\infty}(\mathcal{Y})$, where $\mathbb{G}_{1,x}$ is a zero
mean Gaussian process with covariance kernel $H_{1,x}(\cdot,\cdot)$. 

Similarly, we can show that $\sqrt{nh_{n}^{d_{x}}}\left(\hat{F}_{0|X,n}(\cdot|x)-F_{0|X}(\cdot|x)\right)\Rightarrow\mathbb{G}_{0,x}(\cdot)$
in $l^{\infty}(\mathcal{Y})$ with covariance kernel 
\begin{align*}
H_{0,x}(s,t) & \equiv\int K^{2}(u)du\left\{ F_{0|X}(\min(s,t)|x)-F_{0|X}(s|x)\cdot F_{0|X}(t|x)\right\} \frac{1}{(1-p_{0}(x))\cdot f_{X}(x)}
\end{align*}
 Since each component weakly converges to a Gaussian process in $l^{\infty}(\mathcal{Y})$,
the vector consisting of these components also weakly converges to
$\mathbb{G}_{x}(\cdot)$ (cf. \citet[p.270]{van1998asymptotic}),
and this ends the proof. \end{proof}

\subsection{Proof of Theorem \ref{thm:Asymptotic_Dist of Derivative} }

\begin{proof} The weak convergence is a direct result of Theorem
2.1 of \citet{fang2019inference}. Specifically, Theorem \ref{thm:asymp_marginals}
implies that Assumptions 1 and 2 of \citet{fang2019inference} hold
with $\phi_{L}(\cdot)$ and $\phi_{U}(\cdot)$, and therefore this
ends the proof. \end{proof}

\subsection{Proof of Theorem \ref{thm:bootstrap}}

We verify the sufficient conditions for Theorem 4.3 of \citet{firpo2021uniform}
to prove Theorem \ref{thm:bootstrap}. To this end, we first provide
useful lemmas. 

\begin{lemma}\label{lem:uniform_weak} Suppose that all of the conditions
in Theorem \ref{thm:asymp_marginals} hold and let $x\in int(\mathcal{X})$
be fixed. Then, the weak convergence presented in Theorem \ref{thm:asymp_marginals}
is uniform in underlying probability measures. \end{lemma}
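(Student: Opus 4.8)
The plan is to revisit the proof of Theorem~\ref{thm:asymp_marginals} and to argue that every bound used there holds with constants that do not depend on the underlying DGP, so that a triangular-array functional central limit theorem applies along arbitrary sequences of DGPs drawn from the class $\mathcal{P}$ of distributions satisfying Assumptions~\ref{assu:unconfounded}, \ref{assu:bound_propensity} and \ref{assu:iid}--\ref{assu:bandwidth} with a fixed collection of constants $\underline{p},\bar{p},\underline{f}$, fixed uniform bounds on the relevant derivatives, and a common bandwidth sequence. I take uniform weak convergence to mean $\sup_{P\in\mathcal{P}} d_{BL}(\mathcal{L}_P(\sqrt{nh_n^{d_x}}(\hat{\mathbb{F}}_{\mathbf{Y}|X,n}(\cdot|x)-\mathbb{F}_{\mathbf{Y}|X}(\cdot|x))),\mathcal{L}_P(\mathbb{G}_x(\cdot)))\to 0$, where $d_{BL}$ is the bounded Lipschitz metric on $(l^\infty(\mathcal{Y}))^4$ and $\mathbb{G}_x$, as in Theorem~\ref{thm:asymp_marginals}, has covariance kernels $H_{1,x}$ and $H_{0,x}$ that themselves vary with $P$. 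As is standard, it suffices to show that along every sequence $P_n\in\mathcal{P}$ one has $\sqrt{nh_n^{d_x}}(\hat{\mathbb{F}}_{\mathbf{Y}|X,n}(\cdot|x)-\mathbb{F}_{\mathbf{Y}|X}(\cdot|x))\Rightarrow\mathbb{G}$ under $P_n$, after passing to a subsequence along which the (uniformly bounded) covariance kernels converge, and that $\mathcal{L}_{P_n}(\mathbb{G}_x)$ converges to the same Gaussian law.

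The key steps follow the five conditions of \citet[Theorem 10.6]{pollard1990empirical} verified in the proof of Theorem~\ref{thm:asymp_marginals}. (i) The classes $\mathcal{F}_{1n},\mathcal{F}_{2n}$ (and their $D\mapsto 1-D$ analogues) are built from $\mathcal{C}=\{(-\infty,y]:y\in\mathcal{Y}\}$, whose VC index is $2$ regardless of $P$, and from a single monotone function multiplied by $K((\cdot-x)/h_n)$; hence $\sup_Q\log N(\epsilon,\mathcal{F}_n,L_2(Q))\lesssim\log(1/\epsilon^2)$ with a constant not depending on $P$, so Pollard's entropy condition, and thus manageability, holds uniformly over $\mathcal{P}$. (ii) The bias term is $O(h_n^2)$ with a constant depending only on the uniform bounds on $F_{j|X}^{(1)},F_{j|X}^{(2)},p_0^{(1)},p_0^{(2)},f_X^{(1)},f_X^{(2)}$ from Assumptions~\ref{assu:dist_X} and \ref{assu:smoothness}, hence $o((nh_n^{d_x})^{-1/2})$ uniformly by Assumption~\ref{assu:bandwidth}(iii). (iii) The limiting kernels $H_{1,x},H_{0,x}$ are bounded above uniformly over $\mathcal{P}$ using $\underline{p},\underline{f}$ and $\int K^2$, and the pre-limit covariances converge to them with remainder $O(h_n^{d_x})$ whose constant again depends only on the uniform bounds, so finite-dimensional convergence is uniform. (iv) $\mathbb{E}_P[F_{ni}^{2+\epsilon}]=n^{-1}\cdot o(1)$ with the $o(1)$ controlled by $\sup_x f_X(x)$, $\int u^2 K(u)du$ and $(nh_n)^{-\epsilon/2}$, all uniform over $\mathcal{P}$, so the Lindeberg-type negligibility conditions hold uniformly. (v) The standard-deviation semimetric satisfies $\rho_n^2(s,t)=\frac{\int K^2}{p_0(x)f_X(x)}(F_{1|X}(s|x)-F_{1|X}(t|x))(1-(F_{1|X}(s|x)-F_{1|X}(t|x)))+O(h_n^{d_x})$ with the $O(h_n^{d_x})$ uniform in $(s,t)$ and in $P$, and, since $Y_1,Y_0$ are continuous, $s\mapsto F_{j|X}(s|x)$ is continuous, so the associated modulus is controlled uniformly over $\mathcal{P}$, giving asymptotic $\rho$-equicontinuity uniformly in $P$.

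Combining (i)--(v), the triangular-array functional CLT applies to any $P_n\in\mathcal{P}$: passing to a subsequence along which $H_{j,x}$ converges (possible because these kernels form a uniformly bounded, uniformly $\rho$-equicontinuous, hence relatively compact, family in $C(\mathcal{Y}\times\mathcal{Y})$), we obtain $\sqrt{nh_n^{d_x}}(\hat{F}_{1|X,n}(\cdot|x)-F_{1|X}(\cdot|x))\Rightarrow\mathbb{G}_1$ in $l^\infty(\mathcal{Y})$ under $P_n$, with $\mathbb{G}_1$ mean-zero Gaussian of covariance the limit kernel; since $\mathbb{G}_{1,x}$ under $P_n$ is a tight mean-zero Gaussian element of $l^\infty(\mathcal{Y})$ with the same converging covariance, its law also converges to $\mathcal{L}(\mathbb{G}_1)$, and similarly for the $Y_0$ coordinate. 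Stacking the four coordinates (as at the end of the proof of Theorem~\ref{thm:asymp_marginals}, via \citet[p.270]{van1998asymptotic}) and invoking the subsequence characterization of uniform convergence in $d_{BL}$ yields the claim; this is exactly the input required for Theorem 4.3 of \citet{firpo2021uniform}.

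The hard part will be making the asymptotic equicontinuity in step (v) genuinely uniform in $P$, i.e. showing $\lim_{\delta\downarrow0}\limsup_{n}\sup_{P\in\mathcal{P}}\mathbb{E}_P^{*}[\sup_{\rho_n(s,t)<\delta}|\mathcal{Z}_n(\cdot,s;x)-\mathcal{Z}_n(\cdot,t;x)|]=0$. This follows from a uniform-in-$P$ maximal inequality obtained by chaining against the $P$-free uniform entropy integral $\int_0^1\sqrt{\sup_Q\log N(\epsilon,\mathcal{F}_n,L_2(Q))}\,d\epsilon<\infty$ from step (i), together with the uniform envelope bound $\sum_i\mathbb{E}_P[F_{ni}^2]\lesssim\sup_x f_X(x)+o(1)$ from step (iv); one must check that the remainder $O(h_n^{d_x})$ in $\rho_n$ does not distort the small-$\delta$ behaviour of the covering numbers, which is where the uniform continuity of $s\mapsto F_{j|X}(s|x)$ — a consequence of the standing assumption that the potential outcomes are continuously distributed — is used. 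A secondary point to handle with care is that the target $\mathbb{G}_x$ is itself $P$-dependent through $p_0(x)$, $f_X(x)$ and $F_{j|X}$, so ``uniform weak convergence'' must be phrased in terms of $d_{BL}$ between the law of the normalized estimator and the law of $\mathbb{G}_x$ under the same $P$, which the subsequence argument above accommodates without extra work.
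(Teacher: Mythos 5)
Your proposal is correct in substance but takes a noticeably different route from the paper. The paper's own proof is essentially two lines: it observes that the uniform entropy integral $\int_{0}^{1}\sqrt{\sup_{Q}\log N(\epsilon,\mathcal{F}_{n},L_{2}(Q))}\,d\epsilon<\infty$ was already established in the proof of Theorem \ref{thm:asymp_marginals}, extends it to $\int_{0}^{\infty}$ using the finite diameter of $\mathcal{F}_{n}$, checks the envelope condition, and then invokes Theorem 2.8.3 of \citet{vvw1996} (the uniform-in-$P$ Donsker theorem) as a black box. You instead reconstruct that theorem from first principles: you re-verify each of Pollard's five conditions with constants depending only on $\underline{p},\bar{p},\underline{f}$ and the uniform derivative bounds, phrase uniformity via the bounded-Lipschitz metric, and run a subsequence/triangular-array argument along arbitrary $P_{n}\in\mathcal{P}$. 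Your route is longer but arguably more careful on one point the paper glosses over, namely that $\mathcal{F}_{n}$ changes with $n$, so a triangular-array formulation is the natural setting; it also makes explicit that the limit $\mathbb{G}_{x}$ is $P$-dependent and that the right statement compares the two laws under the same $P$. The one loose spot is your justification of relative compactness of the covariance-kernel family via "uniform $\rho$-equicontinuity": equicontinuity of $\{F_{j|X}(\cdot|x):P\in\mathcal{P}\}$ does not follow from the standing assumptions (continuity of each $F_{j|X}(\cdot|x)$ gives no uniform modulus without, say, a uniform density bound). This is easily repaired — Helly selection gives a subsequence along which the conditional distribution functions, and hence the kernels, converge pointwise on a dense set, which suffices for finite-dimensional convergence, with tightness of the limit supplied by the $P$-free entropy integral rather than by smoothness of $F$ — but as written that step is not fully supported. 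The paper's citation-based proof buys brevity at the cost of hiding exactly the uniform equicontinuity work you correctly flag as the hard part.
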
 

\begin{proof} We have already shown that $\int_{0}^{1}\sqrt{\sup_{Q}\log N(\epsilon,\mathcal{F}_{n},L_{2}(Q))}d\epsilon<\infty$
in the proof Theorem \ref{thm:asymp_marginals}. Since the diameter
of $\mathcal{F}_{n}$ is finite for every $n$, this implies that
\[
\int_{0}^{\infty}\sqrt{\sup_{Q}\log N\left(\epsilon,\mathcal{F}_{n},L_{2}(Q)\right)}d\epsilon<\infty.
\]
In addition, the condition on the envelope function that is assumed
in Theorem 2.8.3 of \citet{vvw1996} holds. Therefore, the weak convergence
is uniform in the underlying probability measure by Theorem 2.8.3
of \citet{vvw1996}. \end{proof}

The following lemma establishes the validity of the bootstrap for
the kernel estimators of the conditional distributions of the potential
outcomes given $X$. 

\begin{lemma}\label{lem:Asymptotic_Marginal_Bootstrap} Suppose that
all of the conditions in Theorem \ref{thm:bootstrap} hold. Let $x\in int(\mathcal{X})$
be fixed. Then, we have, conditional on the data, 
\[
\sqrt{nh_{n}^{d_{x}}}\mathbb{\hat{F}}_{\mathbf{Y}|X,n}^{*}(\cdot|x)\Rightarrow\mathbb{G}_{x}(\cdot)
\]
 in $\left(l^{\infty}(\mathcal{Y})\right)^{4}$. \end{lemma}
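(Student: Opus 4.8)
The plan is to establish the bootstrap validity for the kernel estimators of the conditional distributions of the potential outcomes by verifying that the multiplier bootstrap process satisfies a conditional functional central limit theorem. First I would decompose $\sqrt{nh_n^{d_x}}\hat{\mathbb{F}}^*_{\mathbf{Y}|X,n}(\cdot|x)$ componentwise; it suffices to treat one representative component, say $\sqrt{nh_n^{d_x}}\sum_i B_i \hat\psi_{1,i}(\cdot|x)$, since weak convergence of each component in $l^\infty(\mathcal{Y})$ implies joint weak convergence of the vector (as in the last step of the proof of Theorem \ref{thm:asymp_marginals}). Writing $\hat\psi_{1,i}(y|x) = \{\mathbf 1(Y_i\le y) - \hat F_{1|X,n}(y|x)\} D_i K(\frac{X_i-x}{h_n}) / \sum_j D_j K(\frac{X_j-x}{h_n})$, I would replace the random denominator by its probability limit $p_0(x)f_X(x)$ (with an $o_p(1)$ error, uniformly in $y$, by the same argument as in Theorem \ref{thm:asymp_marginals}) and replace $\hat F_{1|X,n}(y|x)$ inside the bracket by $F_{1|X}(y|X_i)$, absorbing the difference into a term that vanishes uniformly because $\hat F_{1|X,n}(\cdot|x)-F_{1|X}(\cdot|x)$ converges uniformly to zero and $\sum_i B_i D_i K(\cdot)$ is stochastically bounded after normalization. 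This reduces the problem to showing that the multiplier process $\frac{1}{\sqrt{nh_n^{d_x}}}\sum_i B_i f_{ni}(\omega,y;x)$ — with the same $f_{ni}$ as in the proof of Theorem \ref{thm:asymp_marginals} — converges weakly, conditional on the data, to $\mathbb{G}_{1,x}(\cdot)$.

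The key step is then to invoke a multiplier functional central limit theorem. The natural tool is the conditional multiplier CLT for manageable (or Pollard-entropy) triangular arrays; I would cite the version in \citet{fang2019inference} (their Lemma S.3.9 or the analogous statement used by \citet{firpo2021uniform}), which requires exactly: (a) the unconditional weak convergence of $\frac{1}{\sqrt{nh_n^{d_x}}}\sum_i f_{ni}(\cdot,y;x)$ to $\mathbb{G}_{1,x}$, already established in Theorem \ref{thm:asymp_marginals}; (b) the manageability / Pollard entropy bound on the class $\mathcal{F}_n$, also already verified there; (c) the Lindeberg-type negligibility of the envelopes $F_{ni}$, again verified; and (d) the moment conditions on the multipliers $B$, namely $\mathbb{E}[B]=0$, $\mathrm{Var}(B)=1$, and $\int_0^\infty \sqrt{\Pr(|B|>x)}\,dx<\infty$, which are precisely Assumption \ref{assu:boot_weight}. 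The role of the integrability condition on $B$ is to control the tail contributions of the multipliers in the maximal inequality underlying the conditional CLT; the combination of this with the envelope bound $\sum_i \mathbb{E}[F_{ni}^2 \mathbf 1(F_{ni}>\epsilon_0)] = o(1)$ gives the required conditional Lindeberg condition for the product $B_i f_{ni}$.

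Having obtained conditional weak convergence for the leading term, I would argue that the approximation errors from the denominator replacement and the $\hat F_{1|X,n}$-into-$F_{1|X}$ replacement are asymptotically negligible conditional on the data (in outer probability), using the fact that $\sum_i |B_i| D_i K(\frac{X_i-x}{h_n})/\sqrt{nh_n^{d_x}} = O_p(1)$ conditionally — this follows from $\mathbb{E}|B|<\infty$ and a law of large numbers for the kernel-weighted sums — multiplied by uniformly-vanishing deterministic/stochastic factors. Repeating the argument verbatim for the $\hat F_{0|X,n}$-based component (with $p_0(x)$ replaced by $1-p_0(x)$ and $D_i$ by $1-D_i$) and stacking the four coordinates gives $\sqrt{nh_n^{d_x}}\hat{\mathbb{F}}^*_{\mathbf{Y}|X,n}(\cdot|x)\Rightarrow \mathbb{G}_x(\cdot)$ in $(l^\infty(\mathcal{Y}))^4$ conditional on the data.

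The main obstacle I anticipate is handling the estimated recentering term $\hat F_{1|X,n}(y|x)$ inside the influence function rather than the true $F_{1|X}(y|x)$: one must show that $\frac{1}{\sqrt{nh_n^{d_x}}}\sum_i B_i\{\hat F_{1|X,n}(y|x) - F_{1|X}(y|X_i)\} D_i K(\frac{X_i-x}{h_n})$ is $o_p(1)$ uniformly in $y$, conditional on the data. The delicate point is that $\hat F_{1|X,n}(y|x)$ depends on the full sample, so this is not a clean empirical-process term; I would handle it by writing it as $\hat F_{1|X,n}(y|x)\cdot\frac{1}{\sqrt{nh_n^{d_x}}}\sum_i B_i D_i K(\cdot) - \frac{1}{\sqrt{nh_n^{d_x}}}\sum_i B_i F_{1|X}(y|X_i) D_i K(\cdot)$, noting that the second piece is a genuine multiplier empirical process over the Donsker-type class $\{F_{1|X}(y|\cdot): y\in\mathcal{Y}\}$ (monotone, hence manageable, as shown via $\mathcal{F}_{2n}$ in Theorem \ref{thm:asymp_marginals}), and that the first piece has $\hat F_{1|X,n}(y|x)\to F_{1|X}(y|x)$ uniformly while the remaining factor is a centered (in the limit) multiplier sum, so the cross terms telescope against the corresponding $F_{1|X}(y|x)$-recentered version; the algebra is routine but must be done carefully to preserve uniformity in $y$ and the conditional-on-data mode of convergence.
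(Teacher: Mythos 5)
Your proposal follows essentially the same route as the paper's proof: reduce to one component, split the estimated influence function into a leading multiplier empirical process plus a remainder carrying the estimation error of $\hat F_{1|X,n}$ and the denominator, apply a conditional multiplier FCLT (the paper cites Theorem 10.4 of \citet{kosorok2008} plus Theorem \ref{thm:asymp_marginals}; your Fang--Santos/Firpo--Galvao--Thompson reference plays the identical role), and show the remainder is conditionally negligible (the paper does this by re-running Pollard's Theorem 10.6 and showing the limiting covariance kernel of the remainder degenerates to zero, while you propose an equivalent telescoping argument; the paper also recenters at $F_{1|X}(y|x)$ rather than $F_{1|X}(y|X_i)$, an immaterial difference).

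One auxiliary claim in your negligibility step is false as written: $\frac{1}{\sqrt{nh_n^{d_x}}}\sum_i |B_i| D_i K\bigl(\frac{X_i-x}{h_n}\bigr)$ is \emph{not} $O_p(1)$; its conditional mean is of order $\mathbb{E}|B|\cdot\sqrt{nh_n^{d_x}}\,p_0(x)f_X(x)\rightarrow\infty$, so multiplying it by an $o_p(1)$ factor does not yield negligibility. You must keep the signed multipliers and use that the \emph{centered} sum $\frac{1}{\sqrt{nh_n^{d_x}}}\sum_i B_i D_i K\bigl(\frac{X_i-x}{h_n}\bigr)$ is conditionally $O_p(1)$ (Chebyshev, using $\mathbb{E}[B]=0$ and $Var(B)=1$), which is exactly the bound your own telescoping argument in the final paragraph relies on; with that substitution the denominator-replacement step goes through and the rest of the argument is sound.
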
 

\begin{proof} I only consider $\frac{1}{\sqrt{nh_{n}^{d_{x}}}}\sum_{i}B_{i}\hat{\psi}_{1,i}(\cdot|x)\Rightarrow\mathbb{G}_{1,x}(\cdot)$
conditional on the data, and one can prove the results for other components
by using a similar argument. In this proof, we let $\mathbb{E}_{B}[\cdot]$
denote the conditional expectation operator on the data. Recall that
\begin{align*}
\frac{1}{\sqrt{nh_{n}^{d_{x}}}}\sum_{i}B_{i}\cdot\hat{\psi}_{1,i}(y|x) & =\frac{\frac{1}{\sqrt{nh_{n}^{d_{x}}}}\sum_{i}B_{i}\{\mathbf{1}(Y_{i}\leq y)-\hat{F}_{1|X,n}(y|x)\}\cdot D_{i}\cdot K(\frac{X_{i}-x}{h_{n}})}{\frac{1}{nh_{n}^{d_{x}}}\sum_{i}^{n}K(\frac{X_{i}-x}{h_{n}})}\\
 & =\frac{1}{\hat{p}_{n}(x)\cdot\hat{f}_{X}(x)}\frac{1}{\sqrt{nh_{n}^{d_{x}}}}\sum_{i}B_{i}\{\mathbf{1}(Y_{i}\leq y)-\hat{F}_{1|X,n}(y|x)\}\cdot D_{i}\cdot K(\frac{X_{i}-x}{h_{n}})\\
 & =\frac{1}{\hat{p}_{n}(x)\cdot\hat{f}_{X}(x)}\frac{1}{\sqrt{nh_{n}^{d_{x}}}}\sum_{i}B_{i}\{\mathbf{1}(Y_{i}\leq y)-F_{1|X}(y|x)\}\cdot D_{i}\cdot K(\frac{X_{i}-x}{h_{n}})\\
 & +\frac{1}{\hat{p}_{n}(x)\cdot\hat{f}_{X}(x)}\frac{1}{\sqrt{nh_{n}^{d_{x}}}}\sum_{i}B_{i}\{F_{1|X}(y|x)-\hat{F}_{1|X,n}(y|x)\}\cdot D_{i}\cdot K(\frac{X_{i}-x}{h_{n}})
\end{align*}
 where $\hat{f}_{X}(x)=\frac{1}{nh_{n}^{d_{x}}}\sum_{i}^{n}K(\frac{X_{i}-x}{h_{n}})$.
I first show that $\frac{1}{\hat{p}_{n}(x)\cdot\hat{f}_{X}(x)}\frac{1}{\sqrt{nh_{n}^{d_{x}}}}\sum_{i}B_{i}\{F_{1|X}(y|x)-\hat{F}_{1|X,n}(y|x)\}\cdot D_{i}\cdot K(\frac{X_{i}-x}{h_{n}})$
converges to a zero process conditional on sample path by verifying
conditions (i)--(v) of Theorem 10.6 in \citet{pollard1990empirical}.
Define 
\begin{align*}
f_{i}^{u}(y|x) & \equiv\frac{1}{\hat{p}_{n}(x)\hat{f}_{X}(x)\sqrt{nh_{n}^{d_{x}}}}B_{i}\{F_{1|X}(y|x)-\hat{F}_{1|X,n}(y|x)\}\cdot D_{i}\cdot K(\frac{X_{i}-x}{h_{n}}),\\
F_{i}^{u}(x) & \equiv\frac{B_{i}}{\hat{p}_{n}(x)\hat{f}_{X}(x)\sqrt{nh_{n}^{d_{x}}}}K(\frac{X_{i}-x}{h_{n}}).
\end{align*}
 Since $\hat{F}_{1|X,n}(\cdot|x)$ and $F_{1|X}(\cdot|x)$ are monotone
increasing functions and bounded by 1, $\{\hat{F}_{1|X,n}(y|x):y\in\mathcal{Y}\}$
and $\{F_{1|X}(y|x):y\in\mathcal{Y}\}$ satisfy the Pollard's entropy
condition by the same argument in the proof of Theorem \ref{thm:asymp_marginals}.
Therefore, these classes are manageable. 

For the second condition of Theorem 10.6 in \citet{pollard1990empirical},
note that $\mathbb{E}_{B}[f_{i}^{u}(y|x)]=0$ since $\mathbb{E}[B]=0$
and $B$ is independent of data. In consequence, we have, for any
$y_{1},y_{2}\in\mathcal{Y}$, 
\begin{align*}
 & \Bigg|\mathbb{E}_{B}\left[\sum_{i}^{n}f_{i}^{u}(y_{1}|x)\sum_{j}^{n}f_{j}^{u}(y_{2}|x)\right]\Bigg|\\
= & \Bigg|\mathbb{E}_{B}\left[\sum_{i}^{n}f_{i}^{u}(y_{1}|x)\cdot f_{i}^{u}(y_{2}|x)\right]\Bigg|\\
\leq & \frac{1}{\left(\hat{p}_{n}(x)\hat{f}_{X}(x)\right)^{2}}\sup_{y}\Big|F_{1|X}(y|x)-\hat{F}_{1|X,n}(y|x)\Big|^{2}\cdot\frac{1}{nh_{n}^{d_{x}}}\sum_{i}^{n}K^{2}(\frac{X_{i}-x}{h_{n}}),
\end{align*}
 conditional on sample path. We have $\sqrt{nh_{n}^{d_{x}}}\left(F_{1|X}(\cdot|x)-\hat{F}_{1|X,n}(\cdot|x)\right)\Rightarrow\mathbb{G}_{1,x}(\cdot)$
in $l^{\infty}(\mathcal{Y})$ from Theorem \ref{thm:asymp_marginals},
and therefore $\sup_{y}|F_{1|X}(y|x)-\hat{F}_{1|X,n}(y|x)|\overset{p}{\rightarrow}0$.
We also have $\hat{f}_{X}(x)\overset{p}{\rightarrow}f_{X}(x)>0$,
$\hat{p}_{n}(x)\overset{p}{\rightarrow}p_{0}(x)$, and $\frac{1}{nh_{n}^{d_{x}}}\sum_{i}^{n}K^{2}(\frac{X_{i}-x}{h_{n}})\overset{p}{\rightarrow}\int K^{2}(u)du$
under the set of assumptions. In all, we obtain that 
\[
H_{1,x}^{B}(y_{1},y_{2})\equiv\mathbb{E}_{B}\left[\sum_{i}^{n}f_{i}^{u}(y_{1}|x)\sum_{j}^{n}f_{j}^{u}(y_{2}|x)\right]\overset{p}{\Rightarrow}0,
\]
 and thus condition (ii) of Theorem 10.6 in \citet{pollard1990empirical}
is satisfied with $H_{1}^{B}(y_{1},y_{2})$. 

Recall that 
\[
\sum_{i}^{n}\mathbb{E}_{B}\left[F_{i}^{u}(x)^{2}\right]=\frac{1}{\left(\hat{p}_{n}(x)\hat{f}_{X}(x)\right)^{2}}\frac{1}{nh_{n}^{d_{x}}}\sum_{i}^{n}K^{2}(\frac{X_{i}-x}{h_{n}})
\]
 conditional on sample path. Therefore, $\sum_{i}^{n}\mathbb{E}_{B}[F_{i}^{u}(x)^{2}]\overset{p}{\rightarrow}\frac{1}{\left(p_{0}(x)f_{X}(x)\right)^{2}}\int K^{2}(u)du<\infty$,
which implies that condition (iii) is satisfied. In addition, it can
be easily shown that 
\begin{align*}
 & \sum_{i}^{n}\mathbb{E}_{B}\left[F_{i}^{u}(x)^{2}\cdot\mathbf{1}(F_{i}^{u}(x)>\epsilon)\right]\\
= & \sum_{i}^{n}\frac{1}{\left(\hat{p}_{n}(x)\hat{f}_{X}(x)\right)^{2}}\frac{1}{nh_{n}^{d_{x}}}K^{2}(\frac{X_{i}-x}{h_{n}})\cdot\mathbf{1}\left(\frac{1}{\left(\hat{p}_{n}(x)\hat{f}_{X}(x)\right)^{2}}\frac{1}{nh_{n}^{d_{x}}}K^{2}(\frac{X_{i}-x}{h_{n}})>\epsilon^{2}\right)\\
\overset{p}{\rightarrow} & 0
\end{align*}
 for each $\epsilon>0$, so condition (iv) is also satisfied. 

Lastly, note that for any $s,t\in\mathcal{Y}$ such that $s\geq t$,
\begin{align*}
\rho_{n}^{u}(s,t)^{2} & \equiv\sum_{i}\mathbb{E}_{B}|f_{i}^{u}(s|x)-f_{i}^{u}(t|x)|^{2}\\
 & =\left\{ (F_{1|X}(s|x)-\hat{F}_{1|X,n}(s|x))-(F_{1|X}(t|x)-\hat{F}_{1|X,n}(t|x))\right\} ^{2}\\
 & \ \ \ \times\frac{1}{\left(\hat{p}_{n}(x)\hat{f}_{X}(x)\right)^{2}}\frac{1}{nh_{n}^{d_{x}}}\sum_{i}^{n}D_{i}K^{2}(\frac{X_{i}-x}{h_{n}})\\
 & \overset{p}{\Rightarrow}\frac{\int K^{2}(u)du}{p_{0}(x)\cdot f_{X}(x)}\left\{ \left(F_{1|X}(s|x)-F_{1|X}(t|x)\right)\cdot\left(1-\left(F_{1|X}(s|x)-F_{1|X}(t|x)\right)\right)\right\} \\
 & \equiv\rho^{u}(s,t)
\end{align*}
 by the same argument above. Hence, $\rho^{u}(s,t)$ is well-defined,
and if $\rho^{u}(s_{n},t_{n})\rightarrow0$, then $\rho_{n}^{u}(s_{n},t_{n})\rightarrow0$.
Therefore, condition (v) of Theorem 10.6 in \citet{pollard1990empirical}
is met. Applying Theorem 10.6 in \citet{pollard1990empirical} leads
to that, conditional on data, $\frac{1}{\hat{p}_{n}(x)\cdot\hat{f}_{X}(x)}\frac{1}{\sqrt{nh_{n}^{d_{x}}}}\sum_{i}B_{i}\{F_{1|X}(\cdot|x)-\hat{F}_{1|X,n}(\cdot|x)\}\cdot D_{i}\cdot K(\frac{X_{i}-x}{h_{n}})$
converges to a zero process in $l^{\infty}(\mathcal{Y})$. 

By applying Theorem 10.4 in \citet{kosorok2008} and Theorem \ref{thm:asymp_marginals},
we have 
\[
\frac{1}{\hat{p}_{n}(x)\cdot\hat{f}_{X}(x)}\frac{1}{\sqrt{nh_{n}^{d_{x}}}}\sum_{i}B_{i}\{\mathbf{1}(Y_{i}\leq\cdot)-F_{1|X}(\cdot|x)\}\cdot D_{i}\cdot K(\frac{X_{i}-x}{h_{n}})\Rightarrow\mathbb{G}_{1,x}(\cdot)\ \text{in }l^{\infty}(\mathcal{Y})
\]
 conditional on sample path. Therefore, 
\[
\frac{1}{\sqrt{nh_{n}^{d_{x}}}}\sum_{i}B_{i}\hat{\psi}_{1,i}(\cdot|x)\Rightarrow\mathbb{G}_{1,x}(\cdot)\ \text{in }l^{\infty}(\mathcal{Y})
\]
 conditional on the data. Using a similar argument, we can show that
\[
\frac{1}{\sqrt{nh_{n}^{d_{x}}}}\sum_{i}B_{i}\hat{\psi}_{0,i}(\cdot|x)\Rightarrow\mathbb{G}_{0,x}(\cdot)\ \text{in }l^{\infty}(\mathcal{Y})
\]
 conditional on the data. In all, $\sqrt{nh_{n}^{d_{x}}}\mathbb{\hat{F}}_{\mathbf{Y}|X,n}^{*}(\cdot|x;B)\Rightarrow\mathbb{G}_{x}(\cdot)$
conditional on the data. \end{proof}

\paragraph{Proof of Theorem \ref{thm:bootstrap} }

\begin{proof} We verify the assumption in Theorem 4.3 of \citet{firpo2021uniform}
(i.e., Assumptions A1-A3). Theorem \ref{thm:asymp_marginals} implies
Assumption A1 with $r_{n}=\sqrt{nh_{n}^{d_{x}}}$. Assumption A2 in
\citet{firpo2021uniform} is satisfied because Lemma \ref{lem:uniform_weak}
shows that the weak convergence in Theorem \ref{thm:asymp_marginals}
is uniform in underlying probability measures. Lastly, Lemma \ref{lem:Asymptotic_Marginal_Bootstrap}
establishes the bootstrap validity. Lemma \ref{lem:Asymptotic_Marginal_Bootstrap},
together with Lemma A.2 of \citet{linton2010improved}, implies that
Assumption A3 in \citet{firpo2021uniform} is met. In all, applying
Theorem 4.3 of \citet{firpo2021uniform} ends the proof. \end{proof}

\subsection{Proof of Theorem \ref{thm:endo_marginal_weak_limit} }

We first present the forms of covariance kernels in the theorem: 
\begin{align*}
H_{1,x}^{e}(s,t) & \equiv\int K^{2}(u)du\left\{ F_{1|1X}(\min(s,t)|x)-F_{1|1X}(s|x)\cdot F_{1|1X}(t|x)\right\} \frac{1}{p_{0}(x)\cdot f_{X}(x)},\\
H_{0,x}^{e}(s,t) & \equiv\int K^{2}(u)du\left\{ F_{0|0X}(\min(s,t)|x)-F_{0|0X}(s|x)\cdot F_{0|0X}(t|x)\right\} \frac{1}{(1-p_{0}(x))\cdot f_{X}(x)},\\
H_{Y}^{e}(s,t) & \equiv\int K^{2}(u)du\left\{ F_{Y|X}(\min(s,t)|x)-F_{Y|X}(s|x)\cdot F_{Y|X}(t|x)\right\} \frac{1}{f_{X}(x)}.
\end{align*}

\begin{proof} Note that 
\begin{align*}
 & \sqrt{nh_{n}^{d_{x}}}\left(\hat{F}_{1|1X,n}(y|x)-F_{1|1X}(y|x)\right)\\
 & =\frac{\frac{1}{\sqrt{nh_{n}^{d_{x}}}}\sum_{i}^{n}\{\mathbf{1}(Y_{i}\leq y)-F_{1|1X}(y|x)\}D_{i}K(\frac{X_{i}-x}{h_{n}})}{\frac{1}{nh_{n}^{d_{x}}}\sum_{i}^{n}D_{i}K(\frac{X_{i}-x}{h_{n}})}\\
 & =\frac{1}{p_{0}(x)f_{X}(x)+o_{p}(1)}\frac{1}{\sqrt{nh_{n}^{d_{x}}}}\sum_{i}^{n}\{\mathbf{1}(Y_{i}\leq y)-F_{1|1X}(y|x)\}D_{i}K(\frac{X_{i}-x}{h_{n}})\\
 & =\frac{1}{p_{0}(x)f_{X}(x)+o_{p}(1)}\frac{1}{\sqrt{nh_{n}^{d_{x}}}}\sum_{i}^{n}\{\mathbf{1}(Y_{i}\leq y)-F_{1|1X}(y|X_{i})\}D_{i}K(\frac{X_{i}-x}{h_{n}})\\
 & \ +\frac{1}{p_{0}(x)f_{X}(x)+o_{p}(1)}\frac{1}{\sqrt{nh_{n}^{d_{x}}}}\sum_{i}^{n}\{F_{1|1X}(y|X_{i})-F_{1|1X}(y|x)\}D_{i}K(\frac{X_{i}-x}{h_{n}})
\end{align*}
 Using the same argument for the proof of Theorem \ref{thm:asymp_marginals},
\[
\sqrt{nh_{n}^{d_{x}}}\left(\hat{F}_{1|1X,n}(\cdot|x)-F_{1|1X}(\cdot|x)\right)\Rightarrow\mathbb{G}_{1}^{e}(\cdot)\ \text{in }l^{\infty}(\mathcal{Y}),
\]
 where $\mathbb{G}_{1,x}^{e}(\cdot)$ is a Gaussian process with zero
mean and covariance kernel 
\[
H_{1,x}^{e}(s,t)\equiv\int K^{2}(u)du\left\{ F_{1|1X}(\min(s,t)|x)-F_{1|1X}(s|x)\cdot F_{1|1X}(t|x)\right\} \frac{1}{p_{0}(x)\cdot f_{X}(x)}.
\]
 Similarly, 
\[
\sqrt{nh_{n}^{d_{x}}}\left(\hat{F}_{0|0X,n}(\cdot|x)-F_{0|0X}(\cdot|x)\right)\Rightarrow\mathbb{G}_{0,x}^{e}(\cdot)\ \text{in }l^{\infty}(\mathcal{Y}),
\]
 where $\mathbb{G}_{0}^{e}(\cdot)$ is a Gaussian process with zero
mean and covariance kernel 
\[
H_{0,x}^{e}(s,t)\equiv\int K^{2}(u)du\left\{ F_{0|0X}(\min(s,t)|x)-F_{0|0X}(s|x)\cdot F_{0|0X}(t|x)\right\} \frac{1}{(1-p_{0}(x))\cdot f_{X}(x)}
\]

Lastly, I prove the weak convergence of that $\sqrt{nh_{n}^{d_{x}}}\left(\hat{F}_{Y|X,n}(\cdot|x)-F_{Y|X}(\cdot|x)\right)$.
Note that 
\begin{align*}
\hat{F}_{Y|X,n}(y|x)-F_{Y|X}(y|x) & =\frac{\frac{1}{nh_{n}^{d_{x}}}\sum_{i}^{n}(\mathbf{1}(Y_{i}\leq y)-F_{Y|X}(y|x))K(\frac{X_{i}-x}{h_{n}})}{f_{X}(x)+o_{p}(1)}\\
 & =\frac{\frac{1}{nh_{n}^{d_{x}}}\sum_{i}^{n}(\mathbf{1}(Y_{i}\leq y)-F_{Y|X}(y|X_{i}))K(\frac{X_{i}-x}{h_{n}})}{f_{X}(x)+o_{p}(1)}\ \\
 & \ +\frac{\frac{1}{nh_{n}^{d_{x}}}\sum_{i}^{n}(F_{Y|X}(y|X_{i})-F_{Y|X}(y|x))K(\frac{X_{i}-x}{h_{n}})}{f_{X}(x)+o_{p}(1)}.
\end{align*}
 The latter term is $o_{p}\left((nh_{n})^{-1/2}\right)$ under the
conditions in the theorem. We then focus on the first term. The Pollard's
entropy condition is satisfied for 
\[
\mathcal{F}_{n}^{e}\equiv\left\{ \frac{1}{f_{X}(x)}(\mathbf{1}(Y\leq y)-F_{Y|X}(y|X))K(\frac{X-x}{h_{n}}):y\in\mathcal{Y}\right\} 
\]
 by the same argument in the proof of Theorem \ref{thm:asymp_marginals}.
For given $\omega\in\Omega$, let 
\[
f_{ni}(\omega,y;x)\equiv\frac{1}{\sqrt{nh_{n}^{d_{x}}}f_{X}(x)}\{\mathbf{1}(Y_{i}\leq y)-F_{Y|X}(y|X_{i})\}K(\frac{X_{i}-x}{h_{n}})
\]
 and $\mathcal{Z}_{n}(\omega,y;x)\equiv\sum_{i}^{n}f_{ni}(\omega,y;x)$.
Then, for any $s,t\in\mathcal{Y}$, 
\begin{align*}
 & \lim_{n\rightarrow\infty}\mathbb{E}[\mathcal{Z}_{n}(\omega,s;x)\mathcal{Z}_{n}(\omega,t;x)]\\
= & \lim_{n\rightarrow\infty}\mathbb{E}\left[\sum_{i}^{n}f_{ni}(\omega,s;x)f_{ni}(\omega,t;x)\right]\\
= & \lim_{n\rightarrow\infty}\int K^{2}(u)du\left\{ F_{Y|X}(\min(s,t)|x)-F_{Y|X}(s|x)\cdot F_{Y|X}(t|x)+O\left(h_{n}^{d_{x}}\right)\right\} \frac{1}{f_{X}(x)}\\
= & \int K^{2}(u)du\left\{ F_{Y|X}(\min(s,t)|x)-F_{Y|X}(s|x)\cdot F_{Y|X}(t|x)\right\} \frac{1}{f_{X}(x)}\equiv H_{Y}^{e}(s,t)
\end{align*}
 by using the same argument for the proof of Theorem \ref{thm:asymp_marginals}.
The remaining conditions can be verified by the same way as before,
and thus 
\[
\sqrt{nh_{n}^{d_{x}}}\left(\hat{F}_{Y|X,n}(\cdot|x)-F_{Y|X}(\cdot|x)\right)\Rightarrow\mathbb{G}_{Y,x}^{e}(\cdot)\ \text{in }l^{\infty}(\mathcal{Y}),
\]
 where $\mathbb{G}_{Y,x}^{e}(\cdot)$ is a Gaussian process with zero
mean and covariance kernel $H_{Y,x}^{e}(s,t).$ \end{proof}

\subsection{Proof of Theorem \ref{thm:endo_bootstrap} }

\begin{proof} The only thing to show is the bootstrap validity, but
this can be shown by using the same way of the proof of Theorem \ref{thm:bootstrap},
together with Theorem \ref{thm:endo_marginal_weak_limit}. Therefore,
we omit the proof of this theorem. \end{proof} 

\subsection{Proof of Theorem \ref{thm:bootstrap-1} }

Let $x_{1}\in int(\mathcal{X}_{1})$ and define $||K_{1}||_{2}^{2}\equiv\int K_{1}^{2}(u)du$,
$G_{1|X_{1}}(y|x_{1})\equiv\mathbb{E}\left[\frac{F_{1|X}(y|X)}{p_{0}(X)}\Big|X_{1}=x_{1}\right]$,
and $G_{0|X_{1}}(y|x_{1})\equiv\mathbb{E}\left[\frac{F_{0|X}(y|X)}{1-p_{0}(X)}\Big|X_{1}=x_{1}\right]$. 

\begin{lemma}\label{lem:weak_convergence_abrevaya} Suppose that
conditions in Theorem \ref{thm:bootstrap-1} hold. Then, for any given
$x_{1}\in int(\mathcal{X}_{1})$, 
\begin{align*}
\sqrt{nh_{n}^{d_{1}}}\left(\hat{F}_{1|X_{1},n}(\cdot|x_{1})-F_{1|X_{1}}(\cdot|x_{1})\right) & \Rightarrow\tilde{\mathbb{G}}_{1}(\cdot)\ \text{in }l^{\infty}(\mathcal{Y}),\\
\sqrt{nh_{n}^{d_{1}}}\left(\hat{F}_{0|X_{1},n}(\cdot|x_{1})-F_{0|X_{1}}(\cdot|x_{1})\right) & \Rightarrow\tilde{\mathbb{G}}_{0}(\cdot)\ \text{in }l^{\infty}(\mathcal{Y}),
\end{align*}
 where $\tilde{\mathbb{G}}_{1}$ and $\tilde{\mathbb{G}}_{0}$ are
Gaussian processes with mean zero and covariance kernels 
\begin{align*}
\tilde{H}_{1}(y_{1},y_{2}) & \equiv\left\{ \min\left(G_{1|X_{1}}(y_{1}|x_{1}),G_{1|X_{1}}(y_{2}|x_{1})\right)-F_{1|X_{1}}(y_{1}|x_{1})F_{1|X_{1}}(y_{2}|x_{1})\right\} \frac{||K_{1}||_{2}^{2}}{f_{X_{1}}(x_{1})},\\
\tilde{H}_{0}(y_{1},y_{2}) & \equiv\left\{ \min\left(G_{0|X_{1}}(y_{1}|x_{1}),G_{0|X_{1}}(y_{2}|x_{1})\right)-F_{0|X_{1}}(y_{1}|x_{1})F_{0|X_{1}}(y_{2}|x_{1})\right\} \frac{||K_{1}||_{2}^{2}}{f_{X_{1}}(x_{1})},
\end{align*}
 respectively. \end{lemma}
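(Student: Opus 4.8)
The plan is to prove Lemma~\ref{lem:weak_convergence_abrevaya} by following the same route as the proof of Theorem~\ref{thm:asymp_marginals}, i.e., by linearizing the semiparametric estimators and then applying the functional central limit theorem of \citet[Theorem~10.6]{pollard1990empirical}. The additional complication relative to Theorem~\ref{thm:asymp_marginals} is the first-stage parametric estimation of the propensity score, which must be shown not to contribute to the asymptotic variance at the nonparametric rate $\sqrt{nh_{1n}^{d_{1}}}$. I will treat $\hat{F}_{1|X_{1},n}$; the argument for $\hat{F}_{0|X_{1},n}$ is symmetric.

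\emph{Step 1 (Linearization and removing the first-stage).} Write
\[
\hat{F}_{1|X_{1},n}(y|x_{1})-F_{1|X_{1}}(y|x_{1})
=\frac{1}{\hat{f}_{X_{1}}(x_{1})}\cdot\frac{1}{nh_{1n}^{d_{1}}}\sum_{i}\left(\frac{D_{i}\mathbf{1}(Y_{i}\le y)}{p(X_{i};\hat\theta_{n})}-F_{1|X_{1}}(y|x_{1})\right)K_{1}\!\left(\tfrac{X_{1i}-x_{1}}{h_{1n}}\right),
\]
where $\hat f_{X_{1}}(x_{1})=\frac{1}{nh_{1n}^{d_{1}}}\sum_{i}K_{1}(\tfrac{X_{1i}-x_{1}}{h_{1n}})\overset{p}{\to}f_{X_{1}}(x_{1})$. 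Using a mean-value expansion of $1/p(X_{i};\hat\theta_{n})$ around $\theta_{0}$ together with Assumption~\ref{assu:propensity_para} (which gives $\hat\theta_{n}-\theta_{0}=O_{p}(n^{-1/2})$) and the smoothness of $p(\cdot;\cdot)$, the term involving $\hat\theta_{n}-\theta_{0}$ is $O_{p}(n^{-1/2})=o_{p}((nh_{1n}^{d_{1}})^{-1/2})$ since $nh_{1n}^{d_{1}}\to\infty$. Hence the first-stage estimation is asymptotically negligible at the relevant rate, and we may replace $p(X_{i};\hat\theta_{n})$ by $p_{0}(X_{i})$. Then decompose the remaining sum into a ``martingale-type'' part $\{D_{i}\mathbf{1}(Y_{i}\le y)/p_{0}(X_{i})-\mathbb{E}[D\mathbf{1}(Y\le y)/p_{0}(X)\mid X_{1}=X_{1i}]\}$ and a smoothing-bias part; by Assumptions~\ref{assu:fx1}, \ref{assu:x1_smooth}, \ref{assu:kernel1}, \ref{assu:bandwidth1} and the law of iterated expectations (using \eqref{eq:id_cond_dist-1}), the bias part is $O(h_{1n}^{2})=o((nh_{1n}^{d_{1}})^{-1/2})$ because $nh_{1n}^{d_{1}+4}\to0$.

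\emph{Step 2 (FCLT for the leading term).} Set
\[
f_{ni}(\omega,y;x_{1})\equiv\frac{1}{\sqrt{nh_{1n}^{d_{1}}}\,f_{X_{1}}(x_{1})}\left(\frac{D_{i}\mathbf{1}(Y_{i}\le y)}{p_{0}(X_{i})}-\mathbb{E}\!\left[\tfrac{D\mathbf{1}(Y\le y)}{p_{0}(X)}\,\Big|\,X_{1}=X_{1i}\right]\right)K_{1}\!\left(\tfrac{X_{1i}-x_{1}}{h_{1n}}\right),
\]
and verify conditions (i)--(v) of \citet[Theorem~10.6]{pollard1990empirical}. Manageability (condition (i)) follows exactly as in the proof of Theorem~\ref{thm:asymp_marginals}: the class $\{(-\infty,y]:y\in\mathcal{Y}\}$ is VC with index $2$, $F_{1|X}(\cdot|\cdot)$ and $p_{0}(\cdot)$ are fixed bounded functions, $K_{1}$ has compact support, so the relevant function class satisfies Pollard's entropy condition via Lemmas~9.9--9.10 in \citet{kosorok2008} and the argument of \citet{andrews1994empirical}. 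For the covariance (condition (ii)), a change of variables and a second-order Taylor expansion in $h_{1n}$ give
\[
\lim_{n\to\infty}\mathbb{E}\!\left[\mathcal{Z}_{n}(\omega,s;x_{1})\mathcal{Z}_{n}(\omega,t;x_{1})\right]
=\frac{\|K_{1}\|_{2}^{2}}{f_{X_{1}}(x_{1})}\left\{G_{1|X_{1}}(\min(s,t)|x_{1})-F_{1|X_{1}}(s|x_{1})F_{1|X_{1}}(t|x_{1})\right\}\equiv\tilde H_{1}(s,t),
\]
where the cross term collapses because $D_{i}^{2}=D_{i}$ and $\mathbf{1}(Y\le s)\mathbf{1}(Y\le t)=\mathbf{1}(Y\le\min(s,t))$, and $G_{1|X_{1}}(y|x_{1})=\mathbb{E}[F_{1|X}(y|X)/p_{0}(X)\mid X_{1}=x_{1}]$ appears because $\mathbb{E}[D\mathbf{1}(Y\le y)/p_{0}(X)^{2}\mid X_{1}=x_{1}]=\mathbb{E}[F_{1|X}(y|X)/p_{0}(X)\mid X_{1}=x_{1}]$ by iterated expectations and Assumption~\ref{assu:unconfounded}. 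Conditions (iii)--(v) (envelope square-integrability, uniform asymptotic negligibility, and $L_{2}$-continuity of the limiting covariance semimetric) are checked verbatim as in Theorem~\ref{thm:asymp_marginals}, using Assumption~\ref{assu:bound_propensity} to bound $1/p_{0}$ away from infinity and the continuity of the conditional distributions. This yields $\sqrt{nh_{1n}^{d_{1}}}(\hat F_{1|X_{1},n}(\cdot|x_{1})-F_{1|X_{1}}(\cdot|x_{1}))\Rightarrow\tilde{\mathbb{G}}_{1}(\cdot)$ in $l^{\infty}(\mathcal{Y})$, and the same for the $d=0$ component with kernel $\tilde H_{0}$; joint weak convergence of the pair follows since each coordinate converges marginally in $l^{\infty}(\mathcal{Y})$ (cf. \citet[p.270]{van1998asymptotic}).

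\emph{Main obstacle.} The step requiring the most care is \textbf{Step~1}: rigorously showing that the parametric first-stage estimation error is $o_{p}((nh_{1n}^{d_{1}})^{-1/2})$ uniformly in $y\in\mathcal{Y}$. Unlike a fully nonparametric first stage, here $\hat\theta_{n}-\theta_{0}=O_{p}(n^{-1/2})$ is faster than $(nh_{1n}^{d_{1}})^{-1/2}$, so the contribution is genuinely negligible --- but one must control it uniformly over $y$, which requires the uniform rate in Assumption~\ref{assu:propensity_para} together with the smoothness of $p(x;\theta)$ in $\theta$ and the compactness of $\mathcal{X}$, and then show the resulting perturbation of the kernel-weighted sum does not blow up after dividing by $h_{1n}^{d_{1}}$. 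Once this negligibility is established, everything else is a direct repetition of the machinery already used for Theorem~\ref{thm:asymp_marginals}. The bootstrap validity statements in Theorem~\ref{thm:bootstrap-1} then follow from Lemma~\ref{lem:weak_convergence_abrevaya} exactly as Theorem~\ref{thm:bootstrap} follows from Theorem~\ref{thm:asymp_marginals} (verifying Assumptions A1--A3 of \citet{firpo2021uniform}, with the analogue of Lemma~\ref{lem:uniform_weak} for uniform-in-$P$ weak convergence obtained from Theorem~2.8.3 of \citet{vvw1996}).
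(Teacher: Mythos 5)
Your proposal is correct and follows essentially the same route as the paper: split off the smoothing bias, remove the first-stage propensity-score estimation error via a Taylor/mean-value expansion (the paper expands $\Psi^1(d,y,t,p)=d\mathbf{1}(y\le t)/p$ in $p$ around $p_0(X_i)$ rather than in $\theta$, which lines up more directly with Assumption \ref{assu:propensity_para} being stated as a sup-norm rate on $p(\cdot;\hat\theta_n)-p(\cdot;\theta_0)$ itself, not on $\hat\theta_n-\theta_0$), and then verify conditions (i)--(v) of Pollard's Theorem 10.6 for the leading term, obtaining the covariance kernel with $G_{j|X_1}$ exactly as you describe. The uniform-in-$y$ negligibility of the first-stage term that you flag as the main obstacle is handled in the paper precisely as you suggest, by bounding $\sup_x\sqrt{nh_{1n}^{d_1}}\,|\hat p_n(x)-p_0(x)|=o_p(1)$ against a kernel-weighted average of the bounded derivative $\Psi^1_p$.
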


\begin{proof} Pick any $x_{1}\in int(\mathcal{X}_{1})$ and define
\begin{align*}
\Psi^{1}(d,y,t,p) & \equiv\frac{d\cdot\mathbf{1}(y\leq t)}{p},\\
\Psi^{0}(d,y,t,p) & \equiv\frac{(1-d)\cdot\mathbf{1}(y\leq t)}{1-p}.
\end{align*}
 We also denote the first-order partial derivative of $\Psi^{j}(d,y,t,p)$
with respect to $p$ by $\Psi_{p}^{j}(d,y,t,p)$ for given $j\in\{0,1\}$.
For simplicity of notation, let $p_{0}(x)\equiv p(x;\theta_{0})$
and $\hat{p}_{n}(x)\equiv p(x;\hat{\theta}_{n})$. Under Assumptions
\ref{assu:x1_smooth}, \ref{assu:kernel1}, and \ref{assu:bandwidth1},
we have $\frac{1}{nh_{1n}^{d_{1}}}\sum_{i}K_{1}(\frac{X_{1i}-x_{1}}{h_{1n}})=f_{X_{1}}(x_{1})+o_{p}(1)$.
Hence, 
\begin{align*}
 & \sqrt{nh_{n}^{d_{1}}}\left(\hat{F}_{1|X_{1},n}(y|x_{1})-F_{1|X_{1}}(y|x_{1})\right)\\
 & =\frac{1}{\sqrt{nh_{1n}^{d_{1}}}}\sum_{i}^{n}\left\{ \frac{D_{i}\cdot\mathbf{1}(Y_{i}\leq y)}{\hat{p}_{n}(X_{i})}-F_{1|X_{1}}(y|x_{1})\right\} \cdot K_{1}(\frac{X_{1i}-x_{1}}{h_{1n}})\Big/\frac{1}{nh_{1n}^{d_{1}}}\sum_{i}^{n}K_{1}(\frac{X_{1i}-x_{1}}{h_{1n}}),\\
 & =\frac{1}{\sqrt{nh_{1n}^{d_{1}}}}\sum_{i}^{n}\left\{ \Psi^{1}(D_{i},Y_{i},y,\hat{p}_{n}(X_{i}))-F_{1|X_{1}}(y|x_{1})\right\} K_{1}(\frac{X_{1i}-x_{1}}{h_{1n}})\Big/f_{X_{1}}(x_{1})+o_{p}(1)\\
 & =\frac{1}{\sqrt{nh_{1n}^{d_{1}}}}\sum_{i}^{n}\left\{ \Psi^{1}(D_{i},Y_{i},y,\hat{p}_{n}(X_{i}))-F_{1|X_{1}}(y|X_{1i})\right\} K_{1}(\frac{X_{1i}-x_{1}}{h_{1n}})\Big/f_{X_{1}}(x_{1})\\
 & \ +\frac{1}{\sqrt{nh_{1n}^{d_{1}}}}\sum_{i}^{n}\left\{ F_{1|X_{1}}(y|X_{1i})-F_{1|X_{1}}(y|x_{1})\right\} K_{1}(\frac{X_{1i}-x_{1}}{h_{1n}})\Big/f_{X_{1}}(x_{1})+o_{p}(1).
\end{align*}
 The latter term is $o_{p}(1)$, because, by the standard argument
in the literature on kernel estimation, one can show that 
\begin{align*}
 & \mathbb{E}\left[\frac{1}{\sqrt{nh_{1n}^{d_{1}}}}\sum_{i}^{n}\left\{ F_{1|X_{1}}(y|X_{1i})-F_{1|X_{1}}(y|x_{1})\right\} K_{1}(\frac{X_{1i}-x_{1}}{h_{1n}})\right]\\
= & \sqrt{nh_{1n}^{d_{1}}}\cdot\int\{F_{1|X}(y|x_{1}+uh_{1n})-F_{1|X}(y|x_{1})\}K_{1}(u)\cdot f_{X_{1}}(x_{1}+uh_{1n})du\\
= & O\left(\sqrt{nh_{1n}^{d_{1}+4}}\right)=o(1).
\end{align*}
 Now we consider the first term. By a Taylor approximation of $\Psi^{1}(d,y,t,p)$
around at $p(X_{i};\theta_{0})$, we have 
\begin{align}
 & \frac{1}{\sqrt{nh_{1n}^{d_{1}}}}\sum_{i}^{n}\left\{ \Psi^{1}(D_{i},Y_{i},y,\hat{p}_{n}(X_{i}))-F_{1|X_{1}}(y|X_{1i})\right\} K_{1}(\frac{X_{1i}-x_{1}}{h_{1n}})\nonumber \\
 & =\frac{1}{\sqrt{nh_{1n}^{d_{1}}}}\sum_{i}^{n}\left\{ \Psi^{1}(D_{i},Y_{i},y,p_{0}(X_{i}))-F_{1|X_{1}}(y|X_{1i})\right\} K_{1}(\frac{X_{1i}-x_{1}}{h_{1n}})\label{eq:F1_Limit-1}\\
 & +\frac{1}{\sqrt{nh_{1n}^{d_{1}}}}\sum_{i}^{n}\left\{ \Psi_{p}^{1}(D_{i},Y_{i},y,\tilde{p}(X_{i}))\cdot(\hat{p}_{n}(X_{i})-p_{0}(X_{i}))\right\} K_{1}(\frac{X_{1i}-x_{1}}{h_{1n}})\nonumber 
\end{align}
 where $\tilde{p}(x)\equiv p(x;\tilde{\theta}_{n})$ and $\tilde{\theta}_{n}$
lies between $\hat{\theta}_{n}$ and $\theta_{0}$. Note that the
second term in (\ref{eq:F1_Limit-1}) is $o_{p}(1)$ uniformly in
$y$ under Assumption \ref{assu:propensity_para} because 
\begin{align*}
 & \Bigg|\frac{1}{\sqrt{nh_{1n}^{d_{1}}}}\sum_{i}^{n}\left\{ \Psi_{p}^{1}(D_{i},Y_{i},y,\tilde{p}(X_{i}))\cdot(\hat{p}_{n}(X_{i})-p_{0}(X_{i}))\right\} K_{1}(\frac{X_{1i}-x_{1}}{h_{1n}})\Bigg|\\
 & \leq\sup_{x\in\mathcal{X}}\sqrt{nh_{1n}^{d_{1}}}(\hat{p}_{n}(x)-p_{0}(x))\cdot\frac{1}{nh_{1n}^{d_{1}}}\sum_{i}\Big|\Psi_{p}^{1}(D_{i},Y_{i},y,\tilde{p}(X_{i}))\cdot K_{1}(\frac{X_{1i}-x_{1}}{h_{1n}})\Big|\\
 & =o_{p}(1)\cdot O_{p}(1)=o_{p}(1).
\end{align*}
 We now establish the limiting process (with respect to $y$) of the
first term (equation (\ref{eq:F1_Limit-1})). To this end, we verify
the conditions of the functional central limit theorem in \citet[p.53, Theorem 10.6]{pollard1990empirical}. 

For given $\omega\in\Omega$, let $f_{ni}(\omega,y)\equiv\frac{1}{f_{X_{1}}(x_{1})\sqrt{nh_{1n}^{d_{1}}}}\left\{ \Psi^{1}(D_{i},Y_{i},y,p_{0}(X_{i}))-F_{1|X_{1}}(y|X_{1i})\right\} \cdot K_{1}(\frac{X_{1i}-x_{1}}{h_{1n}})$
and $\mathcal{F}_{n}^{1}\equiv\{f_{ni}(\omega,y):y\in\mathbb{R}\}$.
By using the same argument in the proof of Theorem \ref{thm:asymp_marginals},
we can show that $\mathcal{F}_{n}^{1}$ satisfies Pollard's entropy
condition. Since Pollard's entropy condition implies manageability
(see \citet[p.2284]{andrews1994empirical}), condition (i) of Theorem
10.6 in \citet{pollard1990empirical} is satisfied. 

Let $\mathbb{Z}_{n}(\omega,y)\equiv\sum_{i}^{n}f_{ni}(\omega,y)$
and $y_{1},y_{2}\in\mathbb{R}$ be given. By the law of iterated expectations,
we have 
\begin{align*}
 & \mathbb{E}[\mathbb{Z}_{n}(\omega,y_{1})\cdot\mathbb{Z}_{n}(\omega,y_{2})]\\
= & \mathbb{E}\left[\left\{ \Psi^{1}(D_{i},Y_{i},y_{1},p_{0}(X_{i}))-F_{1|X_{1}}(y_{1}|X_{1i})\right\} \cdot\left\{ \Psi^{1}(D_{i},Y_{i},y_{2},p_{0}(X_{i}))-F_{1|X_{1}}(y_{2}|X_{1i})\right\} \cdot K_{1}^{2}(\frac{X_{1i}-x_{1}}{h_{1n}})\right]\\
 & \times\frac{1}{f_{X_{1}}(x_{1})^{2}h_{1n}^{d_{1}}}\\
= & \mathbb{E}\left[K_{1}^{2}(\frac{X_{1i}-x_{1}}{h_{1n}})\cdot\frac{1}{f_{X_{1}}(x_{1})^{2}h_{1n}^{d_{1}}}\left\{ G_{1|X_{1}}(\min(y_{1},y_{2})|X_{1i})-F_{1|X_{1}}(y_{1}|X_{1i})F_{1|X_{1}}(y_{2}|X_{1i})\right\} \right].
\end{align*}
 Under Assumptions \ref{assu:fx1}, \ref{assu:x1_smooth}, and \ref{assu:bandwidth1},
it follows that 
\[
\mathbb{E}[\mathbb{Z}_{n}(\omega,y_{1})\cdot\mathbb{Z}_{n}(\omega,y_{2})]=\left\{ G_{1|X_{1}}(\min(y_{1},y_{2})|x_{1})-F_{1|X_{1}}(y_{1}|x_{1})F_{1|X_{1}}(y_{2}|x_{1})+O\left(h_{1n}^{d_{1}}\right)\right\} \frac{||K_{1}||_{2}^{2}}{f_{X_{1}}(x_{1})}
\]
 by using the standard arguments for kernel estimators. Therefore,
\begin{align*}
\tilde{H}_{1}(y_{1},y_{2}) & \equiv\lim_{n\rightarrow\infty}\mathbb{E}[\mathbb{Z}_{n}(\omega,y_{1})\cdot\mathbb{Z}_{n}(\omega,y_{2})]\\
 & =\left\{ G_{1|X_{1}}(\min(y_{1},y_{2})|x_{1})-F_{1|X_{1}}(y_{1}|x_{1})F_{1|X_{1}}(y_{2}|x_{1})\right\} \frac{||K_{1}||_{2}^{2}}{f_{X_{1}}(x_{1})}
\end{align*}
 is well-defined, and hence condition (ii) of Theorem 10.6 in \citet{pollard1990empirical}
is satisfied. 

Let $F_{ni}\equiv\frac{1}{\underline{f_{X_{1}}}\sqrt{nh_{1n}^{d_{1}}}}K_{1}(\frac{X_{1i}-x_{1}}{h_{1n}})$
be an envelope function, where $\underline{f_{X_{1}}}\equiv\inf_{x_{1}\in\mathcal{X}_{1}}f_{X_{1}}(x_{1})>0$.
Since the kernel function is uniformly bounded and symmetric around
zero, we have, for any $n\in\mathbb{N}$,
\begin{align*}
\lim_{n}\sum_{i}^{n}\mathbb{E}[F_{ni}^{2}] & =\lim_{n}\frac{1}{\underline{f_{X_{1}}}}\frac{1}{h_{1n}^{d_{1}}}\int K_{1}^{2}(\frac{t-x_{1}}{h_{1n}})\cdot f_{X_{1}}(t)dt\\
 & \lesssim\lim_{n}\frac{1}{h_{1n}^{d_{1}}}\int K_{1}(\frac{t-x_{1}}{h_{1n}})\cdot f_{X_{1}}(t)dt=\lim_{n}\left\{ f_{X_{1}}(x_{1})+O\left(h_{1n}^{2d_{1}}\right)\right\} =f_{X_{1}}(x_{1})<\infty
\end{align*}
 under Assumptions \ref{assu:kernel1} and \ref{assu:bandwidth1}.
Therefore, condition (iii) of Theorem 10.6 in \citet{pollard1990empirical}
is satisfied. Similarly, we can show that for any $\eta>0$, 
\begin{align*}
\sum_{i}^{n}\mathbb{E}\left[F_{ni}^{2+\eta}\right] & \lesssim\sum_{i}^{n}\frac{1}{\left(nh_{1n}^{d_{1}}\right)^{1+\eta/2}}\int K_{1}^{2+\eta}(\frac{t-x_{1}}{h_{1n}})f_{X_{1}}(t)dt\\
 & \lesssim\frac{1}{(nh_{1n})^{\eta/2}}\cdot\int u^{2}K_{1}(u)du=o(1)\cdot O(1)=o(1).
\end{align*}
 Hence, for any $\epsilon>0$, 
\[
\sum_{i}\mathbb{E}\left[F_{ni}^{2}\mathbf{1}(F_{ni}>\epsilon)\right]=\sum_{i}\mathbb{E}\left[F_{ni}^{2+\eta}F_{ni}^{-\eta}\mathbf{1}(F_{ni}>\epsilon)\right]\leq\frac{1}{\epsilon^{\eta}}\sum_{i}\mathbb{E}\left[F_{ni}^{2+\eta}\right]\lesssim\frac{1}{\epsilon^{\eta}}\cdot o(1)=o(1),
\]
 which implies condition (iv) of Theorem 10.6 in \citet{pollard1990empirical}
is met. 

For $y_{1},y_{2}\in\mathbb{R}$, define $\rho_{n}(y_{1},y_{2})\equiv\left(\sum_{i}^{n}\mathbb{E}[|f_{ni}(\cdot,y_{1})-f_{ni}(\cdot,y_{2})|^{2}]\right)^{1/2}$
and $\rho(y_{1},y_{2})\equiv\lim_{n}\rho_{n}(y_{1},y_{2})$. Then,
\begin{align*}
 & \rho_{n}(y_{1},y_{2})^{2}\\
= & \Big\{(G_{1|X_{1}}(y_{1}|x_{1})-F_{1|X_{1}}(y_{1}|x_{1})^{2})+(G_{1|X_{1}}(y_{2}|x_{1})-F_{1|X_{1}}(y_{2}|x_{1})^{2})\\
 & +G_{1|X_{1}}(\min(y_{1},y_{2})|x_{1})-F_{1|X_{1}}(y_{1}|x_{1})F_{1|X_{1}}(y_{2}|x_{1})+o(1)\Big\}\cdot\frac{||K_{1}||_{2}^{2}}{f_{X_{1}}(x_{1})}.
\end{align*}
 Therefore, $\rho(y_{1},y_{2})$ is well-defined for any $y_{1},y_{2}\in\mathbb{R}$.
Since the components in $\rho(y_{1},y_{2})$ are continuous in $(y_{1},y_{2})$,
this leads to condition (v) of Theorem 10.6 in \citet{pollard1990empirical}.

In all, by Theorem 10.6 in \citet{pollard1990empirical}, we have
\[
\frac{1}{f_{X_{1}}(x_{1})\sqrt{nh_{1n}^{d_{1}}}}\sum_{i}^{n}\left\{ \Psi^{1}(D_{i},Y_{i},\cdot,p_{0}(X_{i}))-F_{1|X_{1}}(\cdot|x_{1})\right\} \cdot K_{1}(\frac{X_{1i}-x_{1}}{h_{1n}})\Rightarrow\tilde{\mathbb{G}}_{1}(\cdot)\ \text{in }l^{\infty}(\mathcal{Y}),
\]
 where $\tilde{\mathbb{G}}_{1}(\cdot)$ is a mean zero Gaussian process
with covariance kernel $\tilde{H}_{1}(y_{1},y_{2})$. One can prove
the weak convergence of $\sqrt{nh_{1n}^{d_{1}}}\left(\hat{F}_{0|X_{1},n}(\cdot|x_{1})-F_{0|X_{1}}(\cdot|x_{1})\right)$
by a similar way, and the covariance kernel is 
\[
\tilde{H}_{0}(y_{1},y_{2})\equiv\left\{ G_{0|X_{1}}(\min(y_{1},y_{2})|x_{1})-F_{0|X_{1}}(y_{1}|x_{1})F_{0|X_{1}}(y_{2}|x_{1})\right\} \frac{||K_{1}||_{2}^{2}}{f_{X_{1}}(x_{1})}.
\]
Since the Cartesian product of two Donsker classes is Donsker by \citet[p.270]{van1998asymptotic},
we obtain the result in the theorem. \end{proof}

\begin{lemma}\label{lem:bootstrap_abrevaya} Let $x_{1}\in int(\mathcal{X}_{1})$
be given. Suppose that the conditions in Theorem \ref{thm:bootstrap-1}
hold. Then, 
\begin{align*}
\sqrt{nh_{1n}^{d_{1}}}F_{1|X_{1},n}^{*}(\cdot|x_{1}) & \Rightarrow\tilde{\mathbb{G}}_{1}(\cdot),\\
\sqrt{nh_{1n}^{d_{1}}}F_{0|X_{1},n}^{*}(\cdot|x_{1}) & \Rightarrow\tilde{\mathbb{G}}_{0}(\cdot),
\end{align*}
 conditional on data in $l^{\infty}(\mathcal{Y})$. \end{lemma}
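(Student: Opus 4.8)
The plan is to follow the route of the proof of Lemma~\ref{lem:Asymptotic_Marginal_Bootstrap}, now for the semiparametric kernel estimators of Appendix~\ref{sec:semipara}. It suffices to prove $\sqrt{nh_{1n}^{d_{1}}}F_{1|X_{1},n}^{*}(\cdot|x_{1})\Rightarrow\tilde{\mathbb{G}}_{1}(\cdot)$ conditional on the data in $l^{\infty}(\mathcal{Y})$; the statement for $F_{0|X_{1},n}^{*}$ follows by the symmetric argument, and the two then hold jointly because the Cartesian product of two conditionally Donsker classes is conditionally Donsker (cf.\ \citet[p.270]{van1998asymptotic}). Writing $\hat{f}_{X_{1}}(x_{1})\equiv\frac{1}{nh_{1n}^{d_{1}}}\sum_{i}K_{1}(\tfrac{X_{1i}-x_{1}}{h_{1n}})$, one has
\[
\sqrt{nh_{1n}^{d_{1}}}\,F_{1|X_{1},n}^{*}(y|x_{1})=\frac{1}{\hat{f}_{X_{1}}(x_{1})}\cdot\frac{1}{\sqrt{nh_{1n}^{d_{1}}}}\sum_{i}B_{1i}\,\frac{D_{i}\big(\mathbf{1}(Y_{i}\le y)-\hat{F}_{1|X_{1},n}(y|x_{1})\big)}{\hat{p}_{n}(X_{i})}\,K_{1}\!\big(\tfrac{X_{1i}-x_{1}}{h_{1n}}\big),
\]
and the goal is to show that this multiplier-weighted kernel sum has the same conditional limit as the multiplier analogue of the linearization established in the proof of Lemma~\ref{lem:weak_convergence_abrevaya}.

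First I would peel off the estimation errors. Replacing $\hat{p}_{n}(X_{i})$ by $p_{0}(X_{i})$ via a mean-value expansion of $\Psi^{1}(d,y,t,\cdot)$ in its propensity-score argument, the induced remainder is bounded, conditional on the data, by $\sup_{x\in\mathcal{X}}|\hat{p}_{n}(x)-p_{0}(x)|$ times a quantity whose conditional second moment is $O_{p}(1)$; by Assumption~\ref{assu:propensity_para} and $h_{1n}\to0$ this is $o_{p}(1)$ uniformly in $y$ (the $\sqrt{n}$-rate of $\hat{p}_{n}$ dominates the nonparametric rate $\sqrt{nh_{1n}^{d_{1}}}$), so no first-stage correction term appears. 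Replacing $\hat{F}_{1|X_{1},n}(y|x_{1})$ by $F_{1|X_{1}}(y|x_{1})$ contributes a term that factorizes as $\big(F_{1|X_{1}}(y|x_{1})-\hat{F}_{1|X_{1},n}(y|x_{1})\big)$ times a $y$-free multiplier sum of order $O_{p}(1)$ conditional on the data; since $\sup_{y}|\hat{F}_{1|X_{1},n}(y|x_{1})-F_{1|X_{1}}(y|x_{1})|\overset{p}{\rightarrow}0$ by Lemma~\ref{lem:weak_convergence_abrevaya}, this is $o_{p}(1)$ uniformly in $y$ conditional on the data (equivalently, one verifies conditions (i)--(v) of \citet[Theorem~10.6]{pollard1990empirical} with zero limiting covariance and metric, as in Lemma~\ref{lem:Asymptotic_Marginal_Bootstrap}, using $\hat{f}_{X_{1}}(x_{1})\overset{p}{\rightarrow}f_{X_{1}}(x_{1})>0$, $\hat{p}_{n}(x)\overset{p}{\rightarrow}p_{0}(x)$ and $\frac{1}{nh_{1n}^{d_{1}}}\sum_{i}D_{i}K_{1}^{2}(\tfrac{X_{1i}-x_{1}}{h_{1n}})=O_{p}(1)$).

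For the leading (multiplier-weighted linearized) term I would apply the conditional multiplier functional central limit theorem, e.g.\ \citet[Theorem~10.4]{kosorok2008}, together with Lemma~\ref{lem:weak_convergence_abrevaya}: the relevant class of functions was already shown there to satisfy Pollard's entropy condition with a square-integrable envelope, and the multiplier weights satisfy $\mathbb{E}[B_{1}]=0$, $Var(B_{1})=1$ and $\int_{0}^{\infty}\sqrt{\Pr(|B_{1}|>x)}\,dx<\infty$ by Assumption~\ref{assu:boot_weight1}; alternatively one re-verifies conditions (i)--(v) of \citet[Theorem~10.6]{pollard1990empirical} with $B_{1i}$ inserted, the conditional covariance kernel reproducing $\tilde{H}_{1}$ by independence of $B_{1}$ and the data together with $Var(B_{1})=1$. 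Hence the leading term converges conditional on the data to $\tilde{\mathbb{G}}_{1}(\cdot)$ in $l^{\infty}(\mathcal{Y})$; combining the three pieces and repeating the argument for $F_{0|X_{1},n}^{*}$ (which yields the kernel $\tilde{H}_{0}$) gives the claim.

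The step I expect to be the main obstacle is the uniform-in-$y$, conditional-on-data control of the remainders produced by plugging in $\hat{F}_{1|X_{1},n}$ and $\hat{p}_{n}$: one must argue that these vanish even though the multiplier weights $B_{1i}$ are present, which is exactly what the factorization and the Pollard-type verification above accomplish and is the technical core of the proof. By contrast, the parametric first stage is comparatively harmless precisely because $\hat{p}_{n}$ converges at the $\sqrt{n}$-rate, which is faster than the nonparametric rate $\sqrt{nh_{1n}^{d_{1}}}$.
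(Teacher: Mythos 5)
Your proposal is correct and follows essentially the same route the paper takes: the paper omits the proof and simply points to the argument of Lemma \ref{lem:Asymptotic_Marginal_Bootstrap} combined with Lemma \ref{lem:weak_convergence_abrevaya}, which is exactly the decomposition you carry out (peel off the centering term $\hat{F}_{1|X_{1},n}-F_{1|X_{1}}$ via a degenerate Pollard-type verification, handle the first-stage $\hat{p}_{n}$ remainder using the $\sqrt{n}$-rate from Assumption \ref{assu:propensity_para} as in the proof of Lemma \ref{lem:weak_convergence_abrevaya}, and apply the conditional multiplier CLT of \citet[Theorem 10.4]{kosorok2008} to the leading term). Your write-up is in fact more explicit than the paper's, but there is no substantive difference in approach.
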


\begin{proof} One can use the similar arguments of the proof of Lemma
\ref{lem:Asymptotic_Marginal_Bootstrap}, together with Lemma \ref{lem:weak_convergence_abrevaya},
and thus we omit the proof. \end{proof} 

\paragraph{Proof of Theorem \ref{thm:bootstrap-1} }

\begin{proof} Note that the functionals $\phi_{L}$ and $\phi_{U}$
are Hadamard directionally differentiable. Therefore, the weak convergence
result follows from, together with Lemma \ref{lem:weak_convergence_abrevaya},
the same argument for the proof of Theorem \ref{thm:Asymptotic_Dist of Derivative}.
The bootstrap validity can be shown by using the same argument of
the proof of Theorem \ref{thm:bootstrap} with letting $r_{n}=\sqrt{nh_{1n}^{d_{1}}}$,
the weak convergence result, and Lemma \ref{lem:bootstrap_abrevaya}.
\end{proof}

\subsection{Proof of Theorem \ref{thm:test_equality} }

\begin{proof} Under $H_{0}$, $\theta_{L,p,e}=0$. Therefore, extending
part 3 of Corollary 4.2 in \citet{firpo2021uniform} and using the
chain rule establishes the result. The proof of the validity of the
bootstrap is almost identical to the proof of Theorem \ref{thm:bootstrap}.
\end{proof}

\section{\label{sec:Figures} Figures}

\begin{figure}[H]
\begin{centering}
\caption{\label{fig:endo_treat} Heterogeneity in the treatment effects across
income groups}
\par\end{centering}
\bigskip{}

\begin{centering}
\begin{tabular}{c||c||c||c}
\multicolumn{4}{c}{\includegraphics[scale=0.65]{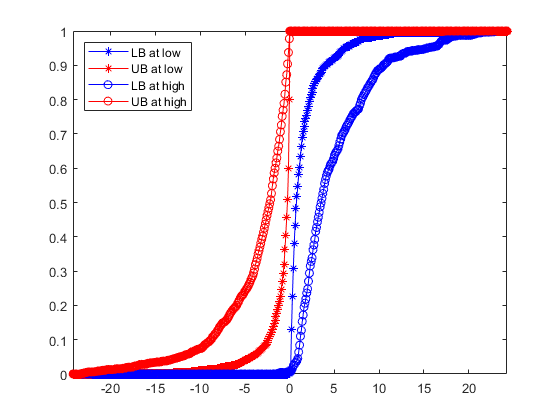}}\tabularnewline
\end{tabular}
\par\end{centering}
\bigskip{}

Note: The blue and red lines are estimated lower and upper bounds
on the conditional distributions of treatment effects, respectively.
The star-marked lines are the bounds on the conditional distribution
of treatment effects given the 0.2 quantile of the income. The circle-marked
lines are the bounds on the conditional distribution of treatment
effects given the 0.8 quantile of income. All of these bounds are
obtained under Assumptions \ref{assu:fsd1} and \ref{assu:fsd2}.
When the assumptions are not imposed, the resulting bounds are the
logical ones. 
\end{figure}

\begin{figure}[H]
\caption{\label{fig:comparison_all_quantiles} The effect of 401k on net financial
assets - Comparison of bounds across different income groups}

\bigskip{}

\begin{centering}
\begin{tabular}{ccc}
\includegraphics[scale=0.35]{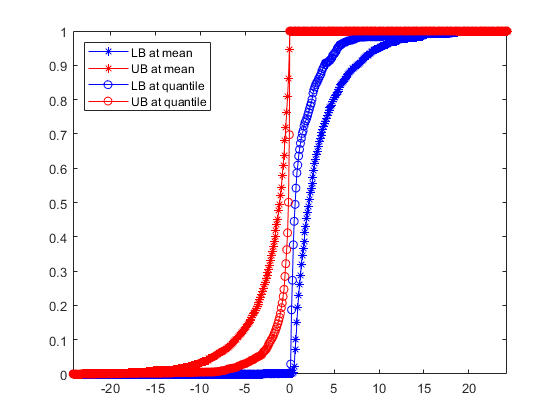} & \includegraphics[scale=0.35]{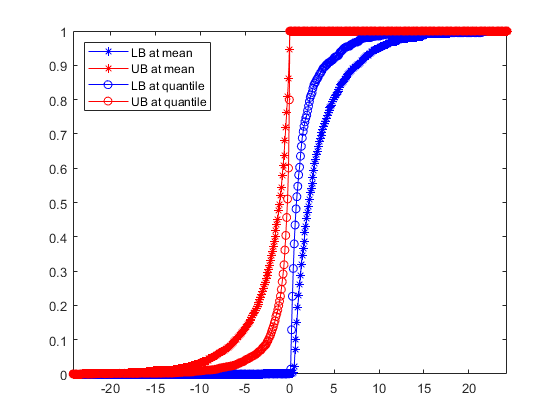} & \includegraphics[scale=0.35]{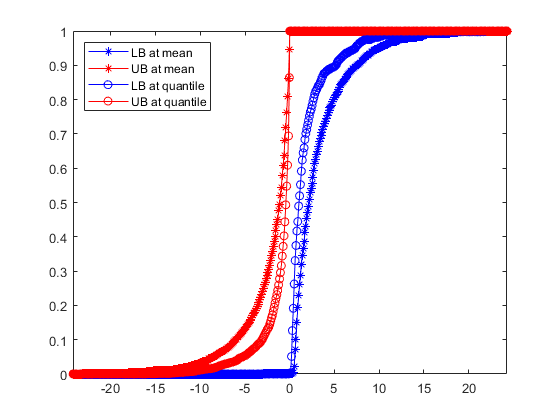}\tabularnewline
0.1 quantile of income  & 0.2 quantile of income  & 0.3 quantile of income \tabularnewline
\includegraphics[scale=0.35]{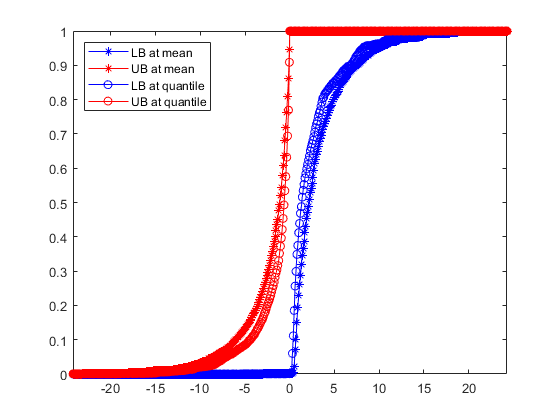} & \includegraphics[scale=0.35]{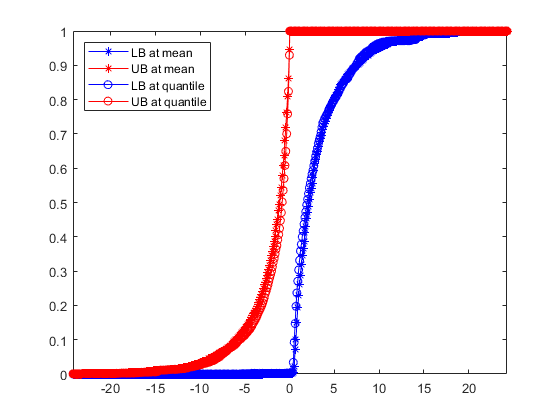} & \includegraphics[scale=0.35]{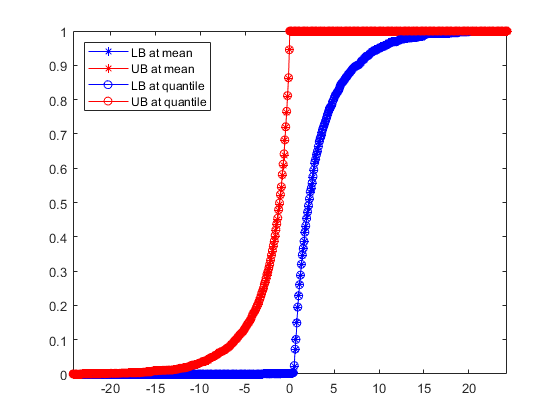}\tabularnewline
0.4 quantile of income  & 0.5 quantile of income  & 0.6 quantile of income \tabularnewline
\includegraphics[scale=0.35]{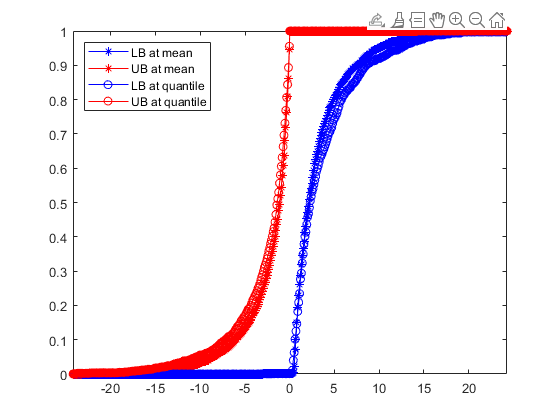} & \includegraphics[scale=0.35]{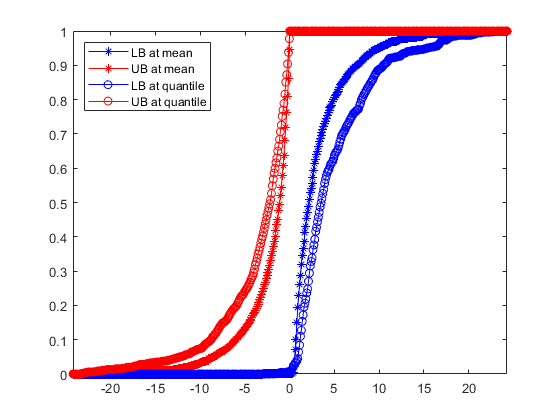} & \includegraphics[scale=0.35]{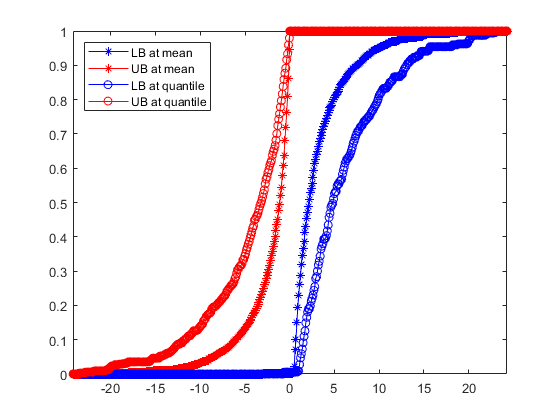}\tabularnewline
0.7 quantile of income  & 0.8 quantile of income  & 0.9 quantile of income \tabularnewline
\end{tabular}
\par\end{centering}
\bigskip{}

Note: Estimated bounds at various quantile levels and at the mean
of income are reported. These bounds are obtained under Assumptions
\ref{assu:fsd1} and \ref{assu:fsd2}. When the assumptions are not
imposed, the resulting bounds are the logical ones. 
\end{figure}

\end{document}